\newcommand{\distros}{\mathcal{D}}
\newcommand{\prob}{\textsl{Pr}}
\renewcommand{\paragraph}[1]{\smallskip\noindent\emph{#1}}
\newcommand{\target}{\mathfrak{t}}
\newcommand{\targets}{\mathfrak{T}}
\newcommand{\hit}{\textsl{Hit}}
\newcommand{\hitprob}{\textsl{HitPr}}
\newcommand{\dissum}{\textsl{ExpDisSum}}
\newcommand{\expmeanpayoff}{\textsl{ExpMP}}
\newcommand{\ld}{\delta_{\lim}}
\newcommand{\expv}{\mathbb{E}}
\newcommand{\val}{\textsl{val}}
\newcommand{\w}{\textsl{w}}
\newcommand{\tw}{\textsl{tw}}
\definecolor{cadmiumgreen}{rgb}{0.0, 0.42, 0.24}
\newcommand{\one}{\hat{\textbf{1}}}
\newcommand{\equ}{\mathfrak{e}}
\renewcommand{\paragraph}[1]{\smallskip\noindent\emph{#1}}
\newcolumntype{H}{>{\setbox0=\hbox\bgroup}c<{\egroup}@{}}
\begin{document}
\title{Faster Algorithms for Quantitative Analysis\\of Markov Chains and Markov Decision Processes\\with Small Treewidth}
\titlerunning{Faster Algorithms for MCs and MDPs with Small Treewidth}

\author{Ali Asadi\inst{1} \and Krishnendu Chatterjee\inst{2} \and Amir Kafshdar Goharshady\inst{2} \and\\ Kiarash Mohammadi\inst{3} \and Andreas Pavlogiannis\inst{4}}
\authorrunning{A. Asadi, K. Chatterjee, A.K. Goharshady, K. Mohammadi, A. Pavlogiannis}
\institute{Sharif University of Technology, Tehran, Iran, \email{asadia1376@gmail.com} \and
IST Austria, Klosterneuburg, Austria, \email{\{kchatterjee, goharshady\}@ist.ac.at}
\and Ferdowsi University of Mashhad, Mashhad, Iran, \email{kiarash.km@gmail.com}
\and Aarhus University, Aarhus, Denmark, \email{pavlogiannis@cs.au.dk}}

\maketitle             

\renewcommand{\thefootnote}{\fnsymbol{footnote}}
\setcounter{footnote}{0}

\begin{abstract} 
Discrete-time Markov Chains (MCs) and Markov Decision Processes (MDPs) are two standard formalisms in system analysis.
Their main associated \emph{quantitative} objectives are hitting probabilities, discounted sum, and mean payoff.
Although there are many techniques for computing these objectives in general MCs/MDPs, they have not been thoroughly studied in terms of parameterized algorithms, particularly when treewidth is used as the parameter. 
This is in sharp contrast to \emph{qualitative} objectives for MCs, MDPs and graph games, for which treewidth-based algorithms yield significant complexity improvements.

In this work, we show that treewidth can also be used to obtain faster algorithms for the quantitative problems.
For an MC with $n$ states and $m$ transitions, we show that each of the classical quantitative objectives can be computed in $O((n+m)\cdot t^2)$ time, given a tree decomposition of the MC that has width $t$.
Our results also imply a bound of $O(\kappa\cdot (n+m)\cdot t^2)$ for each objective on MDPs, where $\kappa$ is the number of strategy-iteration refinements required for the given input and objective.
Finally, we make an experimental evaluation of our new algorithms on low-treewidth MCs and MDPs obtained from the DaCapo benchmark suite.
Our experimental results show that on MCs and MDPs with small treewidth, our algorithms outperform existing well-established methods by one or more orders of magnitude.

\end{abstract}

\section{Introduction} \label{sec:intro}

\paragraph{Markov Chains.}
Perhaps the most standard formalism for modeling randomness in discrete-time systems is that of discrete-time Markov Chains (MCs)~\cite{norris1998markov,mcbook}.
MCs have immense applications in verification, and are used to express randomness both in the system itself~\cite{Courcoubetis95,Rutten04} and in the environment that the system interacts with~\cite{Chatterjee10}.
The modeling power of MCs has also led to various extensions, such as parametric~\cite{Daws05,Lanotte07,Hahn09}, interval~\cite{Larsen91,Koushik06,Delahaye11,Benedikt13} and
augmented interval~\cite{Chonev19} MCs.
Besides the theoretical appeal, the analysis of MCs is also a core component in several model checkers~\cite{storm,Hermanns00,Hahn10,Kwiatkowska11}.

\paragraph{Markov Decision Processes.}
When the system exhibits both stochastic and non-deterministic behavior, the standard model of MCs is
lifted to Markov Decision Processes (MDPs)~\cite{bellman1957markovian,feinberg2012handbook}.
For example, MDPs are used to model stochastic controllers, where non-determinism models the freedom of the controller and randomness models the behavior of the system~\cite{Filar96}.
MDPs are also a topic of active study in verification~\cite{Bianco95,Brazdil14,Tappler19,Chatterjee18b,Hahn17}.

\paragraph{Quantitative Analysis.}
Three of the most standard analysis objectives for MCs are the following:
(a)~The \emph{hitting probabilities} objective takes as input a set of target vertices $\targets$ of the MC, 
and asks to compute for each vertex $u$, the probability that a random walk starting from $u$ will eventually hit $\targets$.
The \emph{discounted sum} objective takes as input a discount factor $\lambda\in(0,1)$ and a  reward function $R$ that assigns a reward to each edge of the MC.
The task is to compute for each vertex $u$ the expected reward value of a random walk starting from $u$,
where the value of the walk is the sum of the rewards along its edges, discounted by the factor $\lambda$ at each step.
Finally, the \emph{mean payoff} objective is similar to the discounted sum objective,
except that the value of a walk is the long-run average of the rewards along its edges.
In MDPs, the corresponding analysis questions ask for a strategy that maximizes the respective quantity.

\paragraph{Analysis Algorithms.}
Given the importance of quantitative objectives for MCs and MDPs, there have been various techniques for solving them efficiently.
For MCs, the hitting probabilities and discounted sum objectives reduce to solving a system of linear equations~\cite{mcbook,norris1998markov,bellman1957markovian,Kretinsky17}.
For MDPs, all three objectives reduce to solving a linear program~\cite{norris1998markov,Kretinsky17,papadimitriou1987complexity}.
Besides the LP formulation, two popular approaches for solving quantitative objectives on MDPs are
value iteration~\cite{bellman1957markovian,Bellman10} and strategy iteration~\cite{howard1960dynamic,Littman95,Mansour99,Abate16,puterman2014markov}.
Value iteration is the most commonly used method in verification and operates by computing optimal policies for successive finite horizons. 
However, this process leads only to approximations of the optimal values, and for some objectives no stopping criterion for the optimal strategy is known~\cite{Ashok17}. In cases where such criteria are known (e.g.~\cite{baier2017ensuring,haddad2018interval,quatmann2018sound}), the number of iterations necessary before the numbers can be rounded to provide an optimal solution can be extremely high~\cite{chatterjee2008valueiter}. Nevertheless, value iteration has proved to be very successful in practice and is included in many probabilistic model checkers, such as~\cite{Kwiatkowska11,storm}.
On the other hand, strategy iteration lies on the observation that given a fixed strategy, the MDP reduces to an MC, and hence one can compute the value of each vertex using existing techniques on MCs.
Then, the strategy can be refined to a new strategy that improves the value of each vertex.
The running time of strategy iteration can be written as $O(\kappa \cdot f)$, where $\kappa$ is the number of strategy refinements and $f$ is the time for evaluating the strategy.
As we saw above, $f$ is bounded by the time required to solve a linear system (instead of a linear program).
In addition, $\kappa$ is bounded by the number of possible strategies and thus finite, 
and although $\kappa$ can be exponentially large~\cite{Fearnley10,Hollanders12}, it behaves as a small constant in practice, which makes strategy iteration work well in practice~\cite{puterman2014markov,Kretinsky17}.
Hence, both for MCs and for MDPs using strategy iteration, the performance of the algorithm largely depends on the speed of solving the respective linear system~\cite{Kretinsky17}.

\paragraph{Treewidth.}
A very well-studied notion in graph theory is the concept of {\em treewidth} 
of a graph, which is a measure of how similar a graph is to a tree~\cite{robertson1984graph}. For example, a connected graph has treewidth~1 precisely if it is a tree. 
On one hand the treewidth property provides a mathematically elegant way 
to study graphs, and on the other hand there are many classes of graphs which 
arise in practice and have constant treewidth. 
A prime example is that Control Flow Graphs (CFGs) of \texttt{goto}-free 
programs in many classical programming languages have constant 
treewidth~\cite{thorup1998all}.
The low treewidth of flow graphs has also been confirmed experimentally 
for programs written in Java~\cite{gustedt2002treewidth}, C~\cite{Krause19}, Ada~\cite{Burgstaller04} and Solidity~\cite{chatterjee2019treewidth}.
Treewidth has important algorithmic implications, as many graph problems that are hard to solve in general admit efficient solutions on graphs of low treewidth~\cite{cygan2015parameterized}.
In program analysis, this property has been exploited to develop improvements for
register allocation~\cite{thorup1998all,Bodlaende98},
on-demand algebraic-path analysis~\cite{CIPG15},
on-demand data-flow analysis of concurrent programs~\cite{toplas}
and data-dependence analysis~\cite{Chatterjee18c}.
Treewidth has also been studied in the context of parameterized algorithms for model checking~\cite{Obsrzalek03,Ferrara05}.

\paragraph{Our Contributions.}
The contributions of this work are as follows:

\begin{compactenum}
\item{\em Theoretical Contributions.}
Our main theoretical result is a linear-time algorithm for solving arbitrary systems of linear equations whose primal graph has low treewidth.
Given a linear system $S$ of $m$ equations over $n$ unknowns, and a tree decomposition of the primal graph of $S$ that has width $t$,
our algorithm solves $S$ in time $O((n+m)\cdot t^2)$.
Given an MC $M$ of treewidth $t$ and a corresponding tree decomposition,
our algorithm directly implies similar running times for the hitting probabilities and discounted sum objectives for $M$.
In addition, we develop an algorithm that solves the mean-payoff objective for $M$ in time $O((n+m)\cdot t^2)$.
Our results on MCs also imply upper-bounds for the running time of strategy iteration on low-treewidth MDPs.
Given an MDP $P$ with treewidth $t$ and a quantitative objective,
our results imply that $P$ can be solved in time $O(\kappa \cdot (n+m)\cdot t^2)$, where $\kappa$ is the number of iterations until strategy iteration stabilizes for the respective input and objective.

\item{\em Practical Contributions.}
We develop two practical algorithms for solving the hitting probabilities and discounted sum objectives on low-treewidth MCs.
Although these algorithms have the same worst-case complexity of $O((n+m)\cdot t^2)$ as our general solution, they avoid its most practically time-consuming step, i.e.~applying the Gram-Schmidt process, and replace it with simple changes to the MC.  We report on an implementation of these algorithms and their performance in solving MCs and MDPs with low treewidth.
We perform an extensive comparison of our implementation and previous methods as follows:

\begin{compactenum}
\item \emph{Comparison with classical approaches}: We compare our algorithms for MCs against a heavily-optimized Gaussian elimination. In case of MDPs, we additionally compare with classical value-iteration and strategy-iteration methods.
\item \emph{Comparison with out-of-the-box tools}: We compare our implementation with standard industrial optimizers and probabilistic model checkers, including Matlab~\cite{MATLAB10}, lpsolve~\cite{lpsolve}, Gurobi~\cite{gurobi}, PRISM~\cite{Kwiatkowska11} and Storm~\cite{storm}.
\end{compactenum}

Our results show a consistent advantage of our new algorithms over all baseline methods, when the input models have small treewidth.
Our algorithms outperform both the existing classical approaches for solving MCs/MDPs, and the highly-refined standard solvers.
\end{compactenum}

\smallskip\noindent{\em Closest Related Works.}
To our knowledge, the existing works closest to this paper are~\cite{Chatterjee13,fomin2018fully}.
The work of~\cite{Chatterjee13} (CAV 2013) considers the maximal end-component decomposition and the almost-sure reachability set computation in low-treewidth MDPs. Note that these are both \emph{qualitative} objectives, and thus very different from the \emph{quantitative} objectives we consider here, which cannot be solved by~\cite{Chatterjee13}. Specifically, the main problem solved by~\cite{Chatterjee13} is almost-sure reachability, i.e.~reachability with probability 1, which is a very special qualitative case of computing hitting probabilities. 
The work of~\cite{fomin2018fully} develops an algorithm for solving linear systems of low treewidth.
Considering the computational complexity when applied to MCs/MDPs of treewidth $t$, the algorithms we develop in this work are a factor $t$ faster compared to~\cite{fomin2018fully}. 
On the practical side, the algorithms in~\cite{fomin2018fully} have more complicated intermediate steps, which we expect will lead to huge constant factors in the runtime of their implementations. This being said, it is highly nontrivial to provide a practically efficient implementation of~\cite{fomin2018fully} and we are not aware of any implementation for it.

\section{Preliminaries} \label{sec:prelim}

\subsection{Markov Chains and Markov Decision Processes} 
\label{sec:mdp}

\paragraph{Discrete Probability Distributions.} Given a finite set $X$, a probability distribution over $X$ is a function $d: X \rightarrow [0, 1]$ such that $\sum_{x \in X} d(x) = 1.$ We denote the set of all probability distributions over $X$ by $\distros(X)$.

\paragraph{Markov Chains (MCs)~\textnormal{\cite{kemeny2012denumerable}}.} A \emph{Markov chain} $C = (V, E, \delta)$ consists of a finite directed graph $(V, E)$ and a probabilistic transition function $\delta: V \rightarrow \distros(V)$, such that for any pair $u, v$ of vertices, we have $\delta(u)(v)>0$ only if $(u, v) \in E.$ 

In an MC $C$, we start a random walk from a vertex $v_0 \in V$ and at each step, being in a vertex $v$, we probabilistically choose one of the successors of $v$ and go there. The probability with which a successor $w$ is chosen is given by $\delta(v)(w).$ Let $O$ be a measurable set of infinite paths on $V$ (or more generally let~$O \subseteq V^\omega$), we use the notation $\prob_{v_0}(O)$ to denote the probability that our infinite random walk starting from $v_0$ is a member of $O$. See~\cite{mcbook,kemeny2012denumerable} for more detailed treatment.

\paragraph{Markov Decision Processes (MDPs)~\textnormal{\cite{howard1960dynamic,Filar96}}.} A \emph{Markov decision process} $P = (V, E, V_1, V_P, \delta)$ consists of a finite directed graph $(V, E)$, a partitioning of $V$ into two sets $V_1$ and $V_P$, and a probabilistic transition function $\delta: V_P \rightarrow \distros(V),$ such that for any $(u, v) \in V_P \times V,$ we have $\delta(u)(v)>0$ only if $(u, v) \in E.$  In this work, we assume that all vertices of an MDP have at least one outgoing edge.
Intuitively, an MDP is a one-player game in which we have two types of vertices: those controlled by Player 1, i.e.~$V_1$, and those that behave probabilistically, i.e.~$V_P$.

\paragraph{Strategies.} In an MDP $P$, a \emph{strategy}  is a function $\sigma: V_1 \rightarrow V$, such that for every $v \in V_1$ we have $(v, \sigma(v)) \in E.$\footnote{We are only considering pure memoryless strategies because they are sufficient for our use-cases, i.e.~there always exists an optimal strategy that is both pure and memoryless~\cite{Filar96,Kretinsky17}.}

Informally, a strategy is a recipe for Player 1 that tells her which successor to choose based on the current state (vertex). Given an MDP $P$ with a strategy $\sigma$, we start a random walk from a vertex $v_0 \in V$ and at each step, being in a vertex $v$, choose the successor as follows: (i) if $v \in V_1$, then we go to $\sigma(v)$, and (ii) if $v \in V_P$ we act as in the case of MCs, i.e. we go to each successor $w$ with probability $\delta(v)(w).$  As before, given a measurable set $O \subseteq V^\omega$ of infinite paths on $V$, we define $\prob^\sigma_{v_0}(O)$ as the probability that our infinite random walk becomes a member of $O$.  Note that an MDP with a fixed strategy $\sigma$ is basically an MC, in which for every $v \in V_1$ we have $\delta(v)(\sigma(v)) = 1$. See~\cite{Filar96,howard1960dynamic} for more details.

\paragraph{Hitting Probabilities~\textnormal{\cite{grinstead2012introduction,norris1998markov,krak2019hitting}}.} Let $C = (V, E, \delta)$ be an MC and $\targets \subseteq V$ a designated set of \emph{target} vertices. We define $\hit(\targets) \subseteq V^\omega$ as the set of all infinite sequences of vertices that intersect $\targets$. The \emph{Hitting probability} $\hitprob(u, \targets)$ is defined as $\prob_u(\hit(\targets)).$ In other words,  $\hitprob(u, \targets)$ is the probability of eventually reaching $\targets$, assuming that we start our random walk at $u$. In case of MDPs, we assume that the player aims to maximize the hitting probability by choosing the best possible strategy. Therefore, in an MDP $P = (V, E, V_1, V_P, \delta),$ we define $\hitprob(u, \targets)$ as $\max_{\sigma} \prob^\sigma_u(\hit(\targets)).$

\paragraph{Discounted Sums of Rewards~\textnormal{\cite{puterman2014markov}}.} Let $C = (V, E, \delta)$ be an MC and $R : E \rightarrow \mathbb{R}$ a \emph{reward function} that assigns a real value to each edge. Also, let $\lambda \in (0, 1)$ be a \emph{discount factor}. Given an infinite path $\pi = v_0, v_1, \ldots$ over $(V, E)$, we define the total reward  $R(\pi)$ of $\pi$ as $\sum_{i=0}^\infty \lambda^i \cdot R(v_i, v_{i+1}) = R(v_0, v_1) + \lambda \cdot R(v_1, v_2) + \lambda^2 \cdot R(v_2, v_3) + \ldots$. Let $u \in V$ be a vertex, we define $\dissum(u)$ as the expected value of the reward of our random walk if we begin it at $u$, i.e.~$\dissum(u) := \expv_u[R(\pi)]$. As in the previous case, when considering MDPs, we assume that the player aims to maximize the discounted sum, hence given an MDP $P = (V, E, V_1, V_P, \delta)$, a reward function $R$ and a discount factor $\lambda$, we define $\dissum(u) := \max_{\sigma} \expv^\sigma_u[R(\pi)].$

\paragraph{Mean Payoff~\textnormal{\cite{puterman2014markov,Kretinsky17}}.} Let $C$ be an MC and $R$ a reward function as above. Given an infinite path $\pi=v_0, v_1, \ldots$ over $C$, we define the $n$-step average reward of $\pi$ as $R(\pi[0..n]) := \frac{1}{n} \sum_{i=1}^n R(v_{i-1}, v_i).$ Given a start vertex $u \in V,$ the expected \emph{long-time average} or \emph{mean payoff} value from $u$ is defined as $\expmeanpayoff(u) := \lim_{n \rightarrow \infty}\expv_u [R(\pi[0..n])].$ In other words, $\expmeanpayoff(u)$ captures the expected reward per step in a random walk starting at $u$.  As in previous cases, in an MDP $P$, we define $\expmeanpayoff(u) := \max_\sigma \lim_{n \rightarrow \infty}\expv^\sigma_u [R(\pi[0..n])].$ The limits in the former definitions are guaranteed to exist~\cite{puterman2014markov,Kretinsky17}.

\paragraph{Problems.} We consider the following classical problems for both MCs and MDPs:
\begin{compactitem}
	\item \emph{Computing Hitting Probabilities:}
		 Given an MC/MDP and a target set $\targets$ compute $\hitprob(u, \targets)$ for every vertex $u.$
		 
 	\item \emph{Computing Expected Discounted Sums:}
 	 Given an MC/MDP, a reward function $R$ and a discount factor $\lambda \in (0, 1),$ compute $\dissum(u)$ for every vertex $u$.
 	 
 	 \item \emph{Computing Mean Payoffs:} Given an MC/MDP and a reward function $R$, compute $\expmeanpayoff(u)$ for every vertex $u$.
 		
\end{compactitem}

\paragraph{Solving MCs~\textnormal{\cite{grinstead2012introduction,norris1998markov}}.} A classical approach to the above problems for MCs is to reduce them to solving systems of linear equations.  In case of hitting probabilities, we define one variable $x_u$ for each vertex $u$, whose value in the solution to the system would be equal to $\hitprob(u, \targets).$ The system is constructed as follows:
\begin{compactitem}
	\item We add the equation $x_\target = 1$ for every $\target \in \targets$, and
	\item For every vertex $u \not\in \targets$ with successors $u_1, \ldots, u_k$, we add the equation $x_u = \sum_{i=1}^k \delta(u)(u_i) \cdot x_{u_i}.$
\end{compactitem} 
If every vertex can reach a target, then it is well-known that the resulting system has a unique solution in which the value assigned to each $x_u$ is equal to $\hitprob(u, \targets)$\footnote{Otherwise, we can first remove the vertices that cannot reach a target by a simple DFS and then apply the algorithm to the rest of the MC.}. A similar approach can be used in the case of discounted sums. We define one variable $y_u$ per vertex $u$ and if the successors of $u$ are $u_1, \ldots, u_k$, then we add the equation $y_u = \sum_{i=1}^k \delta(u)(u_i) \cdot \left( R(u, u_i) + \lambda \cdot y_{u_i} \right)$. The approach for mean payoff objectives is more subtle and described in Section~\ref{sec:mp}.

\paragraph{Primal Graphs~\textnormal{~\cite{rossi2006handbook}.}} Let $S$ be a system of linear equations with $m$ equations and $n$ unknowns (variables). The primal graph $G(S)$ of $S$ is an undirected graph with $n$ vertices, each corresponding to one unknown in $S$, in which there is an edge between two unknowns $x$ and $y$ iff there exists an equation in $S$ that contains both $x$ and $y$ with non-zero coefficients. 

\paragraph{Solving MDPs.}  There are two classical approaches to solving the above problems for MDPs. One is to reduce the problem to Linear Programming (LP) in a manner similar to the reduction from MC to linear systems~\cite{feinberg2012handbook}. The other approach is to use dynamic programming~\cite{bellman1957markovian,shapley1953stochastic}. We consider a widely-used variety of dynamic programming, called \emph{strategy iteration} or \emph{policy iteration}~\cite{howard1960dynamic,bertsekas1995dynamic}.

\paragraph{Strategy Iteration (SI)~\textnormal{\cite{bellman1957markovian,shapley1953stochastic}}.} In SI we start with an arbitrary initial strategy $\sigma_0$ and attempt to find a better strategy in each step. Formally, assume that our strategy after $i$ iterations is $\sigma_i$. Then, we compute $\val_i(u) = \hitprob^{\sigma_i}(u, \targets)$ for every vertex $u$. This is equivalent to computing hitting probabilities in the MC that is obtained by considering our MDP together with the strategy $\sigma_i.$ We use the values $\val_i(u)$ to obtain a better strategy $\sigma_{i+1}$ as follows: for every vertex $v \in V_1$ with successors $v_1, v_2, \ldots, v_k$, we set $\sigma_{i+1}(v) = \arg\max_{v_j} \val_i(v_j)$. (In case of discounted sum, we let $\val_i(u) = \dissum^{\sigma_i}(u)$ and $\sigma_{i+1}(v) = \arg\max_{v_j} R(v, v_j) + \lambda \cdot \val_i(v_j)$.) We repeat these steps until we reach a point where our strategy converges, i.e. it does not change anymore. It is well-known that strategy iteration always converges to the optimal strategy, and at that point the values $\val_i$ will be the desired hitting probabilities/discounted sums \cite{howard1960dynamic,bertsekas1995dynamic,feinberg2012handbook}. Moreover, while it might take exponentially many steps in theory~\cite{fearnley2010strategy,hansen2012worst}, SI is one of the most practical algorithms for solving MDPs and almost always terminates within a few iterations in real-world scenarios~\cite{hansen2012worst,Kretinsky17}. Hence, a major challenge is to optimize the runtime of each iteration~\cite{Kretinsky17}. SI can also be applied to mean payoff objectives. However, it requires the computation of additional values, called \emph{potentials} or \emph{biases}. See~\cite{Kretinsky17,puterman2014markov} for more details.

Given that SI solves the classic problems above on MDPs by several calls to a procedure for solving the same problems on MCs, our runtime improvements for MCs are naturally extended to MDPs. So, in the sequel we turn our focus to MCs. 

\subsection{Parameterized Algorithms, Tree Decompositions and Treewidth}

\paragraph{Parameterized Complexity~\textnormal{\cite{downey2012parameterized}}.} In parameterized complexity, the runtime of an algorithm is analyzed not only based on the size of its input, but also based on an aspect of the input, called a ``parameter''. Hence, parameterized complexity provides a finer-grained understanding than traditional complexity theory. For example, given a graph $G$ and an integer $k$ as input, it is NP-hard to decide whether $G$ has a vertex cover\footnote{A vertex cover is a set $C$ of vertices such that each edge has at least one of its endpoints in $C$.} of size $k$. However, there is an algorithm with runtime  $O(n \cdot k \cdot 2^k)$ for this problem~\cite{cygan2015parameterized}. Hence, if $k$ is a small constant, then the problem is solvable in linear time. The parameter does not necessarily need to be an explicit part of the input. It can also be a structural property of the input instance, e.g.~many hard graph problems are efficiently solvable over graphs whose maximum degree is small~\cite{cygan2015parameterized}.

\paragraph{Fixed-Parameter Tractability (FPT)~\textnormal{\cite{downey2012parameterized,cygan2015parameterized}}.} A parameterized problem is called \emph{fixed parameter tractable} if it can be solved in time $O(n^c \cdot f(k)),$ where $n$ is the input size, $k$ is the parameter, $f$ is an arbitrary computable function and $c$ is a constant that is not dependent on either $n$ or $k$.  This definition captures the intuition that while the problem might be hard in general, those instances of the problem where the parameter is small are easy to solve, i.e.~they are solvable in polynomial time wrt the size of input\footnote{Note that the polynomial degree $c$ is not dependent on the parameter $k$.}. In this work, we provide linear parameterized algorithms for MCs. In other words, in all of our algorithms we have $c = 1.$

Treewidth~\cite{robertson1984graph} is a widely-used parameter for graph problems. Intuitively, the treewidth of a graph is a measure of its tree-likeness, e.g.~only trees and forests have a treewidth of~$1$. We now provide a formal definition for treewidth based on tree decompositions.

\begin{figure}[H]
	\begin{minipage}{0.48\linewidth}
		\begin{flushright}
		\begin{tikzpicture}[scale=1.2, transform shape, node distance = {5mm and 5mm}, v/.style = {draw, circle, minimum size=5mm}, b/.style = {draw, rectangle}]
	
	\node (1) [v] {1};
	\node (2) [v, below = of 1] {2};
	\node (3) [v, right = of 2] {3};
	\node (4)  [v, right = of 3] {4};
	\node (5)  [v, below = of 2] {5};
	\node (6)  [v, right = of 5] {6};
	\node (7)  [v, right = of 6] {7};
	\node (8)  [v, below = of 6] {8};
	\node (9)  [v, right = of 8] {9};
	
	\draw (1) [->] to (2);
	\draw (1) [<-] to (3);
	\draw (2) [->] to (3);
	\draw (2) [->] to (5);
	\draw (2) [->] to (6);
	\draw (3) [->, bend right] to (4);
	\draw (4) [->, bend right] to (3);
	\draw (3) [<-] to (6);
	\draw (4) [->] to (6);
	\draw (4) [->] to (7);
	\draw (5) [->] to (6);
	\draw (5) [<-] to (8);
	\draw (6) [<-] to (7);
	\draw (6) [->] to (8);
	\draw (7) [->] to (9);
	\path (9) edge [->, out=330,in=30,looseness=5] (9);
	
\end{tikzpicture}
		\end{flushright}
	\end{minipage}
\begin{minipage}{0.48\linewidth}
	\begin{flushleft}
	\begin{tikzpicture}[scale=1.2, transform shape, node distance = {5mm and 5mm}, v/.style = {draw, circle}, b/.style = {draw, rectangle}]
	
	\node (B) [b] {$\{2, 3, 6\}$};
	\node (D) [b, below = of B] {$\{3, 4, 6\}$};
	\node (C) [b, right = of D] {$\{2, 5, 6\}$};
	\node (A) [b, left = of D] {$\{1, 2, 3\}$};
	\node (E) [b, below = of C] {$\{5, 6, 8\}$};
	\node (F) [b, below = of D] {$\{4, 6, 7\}$};
	\node (G) [b, below = of F] {$\{7, 9\}$};
	
	\draw (A) to (B);
	\draw (B) to (C);
	\draw (B) to (D);
	\draw (C) to (E);
	\draw (B) to (D);
	\draw (D) to (F);
	\draw (F) to (G);
	
\end{tikzpicture}
	\end{flushleft}
\end{minipage}
\caption{A graph $G$ (left) and a tree decomposition of $G$ with width $2$ (right).}
\label{fig:td-example}
\end{figure}
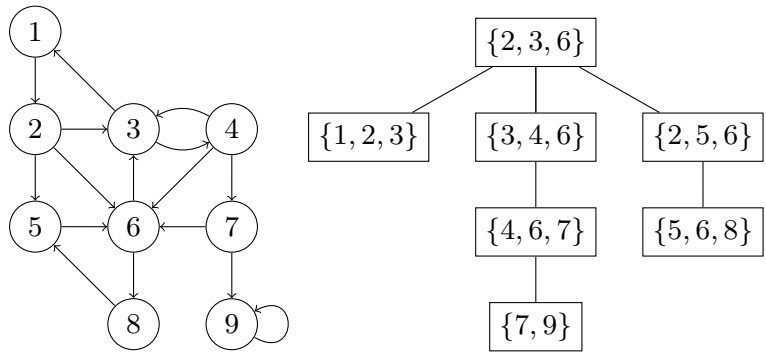

\paragraph{Tree Decompositions~\textnormal{\cite{robertson1984graph,bodlaender1994tourist}}.} Given a directed or undirected graph $G = (V, E)$, a \emph{tree decomposition} of $G$ is a tree $(T, E_T)$ such that:
\begin{compactitem}
	\item Each vertex $b \in T$ of the tree is associated with a subset $V_b \subseteq V$ of vertices of the graph. For clarity, we reserve the word ``vertex'' for vertices of $G$ and use the word ``bag'' to refer to vertices of $T$. Also, we define $E_b := \{ (u, v) \in E ~\vert~ u, v \in V_b \}.$
	
	\item Each vertex appears in at least one bag, i.e.~$\bigcup_{b \in T} V_b = V.$
	
	\item Each edge appears in at least one bag, i.e.~$\bigcup_{b \in T} E_b = E.$
	
	\item Each vertex appears in a connected subtree of $T$. In other words, for all $b, b', b'' \in T$, if $b''$ is in the unique path between $b$ and $b'$, then $V_{b} \cap V_{b'} \subseteq V_{b''}.$
\end{compactitem}

\paragraph{Treewidth~\textnormal{\cite{robertson1984graph,cygan2015parameterized}}.} The \emph{width} of a tree decomposition is the size of its largest bag minus one, i.e.~$\w(T) = \max_{b \in T} \vert V_b \vert - 1.$ A tree decomposition of $G$ is called \emph{optimal} if its width is less than or equal to the width of any other tree decomposition. The \emph{treewidth} $\tw(G)$ of $G$ is defined as the width of its optimal tree decomposition(s).

\paragraph{Computing Treewidth and Tree Decompositions.} Computing treewidth is an NP-complete problem~\cite{arnborg1987complexity}. However, it is solvable in linear-time FPT wrt the treewidth itself, i.e.~if we know that $\tw(G)$ is bounded by a constant, then the problem is solvable in linear time~\cite{bodlaender1996linear}. In this case, the algorithm in~\cite{bodlaender1996linear} also finds an optimal tree decomposition in linear time\footnote{Specifically, note that~\cite{thorup1998all} proves that control-flow graphs of structured programs in C and Pascal have a treewidth of at most $6$.}.
In the sequel, we focus on linear-time algorithms for MCs and MDPs parameterized by their treewidth. As is standard for treewidth-based approaches, we assume that an optimal tree decomposition is given as part of the input. 
This assumption does not affect the complexity of our approach, as we can use~\cite{bodlaender1996linear,thorup1998all} or tools such as~\cite{jtdec} to obtain the tree decomposition in linear time.
\section{Algorithms for MCs with Constant Treewidth} \label{sec:mc}

In this section, we consider quantitative problems on MCs. As mentioned before, our improvements carry over to MDPs using SI. We build on classical state-elimination algorithms to handle our MCs. Such methods are well-known and were previously used  in~\cite{hopcroft2001introduction,daws2004symbolic,hahn2010param,hahn2011probabilistic}, as well as many other works. The main novelty of our approach is that we use the tree decompositions to obtain a suitable \emph{order} for eliminating vertices. This specific ordering significantly reduces the runtime complexity of classical state-elimination algorithms from cubic to linear. Aside from the ordering, which is the main basis for our algorithmic improvements, the rest of this section consists mostly of well-known transformations on MCs. 
However, a new subtlety arises in our approach: while in general MCs there are several variants of rules for eliminating vertices, in small-treewidth MCs we must also make sure that the elimination step does not increase the treewidth or invalidate the underlying tree decomposition.

\smallskip

We first review state-elimination for computing hitting probabilities (Section~\ref{subsec:hp}). Then, in Section~\ref{sec:hitprobtd}, we show how to exploit the treewidth to speedup this process and obtain a linear-time algorithm. Section~\ref{subsec:discounted} provides a similar speedup for computing expected discounted sums. In Section~\ref{sec:system}, we show our most general result, i.e.~solving small-treewidth systems of linear equations in linear time. While this algorithm is more general than those of Sections~\ref{sec:hitprobtd} and~\ref{subsec:discounted}, it repeatedly applies the costly Gram-Schmidt orthogonalization process, and is hence not preferable in practice. 
Finally, Section~\ref{sec:mp} combines these ideas to compute expected mean payoffs in linear time.

\subsection{A Simple Algorithm for Computing Hitting Probabilities}\label{subsec:hp}

	We begin by looking into the problem of computing hitting probabilities for general MCs without exploiting the treewidth. First, note that, without loss of generality, we can assume that our target set contains a single vertex. Otherwise, we add a new vertex $\target$ and add edges with probability $1$ from every target vertex to $\target.$ This will keep the hitting probabilities intact.
	
	Consider our Markov chain $C = (V, E, \delta)$ and our target vertex $\target \in V.$ If there is only one vertex in the MC, i.e.~if $V = \{\target\},$ then there is not much to solve. We just return that $\hitprob(\target, \target) = 1.$ Otherwise, we take an arbitrary vertex $u \neq \target$ and try to remove it from the MC in order to obtain a smaller MC that can in turn be solved using the same method. We should do this in a manner that does not change $\hitprob(v, \target)$ for any vertex $v \neq u.$
Figure~\ref{fig:vertex-rm} shows how to remove a vertex $u$ from $C$ in order to obtain a smaller MC $\overline{C} = (V \setminus \{u\}, \overline{E}, \overline{\delta})$\footnote{In the sequel, we always use $\overline{C}$ to denote an MC that is obtained from $C$ by removing one vertex. We also apply the same rule across our notation, e.g.~$\overline{\delta}$ is the transition function after removal of the vertex.}. Basically, we remove $u$ and all of its edges, and instead add new edges from every predecessor $u'$ to every successor $u''.$ We also update the transition function $\delta$ by setting $\overline{\delta}(u')(u'') = \delta(u')(u'') + \delta(u')(u) \cdot \delta(u)(u'')$. It is easy to verify that for every $v \neq u,$ we have $\overline{\hitprob}(v, \target) = \hitprob(v, \target).$ Hence, we can compute hitting probabilities for every vertex $v \neq u$ in $\overline{C}$ instead of $C$. Finally, if $u_1, u_2, \ldots, u_k$ are the successors of $u$ in $C$, we know that $\hitprob(u, \target) = \sum_{i=1}^k \delta(u)(u_i) \cdot \hitprob(u_i, \target) = \sum_{i=1}^k \delta(u)(u_i) \cdot \overline{\hitprob}(u_i, \target).$ Hence, we can easily compute the hitting probability for $u$ using this formula. A pseudocode of this approach is available in Appendix~\ref{app:pseudo}.

A special case arises when there is a self-loop transition from $u$ to $u$. If $\delta(u)(u) = 1$, i.e.~$u$ is an absorbing trap, then we can simply remove $u$, noting that $\hitprob(u, \target) = 0$. On the other hand if $0 < \delta(u)(u) < 1,$ then we should distribute $\delta(u)(u)$ proportionately among the other successors of $u$ because staying for a finite number of steps in the same vertex $u$ does not change the hitting property of a path, and the probability of staying at $u$ forever is $0$.

\begin{figure}[H]
	\begin{center}
		\begin{tikzpicture}[scale=1.2, transform shape, node distance = {1cm and 5mm}, v/.style = {draw, circle, minimum size=8mm}, b/.style = {draw, rectangle}]

	\node (up) [v] {$u'$};
	\node (u) [v, right = of up] {$u$};
	\node (us) [v, right = of u] {$u''$};
	\node (up2) [v, below = of up] {$u'$};
	\node (u2) [right = of up2] {};
	\node (us2) [v, below = of us] {$u''$};
	
	\node (C) [left = of up]{$C$};
	\node (Cb) [left = of up2] {$\overline{C}$};
	
	\path (up) edge[->] node[above] {$p_1$} (u); 
	\path (u) edge[->] node[above] {$p_2$} (us); 
	\path (up2) edge[->] node[above] {$p_1 \cdot p_2$} (us2); 
	
	\node (rien) [right = 2cm of us] {};
	
	\node (up3) [v, right = of rien] {$u'$};
	\node (u3) [v, right = of up3] {$u$};
	\node (us3) [v, right = of u3] {$u''$};
	\node (up4) [v, below = of up3] {$u'$};
	\node (u4) [right = of up4] {};
	\node (us4) [v, below = of us3] {$u''$};
	
	\path (up3) edge[->] node[above] {$p_1$} (u3); 
	\path (u3) edge[->] node[above] {$p_2$} (us3); 
	\path (up4) edge[->] node[above] {$p_1 \cdot p_2 + p_3$} (us4);
	\path (up3) edge[->, bend right, looseness=2] node[below] {$p_3$} (us3); 
	
\end{tikzpicture}
	\end{center}
	\caption{Removing a vertex $u$. The vertex $u'$ is a predecessor of $u$ and $u''$ is one of its successors. The left side shows the changes when there is no edge from $u'$ to $u''$ and the right side shows the other case, where $(u', u'') \in E.$ Edge labels are $\delta$ values.}
	\label{fig:vertex-rm}
\end{figure}
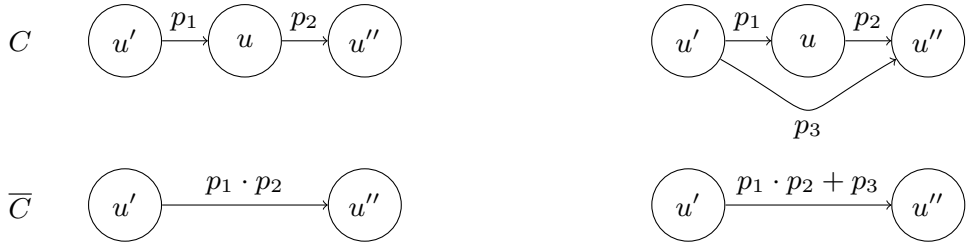

Note that removing each vertex can take at most $O(n^2)$ time, given that it has $O(n)$ predecessors and successors. Using this algorithm we should remove $n-1$ vertices, leading to a total runtime of $O(n^3)$, which is worse than the reduction to system of linear equations and then applying Gaussian elimination, leading to a runtime of $O(n^\omega)$\footnote{$\omega$ is the matrix multiplcation constant, i.e.~the infimum number for which there is an algorithm that multiplies two $n \times n$ matrices in $O(n^\omega).$ We know $\omega \leq 2.373$~\cite{omega}.}. However, the runtime can be significantly improved if we could remove vertices in an order that guaranteed that every vertex has a low degree when it is being removed. One heuristic is to always remove the vertex with the smallest degree, but this does not guarantee that all removals remain cheap. %

\subsection{Computing Hitting Probabilities in Constant Treewidth} \label{sec:hitprobtd}

The main idea behind our algorithm for computing hitting probabilities in constant treewidth is very simple: we take the algorithm from the previous section and use the tree decomposition to obtain an ordering for the removal of vertices. 

Given that we can choose any bag in $T$ as the root, without loss of generality, we assume that the target vertex $\target$ is in the root bag~\footnote{If $\vert \targets \vert \geq 2$, we use the same technique as in the previous section to have only one target $\target$. To keep the tree decomposition valid, we add $\target$ to every bag.}. The following two lemmas are the bases of our approach:

\begin{lemma} \label{lemma:leaf-remove}
	Let $l \in T$ be a \emph{leaf} bag of the tree decomposition $(T, E_T)$ of our MC $C$, and let $\bar{l}$ be the parent of $l$. If $V_l \subseteq V_{\bar{l}},$ then $(T \setminus \{l\}, E_T \setminus \{(\bar{l}, l)\})$ is also a valid tree decomposition for $C$. 
\end{lemma}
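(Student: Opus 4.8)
The plan is to verify the four defining conditions of a tree decomposition directly for the reduced structure $(T' , E_T')$ where $T' = T \setminus \{l\}$ and $E_T' = E_T \setminus \{(\bar l, l)\}$. First I would observe that removing a leaf bag leaves a tree: $l$ has degree one in $(T, E_T)$, so deleting it and its unique incident edge cannot disconnect the remaining bags and cannot create a cycle, hence $(T', E_T')$ is still a tree. This handles the structural requirement.

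Next I would check the covering conditions. For vertex coverage, the hypothesis $V_l \subseteq V_{\bar l}$ is exactly what is needed: every graph vertex that appeared in $V_l$ still appears in $V_{\bar l}$, which is a bag of $T'$, so $\bigcup_{b \in T'} V_b = \bigcup_{b \in T} V_b = V$. For edge coverage, recall $E_b = \{(u,v) \in E \mid u,v \in V_b\}$; since $V_l \subseteq V_{\bar l}$ we have $E_l \subseteq E_{\bar l}$, so any edge witnessed only by the bag $l$ is still witnessed by $\bar l \in T'$, giving $\bigcup_{b \in T'} E_b = E$.

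Finally I would verify the connectivity (running-intersection) condition: for every vertex $w \in V$, the set of bags of $T'$ containing $w$ induces a connected subtree of $T'$. Here I would split into cases according to whether $w \in V_l$. If $w \notin V_l$, the set of bags containing $w$ is unchanged by the deletion, and since paths in $T'$ between such bags coincide with paths in $T$ (the deleted leaf $l$ lies on no path between two bags other than $l$ itself), connectivity is inherited from $T$. If $w \in V_l$, then by hypothesis $w \in V_{\bar l}$ as well; in $T$ the bags containing $w$ form a connected subtree containing both $l$ and $\bar l$, and since $l$ is a leaf adjacent only to $\bar l$, removing $l$ from this subtree leaves a subtree still containing $\bar l$ and still connected. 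Thus the induced subtree in $T'$ is connected.

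I do not anticipate a serious obstacle here — the lemma is a routine sanity check — but the one point that deserves care is the running-intersection condition in the case $w \in V_l$: one must use both that $l$ is a leaf (so its only neighbor is $\bar l$) and that $w \in V_{\bar l}$, to be sure that deleting $l$ does not sever the subtree of $w$-bags. The edge-coverage step is the other place to be slightly careful, since it is the only place the reformulation $E_l \subseteq E_{\bar l}$ (derived from $V_l \subseteq V_{\bar l}$) is actually invoked.
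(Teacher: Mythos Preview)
Your proposal is correct and follows essentially the same approach as the paper: directly verifying that the vertex-coverage, edge-coverage, and connected-subtree conditions of a tree decomposition survive the removal of the leaf bag $l$, using $V_l \subseteq V_{\bar l}$ for the first two and the fact that $l$ is a leaf for the third. Your write-up is simply more detailed than the paper's, which dispatches all three checks in three sentences; in particular, your case split for the running-intersection property and the explicit observation that $(T',E_T')$ remains a tree are spelled out where the paper leaves them implicit.
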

\begin{proof}
	We just need to check that all the required properties of a tree decomposition hold after removal of $l$. Given that $V_l \subseteq V_{\bar{l}},$ any vertex that appears in $l$ is also in $\bar{l}$ and hence removal of $l$ does not cause any vertex to be unrepresented in the tree decomposition. The same applies to edges. Moreover, removing a \emph{leaf} bag cannot disconnect the previously-connected set of bags containing a vertex.
\end{proof}

\begin{lemma} \label{lemma:vertex-remove}
	Let $l \in T$ be a bag of the tree decomposition  $(T, E_T)$ and assume that the vertex $u \in V$ only appears in $V_l$, i.e.~it does not appear in the vertex set of any other bag. Then, the vertex $u$ has at most $\vert V_l \vert$ predecessors/successors in $C$.
\end{lemma}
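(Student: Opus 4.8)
The plan is to derive the bound directly from two of the tree-decomposition axioms: the requirement that every edge be covered by some bag, and the fact that, by hypothesis, the only occurrence of $u$ in the decomposition is in the bag $l$. First I would recall the edge-coverage property $\bigcup_{b \in T} E_b = E$, which says that for every edge $(a,b)\in E$ there is a bag $b^\star \in T$ with $\{a,b\}\subseteq V_{b^\star}$ (recall $E_b = \{(u,v)\in E \mid u,v\in V_b\}$). Since the definition of tree decomposition in the excerpt applies verbatim to directed graphs, this covers both in-edges and out-edges of $u$ uniformly.

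The core step is then a one-line argument. Fix an arbitrary neighbor $v$ of $u$ in $C$, i.e.\ a vertex with $(u,v)\in E$ or $(v,u)\in E$; this $v$ ranges exactly over the predecessors and successors of $u$. By edge coverage there is a bag $b^\star$ with $u,v\in V_{b^\star}$. But $u$ appears in no bag other than $l$, so $b^\star = l$, and hence $v\in V_l$. As $v$ was arbitrary, the set of all predecessors and successors of $u$ is contained in $V_l$, so its cardinality is at most $|V_l|$, which is the claim.

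The only point worth flagging is why the bound is stated as $|V_l|$ and not $|V_l|-1$: a self-loop $(u,u)\in E$ would make $u$ one of its own successors, and $u\in V_l$, so in the worst case the neighbor set fills all of $V_l$. (Without self-loops one gets the sharper $|V_l|-1$, but $|V_l|$ is all that the subsequent complexity analysis needs.) I do not anticipate any real obstacle here — the statement is an immediate consequence of combining edge coverage with the single-bag hypothesis on $u$; the slight care needed is just to phrase the neighbor argument so that it simultaneously handles incoming and outgoing edges and the possible self-loop.
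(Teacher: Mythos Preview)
Your proof is correct and follows essentially the same approach as the paper: invoke the edge-coverage axiom to place any predecessor/successor of $u$ in a bag containing $u$, then use the single-bag hypothesis to conclude that bag must be $l$. Your additional remark about self-loops explaining the $|V_l|$ (rather than $|V_l|-1$) bound is a nice clarification that the paper omits.
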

\begin{proof}
	If $u'$ is a predecessor/successor of $u$, then there is an edge between them. By definition, a tree decomposition should cover every edge. Hence, there should be a bag $b$ such that $u, u' \in V_b.$ By assumption, $u$ only appears in $V_l$. Hence, every predecessor/successor $u'$ must also appear in $V_l.$
\end{proof}

The two lemmas above give us a convenient order for removing vertices. At each step, we choose an arbitrary \emph{leaf} bag $l$. If there is a vertex $u$ that only appears in $V_l$, then we remove $u$. In this case, Lemma~\ref{lemma:vertex-remove} guarantees that $u$ has $O(t)$ predecessors and successors. Otherwise,  $V_l \subseteq V_{\bar{l}}$ (recall that each vertex appears in a connected subtree) and we can remove $l$ from our tree decomposition according to Lemma~\ref{lemma:leaf-remove}. Algorithm~\ref{algo:vertex-rem-td} puts all these steps together. Note that throughout this algorithm the tree decomposition remains valid, because we are only adding edges between vertices that are already in the same leaf bag $l$. Given that we remove at most $O(n)$ bags and $n-1$ vertices and that removing each vertex takes only $O(t^2),$ the total runtime of Algorithm~\ref{algo:vertex-rem-td} is $O(n \cdot t^2)$. Hence, we have the following theorem:

\begin{theorem} \label{thm:hitting}
	Given an MC with $n$ vertices and treewidth $t$ and an optimal tree decomposition of the MC, Algorithm~\ref{algo:vertex-rem-td} computes hitting probabilities from every vertex to a designated target set in $O(n \cdot t^2).$
\end{theorem}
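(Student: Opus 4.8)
The plan is to prove the statement in three stages. First, show that Algorithm~\ref{algo:vertex-rem-td} is well-defined and terminates: at every step at least one of its two branches is applicable, and neither branch destroys the tree decomposition or increases its width. Second, show by induction on the number of vertices that the values the algorithm outputs are exactly the hitting probabilities. Third, bound the running time by $O(n\cdot t^2)$ using Lemma~\ref{lemma:vertex-remove} to control the cost of each vertex removal. After the standard reduction to a single target $\target$ (adding $\target$ to every bag raises the width to at most $t+1$), the theorem follows.

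For the first stage, I would fix an arbitrary leaf bag $l$ with parent $\bar l$ and argue that either some non-target vertex occurs only in $V_l$, or $V_l\subseteq V_{\bar l}$. Indeed, if no vertex occurs only in $V_l$, then for each $v\in V_l$ there is a bag $b\neq l$ with $v\in V_b$; since $l$ is a leaf, the $l$--$b$ path in $T$ passes through $\bar l$, so by the running-intersection property $v\in V_{\bar l}$, whence $V_l\subseteq V_{\bar l}$ and Lemma~\ref{lemma:leaf-remove} applies. (When $T$ has a single bag, that bag is the root and every non-target vertex occurs only in it, so the first branch applies until only $\target$ remains.) When we instead remove a vertex $u$ that occurs only in $V_l$, Lemma~\ref{lemma:vertex-remove} guarantees $u$ has at most $|V_l|\le t+2$ predecessors and successors, and all short-cut edges created during the removal lie inside $V_l$; hence, after deleting $u$ from $V_l$, the object $(T,E_T)$ is still a valid tree decomposition of the new MC and its width has not grown. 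Each step strictly decreases $|V|$ or $|T|$ and creates no new vertices or bags, so — assuming, as is standard and enforceable in linear time, that the given decomposition has $O(n)$ bags — the algorithm halts after at most $n-1$ vertex removals and $O(n)$ bag removals.

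For correctness, I would induct on $|V|$. The base case $V=\{\target\}$ returns $\hitprob(\target,\target)=1$. A bag removal leaves $C$ unchanged, so it is harmless. A vertex removal applies the transformation of Figure~\ref{fig:vertex-rm} (with the self-loop handling), which is the classical state-elimination rule verified in Section~\ref{subsec:hp}: it satisfies $\overline{\hitprob}(v,\target)=\hitprob(v,\target)$ for every $v\neq u$ and preserves the property that every vertex reaches $\target$. By the induction hypothesis the recursion returns $\overline{\hitprob}(\cdot,\target)=\hitprob(\cdot,\target)$ on $V\setminus\{u\}$, and the formula $\hitprob(u,\target)=\sum_i \delta(u)(u_i)\cdot\overline{\hitprob}(u_i,\target)$ recovers the last value, so all returned values are correct.

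For the running time, reading the input costs $O(n+m)=O(n\cdot t)$ since a treewidth-$t$ graph has $m=O(n\cdot t)$ edges. Maintaining, for each vertex, the number of bags still containing it lets us pick a leaf, test the branch condition, and locate a removable vertex in time proportional to the size of the processed bag, for a total of $O(n\cdot t)$ over the run. The dominant term is the vertex removals: when $u$ is removed it has $O(t)$ predecessors and $O(t)$ successors (Lemma~\ref{lemma:vertex-remove}), so updating $\overline\delta$ over the $O(t^2)$ predecessor–successor pairs, redistributing a self-loop probability, and later performing the $O(t)$-time back-substitution for $\hitprob(u,\target)$ cost $O(t^2)$ per removal, i.e.\ $O(n\cdot t^2)$ in total. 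I expect the main obstacle to be not any isolated calculation but the global bookkeeping: verifying that the two branches are genuinely exhaustive at every step, that each vertex removal keeps the decomposition valid and of width $\le t+1$ so that Lemma~\ref{lemma:vertex-remove} keeps applying, and that the number of bag-removal steps (hence the cost of the interleaved tree traversal) stays $O(n)$. The algebraic correctness of state elimination and the per-step cost bound are then routine consequences of the preliminaries.
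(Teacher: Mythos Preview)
Your proposal is correct and follows essentially the same approach as the paper: both use Lemmas~\ref{lemma:leaf-remove} and~\ref{lemma:vertex-remove} to argue that the two branches are exhaustive (via the running-intersection property), that the decomposition stays valid because all new short-cut edges lie inside $V_l$, and that the $n-1$ vertex removals at $O(t^2)$ apiece plus $O(n)$ bag removals yield the $O(n\cdot t^2)$ bound. Your write-up is in fact more careful than the paper's one-paragraph justification---you make the induction for correctness explicit, spell out the single-bag corner case, and flag the (tacit) assumption that the input decomposition has $O(n)$ bags---but the underlying argument is the same.
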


\begin{algorithm}[H]
	\DontPrintSemicolon
	\SetKw{True}{true}
	\SetKw{Break}{break}
	\SetKwFunction{FMain}{ComputeHitProbs}
	\SetKwBlock{Repeat}{repeat}{}
	\SetKwProg{Fn}{Function}{:}{}
	\Fn{\FMain{$C = (V, E, \delta), \target, (T, E_T)$}}{
		\eIf{$V = \{\target\}$}
		{
			$\hitprob(\target, \target) \leftarrow 1$\;
		}
		{
			
			\Repeat
			{
				Choose an arbitrary leaf bag $l \in T$\;
				$\bar{l} \leftarrow$ parent of $l$\;
				\eIf{$V_l \subseteq V_{\bar{l}}$}
				{
					$T \leftarrow T \setminus \{l\}$\;
					$E_T \leftarrow E_T \setminus \{ (\bar{l}, l) \}$
				}
				{
					Choose an arbitrary $u \in V_l \setminus V_{\bar{l}}$\;
					$V_l \leftarrow V_l \setminus \{u\}$\;
					\Break
				}
			}
			
			\eIf{$\delta(u)(u)  = 1$}{
				$\hitprob(u, \target) \leftarrow 0$\;
				\FMain($(V \setminus \{u\}, E, \delta), \target$)\;
			}
			{
				
				$f \leftarrow \frac{1}{1 - \delta(u)(u)}$\;
				$\delta(u)(u) \leftarrow 0$\;
				$E \leftarrow E \setminus \{(u, u)\}$
				
				\ForEach{$u'' \in V_l: (u, u'') \in E$}
				{
					$\delta(u)(u'') \leftarrow \delta(u)(u'') \cdot f$\;
				}
				
				\ForEach{$u' \in V_l: (u', u) \in E$}
				{
					\ForEach{$u'' \in V_l: (u, u'') \in E$}
					{
						$\delta(u')(u'') \leftarrow \delta(u')(u'') + \delta(u')(u) \cdot \delta(u, u'')$\;
						$E \leftarrow E \cup \{ (u', u'') \}$
					}
				}
				\FMain($(V \setminus \{u\}, E, \delta), \target$)\;
				$\hitprob(u, \target) \leftarrow 0$\;
				\ForEach{$u'' \in V_l : (u, u'') \in E$}
				{
					$\hitprob(u, \target) \leftarrow \hitprob(u, \target) + \delta(u, u'') \cdot \hitprob(u'', \target)$\;
				}
			}
		}
	}
	\vspace{2mm}
	\caption{Computing Hitting Probabilities using a Tree Decomposition. }
	\label{algo:vertex-rem-td}
\end{algorithm}

\begin{example}
	Consider the graph and tree decomposition in Figure~\ref{fig:td-example} with an arbitrary transition probability function $\delta$ and target vertex $\target = 6$. On this example, Algorithm~\ref{algo:vertex-rem-td} would first choose an arbitrary leaf bag, say $\{7, 9\}$ and then realize that $9$ has only appeared in this bag. Hence it removes vertex $9$ from the MC using the same procedure as in the previous section. In the next iteration, it chooses the bag $\{7\}$ and realizes that the set of vertices in this bag is a subset of vertices that appear in its parent. Hence, it removes this unnecessary bag. The algorithm continues similarly, until only the target vertex $6$ remains, at which point the problem is trivial. Figure~\ref{fig:long-example} shows all the steps of our algorithm. Note that because the width of our tree decomposition is $2$, at each step when we are removing a vertex $u$, it has at most $3$ neighbors (counting itself).
\end{example}

\begin{figure}
	\setlength{\tabcolsep}{4pt}
	\renewcommand{\arraystretch}{1.5}
	\begin{center}
	\begin{tabular}{l|l}
		\begin{tikzpicture}[scale=0.68, transform shape, node distance = {5mm and 7mm}, v/.style = {draw, circle}, vr/.style = {draw, circle, color=red},b/.style = {draw, rectangle}, br/.style = {draw, rectangle, color=red}, bb/.style = {draw, rectangle, color=blue}]
	
	\node (1) [v] {1};
	\node (2) [v, below = of 1] {2};
	\node (3) [v, right = of 2] {3};
	\node (4)  [v, right = of 3] {4};
	\node (5)  [v, below = of 2] {5};
	\node (6)  [v, right = of 5, color = cadmiumgreen] {6};
	\node (7)  [v, right = of 6] {7};
	\node (8)  [v, below = of 6] {8};
	\node (9)  [vr, right = of 8] {9};
	
	\draw (1) [->] to (2);
	\draw (1) [<-] to (3);
	\draw (2) [->] to (3);
	\draw (2) [->] to (5);
	\draw (2) [->] to (6);
	\draw (3) [->, bend right] to (4);
	\draw (4) [->, bend right] to (3);
	\draw (3) [<-] to (6);
	\draw (4) [->] to (6);
	\draw (4) [->] to (7);
	\draw (5) [->] to (6);
	\draw (5) [<-] to (8);
	\draw (6) [<-] to (7);
	\draw (6) [->] to (8);
	\draw (7) [->] to (9);
	\path (9) edge [->, out=330,in=30,looseness=5] (9);
	
	\node (B) [b, right = 4.8cm of 1] {$\{2, 3, 6\}$};
	\node (D) [b, below = of B] {$\{3, 4, 6\}$};
	\node (C) [b, right = of D] {$\{2, 5, 6\}$};
	\node (A) [b, left = of D] {$\{1, 2, 3\}$};
	\node (E) [b, below = of C] {$\{5, 6, 8\}$};
	\node (F) [b, below = of D] {$\{4, 6, 7\}$};
	\node (G) [bb, below = of F] {$\{7, 9\}$};
	
	\draw (A) to (B);
	\draw (B) to (C);
	\draw (B) to (D);
	\draw (C) to (E);
	\draw (B) to (D);
	\draw (D) to (F);
	\draw (F) to (G);
	
	\node (Space) [above = 2mm of 1] {};
	
\end{tikzpicture} & 	\begin{tikzpicture}[scale=0.68, transform shape, node distance = {5mm and 7mm}, v/.style = {draw, circle}, vr/.style = {draw, circle, color=red},b/.style = {draw, rectangle}, br/.style = {draw, rectangle, color=red}, bb/.style = {draw, rectangle, color=blue}]
	
	\node (1) [v] {1};
	\node (2) [v, below = of 1] {2};
	\node (3) [v, right = of 2] {3};
	\node (4)  [v, right = of 3] {4};
	\node (5)  [v, below = of 2] {5};
	\node (6)  [v, right = of 5, cadmiumgreen] {6};
	\node (7)  [v, right = of 6] {7};
	\node (8)  [v, below = of 6] {8};
	
	\draw (1) [->] to (2);
	\draw (1) [<-] to (3);
	\draw (2) [->] to (3);
	\draw (2) [->] to (5);
	\draw (2) [->] to (6);
	\draw (3) [->, bend right] to (4);
	\draw (4) [->, bend right] to (3);
	\draw (3) [<-] to (6);
	\draw (4) [->] to (6);
	\draw (4) [->] to (7);
	\draw (5) [->] to (6);
	\draw (5) [<-] to (8);
	\draw (6) [<-] to (7);
	\draw (6) [->] to (8);

	\node (B) [b, right = 4.8cm of 1] {$\{2, 3, 6\}$};
	\node (D) [b, below = of B] {$\{3, 4, 6\}$};
	\node (C) [b, right = of D] {$\{2, 5, 6\}$};
	\node (A) [b, left = of D] {$\{1, 2, 3\}$};
	\node (E) [b, below = of C] {$\{5, 6, 8\}$};
	\node (F) [b, below = of D] {$\{4, 6, 7\}$};
	\node (G) [br, below = of F] {$\{7\}$};
	
	\draw (A) to (B);
	\draw (B) to (C);
	\draw (B) to (D);
	\draw (C) to (E);
	\draw (B) to (D);
	\draw (D) to (F);
	\draw (F) to (G);
	
		\node (Space) [above = 2mm of 1] {};
	
\end{tikzpicture} \\ \hline
		\begin{tikzpicture}[scale=0.68, transform shape, node distance = {5mm and 7mm}, v/.style = {draw, circle}, vr/.style = {draw, circle, color=red},b/.style = {draw, rectangle}, br/.style = {draw, rectangle, color=red}, bb/.style = {draw, rectangle, color=blue}]
	
	\node (1) [v] {1};
	\node (2) [v, below = of 1] {2};
	\node (3) [v, right = of 2] {3};
	\node (4)  [v, right = of 3] {4};
	\node (5)  [v, below = of 2] {5};
	\node (6)  [v, right = of 5, cadmiumgreen] {6};
	\node (7)  [v, right = of 6] {7};
	\node (8)  [vr, below = of 6] {8};
	
	\draw (1) [->] to (2);
	\draw (1) [<-] to (3);
	\draw (2) [->] to (3);
	\draw (2) [->] to (5);
	\draw (2) [->] to (6);
	\draw (3) [->, bend right] to (4);
	\draw (4) [->, bend right] to (3);
	\draw (3) [<-] to (6);
	\draw (4) [->] to (6);
	\draw (4) [->] to (7);
	\draw (5) [->] to (6);
	\draw (5) [<-] to (8);
	\draw (6) [<-] to (7);
	\draw (6) [->] to (8);

	\node (B) [b, right = 4.8cm of 1] {$\{2, 3, 6\}$};
	\node (D) [b, below = of B] {$\{3, 4, 6\}$};
	\node (C) [b, right = of D] {$\{2, 5, 6\}$};
	\node (A) [b, left = of D] {$\{1, 2, 3\}$};
	\node (E) [bb, below = of C] {$\{5, 6, 8\}$};
	\node (F) [b, below = of D] {$\{4, 6, 7\}$};
	
	\draw (A) to (B);
	\draw (B) to (C);
	\draw (B) to (D);
	\draw (C) to (E);
	\draw (B) to (D);
	\draw (D) to (F);
	
		\node (Space) [above = 2mm of 1] {};
	
\end{tikzpicture} & 	\begin{tikzpicture}[scale=0.68, transform shape, node distance = {5mm and 7mm}, v/.style = {draw, circle}, vr/.style = {draw, circle, color=red},b/.style = {draw, rectangle}, br/.style = {draw, rectangle, color=red}, bb/.style = {draw, rectangle, color=blue}]
	
	\node (1) [v] {1};
	\node (2) [v, below = of 1] {2};
	\node (3) [v, right = of 2] {3};
	\node (4)  [v, right = of 3] {4};
	\node (5)  [v, below = of 2] {5};
	\node (6)  [v, right = of 5, cadmiumgreen] {6};
	\node (7)  [v, right = of 6] {7};
	
	\draw (1) [->] to (2);
	\draw (1) [<-] to (3);
	\draw (2) [->] to (3);
	\draw (2) [->] to (5);
	\draw (2) [->] to (6);
	\draw (3) [->, bend right] to (4);
	\draw (4) [->, bend right] to (3);
	\draw (3) [<-] to (6);
	\draw (4) [->] to (6);
	\draw (4) [->] to (7);
	\draw (5) [->, bend right] to (6);
	\draw (6) [->, bend right] to (5);
	\draw (6) [<-] to (7);

	\node (B) [b, right = 4.8cm of 1] {$\{2, 3, 6\}$};
	\node (D) [b, below = of B] {$\{3, 4, 6\}$};
	\node (C) [b, right = of D] {$\{2, 5, 6\}$};
	\node (A) [b, left = of D] {$\{1, 2, 3\}$};
	\node (E) [br, below = of C] {$\{5, 6\}$};
	\node (F) [b, below = of D] {$\{4, 6, 7\}$};
	\node (nn) [v, below = of 6, white]{$1$};
	
	\draw (A) to (B);
	\draw (B) to (C);
	\draw (B) to (D);
	\draw (C) to (E);
	\draw (B) to (D);
	\draw (D) to (F);
	
	\node (Space) [above = 2mm of 1] {};
	
\end{tikzpicture} \\ \hline
		\begin{tikzpicture}[scale=0.68, transform shape, node distance = {5mm and 7mm}, v/.style = {draw, circle}, vr/.style = {draw, circle, color=red},b/.style = {draw, rectangle}, br/.style = {draw, rectangle, color=red}, bb/.style = {draw, rectangle, color=blue}]
	
	\node (1) [v] {1};
	\node (2) [v, below = of 1] {2};
	\node (3) [v, right = of 2] {3};
	\node (4)  [v, right = of 3] {4};
	\node (5)  [vr, below = of 2] {5};
	\node (6)  [v, right = of 5, cadmiumgreen] {6};
	\node (7)  [v, right = of 6] {7};
	
	\draw (1) [->] to (2);
	\draw (1) [<-] to (3);
	\draw (2) [->] to (3);
	\draw (2) [->] to (5);
	\draw (2) [->] to (6);
	\draw (3) [->, bend right] to (4);
	\draw (4) [->, bend right] to (3);
	\draw (3) [<-] to (6);
	\draw (4) [->] to (6);
	\draw (4) [->] to (7);
	\draw (5) [->, bend right] to (6);
	\draw (6) [->, bend right] to (5);
	\draw (6) [<-] to (7);

	\node (B) [b, right = 4.8cm of 1] {$\{2, 3, 6\}$};
	\node (D) [b, below = of B] {$\{3, 4, 6\}$};
	\node (C) [bb, right = of D] {$\{2, 5, 6\}$};
	\node (A) [b, left = of D] {$\{1, 2, 3\}$};
	\node (F) [b, below = of D] {$\{4, 6, 7\}$};
	
	\draw (A) to (B);
	\draw (B) to (C);
	\draw (B) to (D);
	\draw (B) to (D);
	\draw (D) to (F);
	
	\node (Space) [above = 2mm of 1] {};
	
\end{tikzpicture} & \begin{tikzpicture}[scale=0.68, transform shape, node distance = {5mm and 7mm}, v/.style = {draw, circle}, vr/.style = {draw, circle, color=red},b/.style = {draw, rectangle}, br/.style = {draw, rectangle, color=red}, bb/.style = {draw, rectangle, color=blue}]
	
	\node (1) [v] {1};
	\node (2) [v, below = of 1] {2};
	\node (3) [v, right = of 2] {3};
	\node (4)  [v, right = of 3] {4};
	\node (6)  [v, right = of 5, cadmiumgreen] {6};
	\node (7)  [v, right = of 6] {7};
	
	\draw (1) [->] to (2);
	\draw (1) [<-] to (3);
	\draw (2) [->] to (3);
	\draw (2) [->] to (6);
	\draw (3) [->, bend right] to (4);
	\draw (4) [->, bend right] to (3);
	\draw (3) [<-] to (6);
	\draw (4) [->] to (6);
	\draw (4) [->] to (7);
	\draw (6) [<-] to (7);
	\path (6) edge [->, out=180,in=220,looseness=5] (6);

	\node (B) [b, right = 4.8cm of 1] {$\{2, 3, 6\}$};
	\node (D) [b, below = of B] {$\{3, 4, 6\}$};
	\node (C) [br, right = of D] {$\{2, 6\}$};
	\node (A) [b, left = of D] {$\{1, 2, 3\}$};
	\node (F) [b, below = of D] {$\{4, 6, 7\}$};
	
	\draw (A) to (B);
	\draw (B) to (C);
	\draw (B) to (D);
	\draw (B) to (D);
	\draw (D) to (F);
	
	\node (Space) [above = 2mm of 1] {};
	
\end{tikzpicture} \\ \hline
		\begin{tikzpicture}[scale=0.68, transform shape, node distance = {5mm and 7mm}, v/.style = {draw, circle}, vr/.style = {draw, circle, color=red},b/.style = {draw, rectangle}, br/.style = {draw, rectangle, color=red}, bb/.style = {draw, rectangle, color=blue}]
	
	\node (1) [v] {1};
	\node (2) [v, below = of 1] {2};
	\node (3) [v, right = of 2] {3};
	\node (4)  [v, right = of 3] {4};
	\node (6)  [v, right = of 5, cadmiumgreen] {6};
	\node (7)  [vr, right = of 6] {7};
	
	\draw (1) [->] to (2);
	\draw (1) [<-] to (3);
	\draw (2) [->] to (3);
	\draw (2) [->] to (6);
	\draw (3) [->, bend right] to (4);
	\draw (4) [->, bend right] to (3);
	\draw (3) [<-] to (6);
	\draw (4) [->] to (6);
	\draw (4) [->] to (7);
	\draw (6) [<-] to (7);
	\path (6) edge [->, out=180,in=220,looseness=5] (6);

	\node (B) [b, right = 6cm of 1] {$\{2, 3, 6\}$};
	\node (D) [b, below = of B] {$\{3, 4, 6\}$};
	\node (A) [b, left = of D] {$\{1, 2, 3\}$};
	\node (F) [bb, below = of D] {$\{4, 6, 7\}$};
	
	\draw (A) to (B);
	\draw (B) to (D);
	\draw (B) to (D);
	\draw (D) to (F);
	
	\node (Space) [above = 2mm of 1] {};
	
\end{tikzpicture} & \begin{tikzpicture}[scale=0.68, transform shape, node distance = {5mm and 7mm}, v/.style = {draw, circle}, vr/.style = {draw, circle, color=red},b/.style = {draw, rectangle}, br/.style = {draw, rectangle, color=red}, bb/.style = {draw, rectangle, color=blue}]
	
	\node (1) [v] {1};
	\node (2) [v, below = of 1] {2};
	\node (3) [v, right = of 2] {3};
	\node (4)  [v, right = of 3] {4};
	\node (6)  [v, right = of 5, cadmiumgreen] {6};
	
	\draw (1) [->] to (2);
	\draw (1) [<-] to (3);
	\draw (2) [->] to (3);
	\draw (2) [->] to (6);
	\draw (3) [->, bend right] to (4);
	\draw (4) [->, bend right] to (3);
	\draw (3) [<-] to (6);
	\draw (4) [->] to (6);
	\path (6) edge [->, out=180,in=220,looseness=5] (6);

	\node (B) [b, right = 6cm of 1] {$\{2, 3, 6\}$};
	\node (D) [b, below = of B] {$\{3, 4, 6\}$};
	\node (A) [b, left = of D] {$\{1, 2, 3\}$};
	\node (F) [br, below = of D] {$\{4, 6\}$};
	
	\draw (A) to (B);
	\draw (B) to (D);
	\draw (B) to (D);
	\draw (D) to (F);
	
	\node (Space) [above = 2mm of 1] {};
	
\end{tikzpicture} \\ \hline
		\begin{tikzpicture}[scale=0.68, transform shape, node distance = {5mm and 7mm}, v/.style = {draw, circle}, vr/.style = {draw, circle, color=red},b/.style = {draw, rectangle}, br/.style = {draw, rectangle, color=red}, bb/.style = {draw, rectangle, color=blue}]
	
	\node (1) [v] {1};
	\node (2) [v, below = of 1] {2};
	\node (3) [v, right = of 2] {3};
	\node (4)  [vr, right = of 3] {4};
	\node (6)  [v, right = of 5, cadmiumgreen] {6};
	
	\draw (1) [->] to (2);
	\draw (1) [<-] to (3);
	\draw (2) [->] to (3);
	\draw (2) [->] to (6);
	\draw (3) [->, bend right] to (4);
	\draw (4) [->, bend right] to (3);
	\draw (3) [<-] to (6);
	\draw (4) [->] to (6);
	\path (6) edge [->, out=180,in=220,looseness=5] (6);

	\node (B) [b, right = 6cm of 1] {$\{2, 3, 6\}$};
	\node (D) [bb, below = of B] {$\{3, 4, 6\}$};
	\node (A) [b, left = of D] {$\{1, 2, 3\}$};
	
	\draw (A) to (B);
	\draw (B) to (D);
	\draw (B) to (D);
	
	\node (Space) [above = 2mm of 1] {};
	
\end{tikzpicture} &  \begin{tikzpicture}[scale=0.68, transform shape, node distance = {5mm and 7mm}, v/.style = {draw, circle}, vr/.style = {draw, circle, color=red},b/.style = {draw, rectangle}, br/.style = {draw, rectangle, color=red}, bb/.style = {draw, rectangle, color=blue}]
	
	\node (1) [v] {1};
	\node (2) [v, below = of 1] {2};
	\node (3) [v, right = of 2] {3};
	\node (6)  [v, right = of 5, cadmiumgreen] {6};
	
	\draw (1) [->] to (2);
	\draw (1) [<-] to (3);
	\draw (2) [->] to (3);
	\draw (2) [->] to (6);
	\draw (3) [->, bend right] to (6);
	\draw (6) [->, bend right] to (3);
	\path (6) edge [->, out=180,in=220,looseness=5] (6);
	\path (3) edge [->, out=0,in=40,looseness=5] (3);

	\node (B) [b, right = 6cm of 1] {$\{2, 3, 6\}$};
	\node (D) [br, below = of B] {$\{3, 6\}$};
	\node (A) [b, left = of D] {$\{1, 2, 3\}$};
	
	\draw (A) to (B);
	\draw (B) to (D);
	\draw (B) to (D);
	
	\node (Space) [above = 2mm of 1] {};
	
\end{tikzpicture} \\ \hline
		\begin{tikzpicture}[scale=0.68, transform shape, node distance = {5mm and 7mm}, v/.style = {draw, circle}, vr/.style = {draw, circle, color=red},b/.style = {draw, rectangle}, br/.style = {draw, rectangle, color=red}, bb/.style = {draw, rectangle, color=blue}]
	
	\node (1) [vr] {1};
	\node (2) [v, below = of 1] {2};
	\node (3) [v, right = of 2] {3};
	\node (6)  [v, right = of 5, cadmiumgreen] {6};
	
	\draw (1) [->] to (2);
	\draw (1) [<-] to (3);
	\draw (2) [->] to (3);
	\draw (2) [->] to (6);
	\draw (3) [->, bend right] to (6);
	\draw (6) [->, bend right] to (3);
	\path (6) edge [->, out=180,in=220,looseness=5] (6);
	\path (3) edge [->, out=0,in=40,looseness=5] (3);

	\node (B) [b, right = 6cm of 1] {$\{2, 3, 6\}$};
	\node (A) [bb, left = of D] {$\{1, 2, 3\}$};
	
	\draw (A) to (B);
	
	\node (Space) [above = 2mm of 1] {};
	
\end{tikzpicture} & \begin{tikzpicture}[scale=0.68, transform shape, node distance = {5mm and 7mm}, v/.style = {draw, circle}, vr/.style = {draw, circle, color=red},b/.style = {draw, rectangle}, br/.style = {draw, rectangle, color=red}, bb/.style = {draw, rectangle, color=blue}]
	
	\node (2) [v] {2};
	\node (3) [v, right = of 2] {3};
	\node (6)  [v, below = of 3, cadmiumgreen] {6};
	
	\draw (2) [->, bend right] to (3);
	\draw (3) [->, bend right] to (2);
	\draw (2) [->] to (6);
	\draw (3) [->, bend right] to (6);
	\draw (6) [->, bend right] to (3);
	\path (6) edge [->, out=180,in=220,looseness=5] (6);
	\path (3) edge [->, out=0,in=40,looseness=5] (3);

	\node (B) [b, right = 6cm of 1] {$\{2, 3, 6\}$};
	\node (A) [br, left = of D] {$\{2, 3\}$};
	
	\draw (A) to (B);
	
	\node (Space) [above = 2mm of 2] {};
	
\end{tikzpicture} \\ \hline
		\begin{tikzpicture}[scale=0.68, transform shape, node distance = {5mm and 7mm}, v/.style = {draw, circle}, vr/.style = {draw, circle, color=red},b/.style = {draw, rectangle}, br/.style = {draw, rectangle, color=red}, bb/.style = {draw, rectangle, color=blue}]
	
	\node (2) [v] {2};
	\node (3) [vr, right = of 2] {3};
	\node (6)  [v, below = of 3, cadmiumgreen] {6};
	
	\draw (2) [->, bend right] to (3);
	\draw (3) [->, bend right] to (2);
	\draw (2) [->] to (6);
	\draw (3) [->, bend right] to (6);
	\draw (6) [->, bend right] to (3);
	\path (6) edge [->, out=180,in=220,looseness=5] (6);
	\path (3) edge [->, out=0,in=40,looseness=5] (3);

	\node (B) [bb, right = 6cm of 1] {$\{2, 3, 6\}$};
	
	\node (Space) [above = 2mm of 2] {};
	
\end{tikzpicture} & \begin{tikzpicture}[scale=0.68, transform shape, node distance = {5mm and 7mm}, v/.style = {draw, circle}, vr/.style = {draw, circle, color=red},b/.style = {draw, rectangle}, br/.style = {draw, rectangle, color=red}, bb/.style = {draw, rectangle, color=blue}]
	
	\node (2) [vr] {2};
	\node (6)  [v, below = of 3, cadmiumgreen] {6};
	
	\draw (2) [->, bend left] to (6);
	\draw (6) [->, bend left] to (2);
	\path (6) edge [->, out=180,in=220,looseness=5] (6);
	\path (2) edge [->, out=180,in=220,looseness=5] (2);

	\node (B) [bb, right = 6cm of 1] {$\{2, 3, 6\}$};
	
	\node (Space) [above = 2mm of 2] {};
	
\end{tikzpicture} \\ \hline
	\end{tabular}
	\end{center}
	
	\caption{The Steps Taken by Algorithm~\ref{algo:vertex-rem-td} on the Graph and Tree Decomposition in Figure~\ref{fig:td-example}. The target vertex $\target = 6$ is shown in green. At each step the vertex/bag that is being removed is shown in red. An active bag whose vertices, but not itself, are considered for removal is shown in blue. After removing vertex $2$, the graph has only one vertex and the base case of the algorithm is run.}
	\label{fig:long-example}
\end{figure}
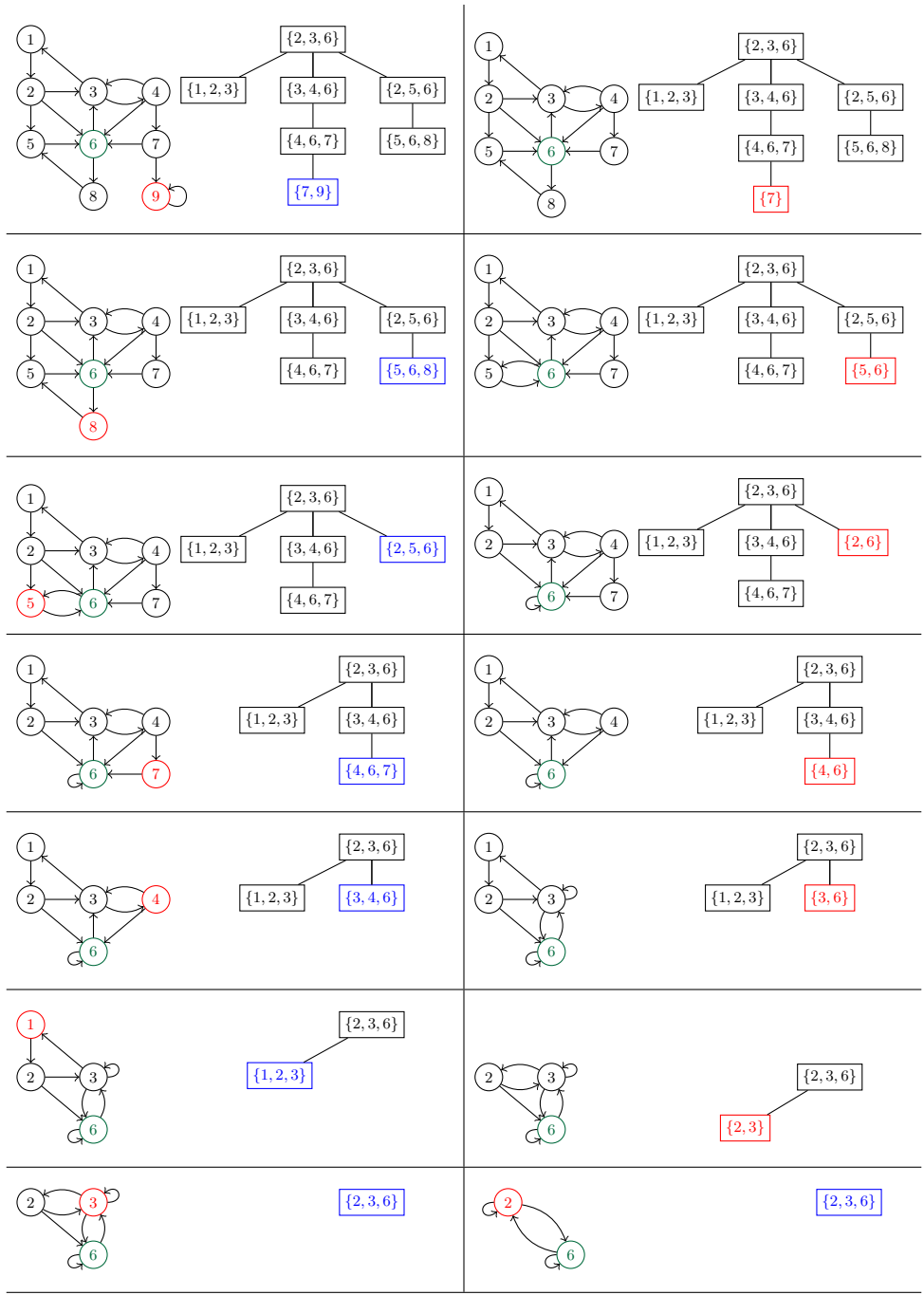

\subsection{Computing Expected Discounted Sums in Constant Treewidth}\label{subsec:discounted}

We use a similar approach for handling the discounted sum problem. The only difference is in how a vertex is removed. Given an MC $C = (V, E, \delta)$, a tree decomposition $(T, E_T)$ of $C$, a reward function $R: E \rightarrow \mathbb{R}$ and a discount factor $\lambda \in (0, 1)$, we first add a new vertex called $\one$ to the MC. The vertex $\one$ is disjoint from all other vertices and only has a single self-loop with probability $1$ and reward $1-\lambda.$ In other words, we define $\delta(\one)(\one) = 1$ and $R(\one, \one) = 1-\lambda.$
\begin{center}
\begin{tabular}{c}
	\begin{tikzpicture}[scale=0.6, transform shape, node distance = {5mm and 7mm}, v/.style = {draw, circle}, vr/.style = {draw, circle, color=red},b/.style = {draw, rectangle}, br/.style = {draw, rectangle, color=red}, bb/.style = {draw, rectangle, color=blue}]
	
	\node (1) [v] {\huge{$\one$}};

	\path (1) edge [->, out=180,in=220,looseness=5] node[left]{\Large{$1, 1-\lambda$}} (1);
	
\end{tikzpicture}
\end{tabular}
\end{center}
We also add $\one$ to the vertex set of every bag. The reason behind this gadget is that we have $\dissum(\one) = (1-\lambda) \cdot (1 + \lambda + \lambda^2 + \ldots) = 1.$ We will use this property later.%

In our algorithm, the requirement that for all $u, v$ we should have $0 \leq \delta(u)(v) \leq 1$ is unnecessary and becomes untenable, too. Therefore, we allow $\delta(u)(v)$ to have any real value, and use the linear system interpretation of $C$ as in Section~\ref{sec:mdp}, i.e.~instead of considering $C$ as an MC, we consider it to be a representation of the linear system $S_C$ defined as follows:
\begin{compactitem}
	\item For every vertex $u \in V$, the system $S_C$ contains one unknown $y_u,$ and
	\item For every vertex $u \in V$, whose successors are $u_1, u_2, \ldots, u_k$, the system $S_C$ contains an equation $\equ_u := y_u = \sum_{i=1}^k \delta(u) (u_i) \cdot \left( R(u, u_i) + \lambda \cdot y_{u_i} \right).$
\end{compactitem}
As mentioned in Section~\ref{sec:mdp}, in the solution to $S_C$, the value assigned to the unknown $y_u$ is equal to $\dissum(u)$ in the MC $C$. However, the definition above does not depend on the fact that $C$ is an MC and can also be applied if $\delta$ has arbitrary real values.

Now suppose that we want to remove a vertex $u \neq \one$ with successors $u_1, \ldots, u_k$ from $C$. This is equivalent to removing $y_u$ from $S_C$ without changing the values of other unknowns in the solution. Given that we have $y_u = \sum_{i=1}^k \delta(u)(u_i) \cdot \left( R(u, u_i) + \lambda \cdot y_{u_i} \right),$ we can simply replace every occurrence of $y_u$ in other equations with the right-hand-side expression of this equation. If $u' \neq u$ is a predecessor of $u$, then we have $y_{u'} = A + \delta(u')(u) \cdot \left( R(u', u) + \lambda \cdot y_u \right),$ where $A$ is an expression that depends on other successors of $u'$. We can rewrite this equation as $y_{u'} = A + \delta(u')(u) \cdot R(u', u) + \sum_{i=1}^k \delta(u')(u) \cdot \delta(u)(u_i) \cdot \lambda \cdot \left( R(u, u_i) + \lambda \cdot y_{u_i} \right)$. This is equivalent to obtaining a new $\overline{C}$ from $C$ by removing the vertex $u$ and adding the following edges from every predecessor $u'$ of $u$:
\begin{compactitem}
	\item An edge $(u', \one)$, such that $R(u', \one) = 0$ and $\delta(u')(1) = \frac{1}{\lambda} \cdot (\delta(u')(u) \cdot R(u', u)),$
	\item An edge $(u', u_i)$ to every successor $u_i$ of $u$, such that $R(u', u_i) = R(u, u_i)$ and $\delta(u')(u_i) = \delta(u')(u) \cdot \delta(u)(u_i) \cdot \lambda.$
\end{compactitem}

This construction is shown in Figure~\ref{fig:vertex-rm-sum}. As shown above, using this construction the value of $y_v$ remains the same in solutions of $S_C$ and $S_{\overline{C}}.$ There are two special cases that can cause this construction to fail. However, we can avoid both of these cases using simple transformations in the graph before applying this construction. We now describe how we handle each of them:
\begin{compactitem}
	\item \emph{Parallel Edges.} If two edges with the same direction are created between the same pair $(u, v)$ of vertices, then we replace them with a single edge. If the $\delta$ values of initial edges were $\delta_1, \delta_2$ and their $R$ values were $r_1, r_2$, we set $\delta(u)(v) = \delta_1 + \delta_2$ and $R(u, v) = \frac{\delta_1 \cdot r_1 + \delta_2 \cdot r_2}{\delta_1 + \delta_2}.$ It is straigthforward to verify that this transformation is sound, i.e.~it does not change the solution of the corresponding system.  
	\item \emph{Self-loops.} If a self-loop $(u, u)$ appears in our graph, this is equivalent to having an equation $\equ_u := y_u = R$ in the linear system, in which $R$ is a linear expression that contains a non-zero multiple of $y_u.$ In this case, we simplify this equation to $y_u = R'$ by moving the summand containing $y_u$ to the left hand side and multiplying both sides by a suitable factor. We then update the outgoing edges of $u$ in our graph to model the new system. Note that this update does not add any new edges to the graph, except possibly the edge $(u, \one)$ for handling leftover constant factors.
\end{compactitem}

\begin{figure}[H]
	\begin{center}
		\begin{tikzpicture}[scale=1.2, transform shape, node distance = {4mm and 15mm}, v/.style = {draw, circle, minimum size=8mm}, b/.style = {draw, rectangle}]

	\node (up) [v] {$u'$};
	\node (u) [v, right = of up] {$u$};
	\node (dots) [right = of u] {$\cdots$};
	\node (u1) [v, above = of dots] {$u_1$};
	\node (uk) [v, below = of dots] {$u_k$};

	\node (up2) [v, right = 2cm of dots] {$u'$};
	\node (u2) [v, right = of up2, white] {$u$};
	\node (dots2) [right = of u2] {$\cdots$};
	\node (u12) [v, above = of dots2] {$u_1$};
	\node (uk2) [v, below = of dots2] {$u_k$};
	\node (one) [v, below = of up2] {$\one$};

	\path (up) edge[->] node[above] {$\delta_0, r_0$} (u); 
	\path (u) edge[->] node[above left] {$\delta_1, r_1$} (u1); 
	\path (u) edge[->] node[below left] {$\delta_k, r_k$} (uk); 
	\path (u) edge[->] (dots); 
	
	\path (up2) edge[->] node[above left] {$\delta_0 \cdot \delta_1 \cdot \lambda, r_1$} (u12); 
	\path (up2) edge[->] node[below left] {$\delta_0 \cdot \delta_k \cdot \lambda, r_k$} (uk2); 
	\path (up2) edge[->] (dots2); 
	\path (up2) edge[->] node[left] {$\frac{\delta_0 \cdot r_0}{\lambda}, 0$} (one);

\end{tikzpicture}
	\end{center}
	\caption{Removing $u$ from $C$ (left) to obtain $\overline{C}$ (right). The vertex $u'$ is a predecessor of $u$ and $u_1, \ldots, u_k$ are its successors. Each edge is labelled with its $\delta$ and $R$ values.}
	\label{fig:vertex-rm-sum}
\end{figure}
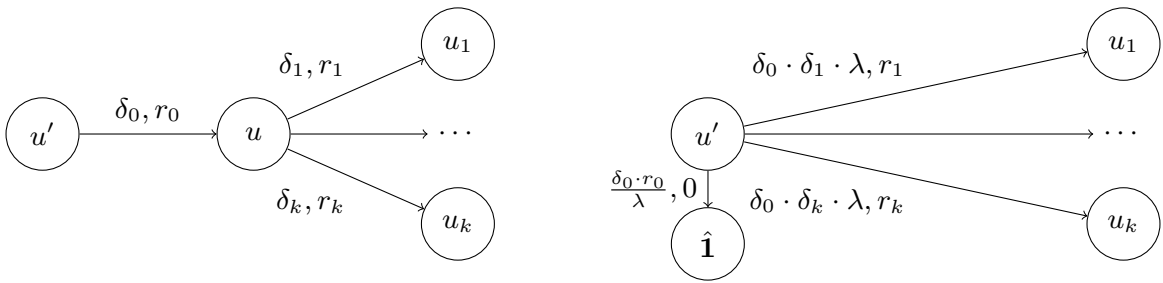

As in the previous section, we can solve the problem on the smaller $\overline{C}$ and then use the equation $\equ_u$ to compute the value of $y_u$ in the solution to $S_C$. This algorithm's runtime can be analyzed exactly as before. We have to remove $n$ vertices and each removal takes $O(n^2)$ for a total runtime of $O(n^3)$. To obtain a better algorithm that exploits tree decompositions, we can use the exact same removal order as in the previous section, leading to the same runtime, i.e.~$O(n \cdot t^2).$ Note that we have added $\one$ to the associated vertex set of every bag, so the tree decomposition always remains valid throughout our algorithm. Given this discussion, we have the following theorem:

\begin{theorem}
	Given an MC with $n$ vertices and treewidth $t$ and an optimal tree decomposition of the MC, the algorithm described in this section computes expected discounted sums from every vertex of the MC in $O(n \cdot t^2).$
\end{theorem}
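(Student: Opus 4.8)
The plan is to mirror the proof of Theorem~\ref{thm:hitting}, replacing the hitting-probability elimination step with the discounted-sum elimination depicted in Figure~\ref{fig:vertex-rm-sum}, and reusing verbatim the vertex-removal order dictated by the tree decomposition. First I would set up the invariant maintained throughout the recursion: (i) the current graph $C$ together with its associated linear system $S_C$, and (ii) a valid tree decomposition $(T, E_T)$ of $C$ in which the special vertex $\one$ belongs to every bag. Lemmas~\ref{lemma:leaf-remove} and~\ref{lemma:vertex-remove} then apply unchanged: at each step we pick a leaf bag $l$; if $V_l \subseteq V_{\bar l}$ we delete $l$ from the tree, and otherwise we pick $u \in V_l \setminus V_{\bar l}$, which by Lemma~\ref{lemma:vertex-remove} has at most $\vert V_l\vert \le t+1$ predecessors and successors, and we eliminate $u$ using the rule of Figure~\ref{fig:vertex-rm-sum} (after first normalizing away parallel edges and a possible self-loop at $u$, as described above). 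Since $u \neq \one$ is never eliminated, the gadget vertex persists throughout.

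The correctness argument is the substitution argument already carried out in the text: eliminating $u$ amounts to substituting the right-hand side of $\equ_u$ for every occurrence of $y_u$ in the remaining equations, which leaves the solution values of all surviving unknowns unchanged, and the parallel-edge and self-loop normalizations are likewise solution-preserving. Hence the final one- or two-vertex system is solved trivially (using $\dissum(\one) = 1$), and then, unwinding the recursion, each previously eliminated $u$ has its value recovered from $\equ_u$ by summing over its $O(t)$ outgoing edges. By induction on the recursion this yields the correct value $\dissum(u)$ for every vertex of the original MC.

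For the running time, the key observation to verify is that no elimination step ever enlarges a bag or violates the tree-decomposition axioms. Every new edge created when eliminating $u \in V_l \setminus V_{\bar l}$ is either an edge $(u', u_i)$ between two vertices of $V_l$, or an edge $(u', \one)$; in the first case the edge is covered by the bag $V_l$, in the second by any bag (all of them contain $\one$), and in neither case does a bag grow. The self-loop normalization adds at most one new edge $(u, \one)$, again covered. Therefore the decomposition stays valid with width at most $t$ at all times, so Lemma~\ref{lemma:vertex-remove} keeps guaranteeing degree $O(t)$ for the eliminated vertex, and each elimination touches $O(t)$ predecessors and $O(t)$ successors for $O(t^2)$ work (the parallel-edge and self-loop fixes are $O(t)$). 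Over the whole run we delete $O(n)$ bags at $O(1)$ each and eliminate $n-1$ vertices at $O(t^2)$ each, plus $O(t)$ per vertex for back-substitution, for a total of $O(n \cdot t^2)$.

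The main obstacle I anticipate is exactly the bookkeeping in the previous paragraph: one must check that the discounted-sum rule, unlike the hitting-probability rule, never needs an edge between two vertices that are not both in the active leaf bag --- the role of $\one$, present in every bag, is precisely to absorb the ``constant'' contributions $\delta(u')(u) \cdot R(u', u)$ that would otherwise have nowhere to go --- and that the self-loop simplification does not secretly reintroduce such an edge. Once this is pinned down, the complexity bound follows from the same counting as in Theorem~\ref{thm:hitting}.
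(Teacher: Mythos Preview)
Your proposal is correct and follows essentially the same approach as the paper: reuse the leaf-bag elimination order from the hitting-probability algorithm, argue correctness by substitution in $S_C$, and observe that all new edges land inside the current leaf bag or go to $\one$ (which sits in every bag), so the decomposition stays valid and each elimination costs $O(t^2)$. The paper's own argument is in fact terser than yours---it simply notes that ``we have added $\one$ to the associated vertex set of every bag, so the tree decomposition always remains valid'' and that ``this algorithm's runtime can be analyzed exactly as before''---so your write-up fills in precisely the bookkeeping the paper leaves implicit. One small inaccuracy: adding $\one$ to every bag raises the width from $t$ to $t+1$, not ``at most $t$'', but this does not affect the $O(n\cdot t^2)$ bound.
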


\subsection{Solving Systems of Equations with Constant-Treewidth Primal Graphs} \label{sec:system}

The ideas used in the previous section can be extended to obtain faster algorithms for solving any linear system whose primal graph has a small treewidth. However, new subtleties arise, given that general linear systems might have no solution or infinitely many solutions. In contrast, the systems $S_C$ discussed in the previous section were guaranteed to have a unique solution. We consider a system $S$ of $m$ linear equations over $n$ real unknowns as input, and assume that its primal graph $G(S)$ has treewidth $t$. 

Our algorithm for solving $S$ is similar to our previous algorithms, and is actually what most students are taught in junior high school. We take an arbitrary unknown $x$ and choose an arbitrary equation $\equ$ in which $x$ appears with a non-zero coefficient. We then rewrite $\equ$ as $x = R_x$, where $R_x$ is a linear expression based on other unknowns. Finally, we replace every occurrence of $x$ in other equations with $R_x$ and solve the resulting smaller system $\overline{S}$. If $\overline{S}$ has no solutions or inifinitely many solutions, then so does $S$. Otherwise, we evaluate $R_x$ in the solution of $\overline{S}$ to get the solution value for $x$. Using this algorithm, we have to remove $O(n)$ unknowns. When removing $x$, we might have to replace an expression of size $O(n)$, i.e.~$R_x$, in $O(m)$ potential other equations where $x$ has appeared. Hence, the overall runtime is $O(n^2 \cdot m).$

Given a tree decomposition $(T, E_T)$ of the primal graph $G(S)$, we choose the unknows in the usual order, i.e.~we always choose an unknown $x$ that appears only in a leaf bag. If $x$ does not appear in any equations, then we can simply remove it and then $S$ is satisfiable iff $\overline{S}$ is satisfiable. Moreover, if $S$ is satisfiable, then it has infinitely many solutions, given that $x$ is not restricted. Otherwise, there is an equation $\equ$ in which $x$ appears with non-zero coefficient, and hence we can rewrite this equation as $x = R_x.$ Note that $x$ has $O(t)$ neighbors in $G(S)$, given that it only appears in a leaf bag and all of its neighbors should also appear in the same bag, hence the length of $R_x$ is $O(t),$ too. 

The problem is that $x$ might have appeared in any of the other $O(m)$ equations. Hence, replacing it with $R_x$ in every equation will lead to a runtime of $O(m \cdot t)$. We repeat this for every unknown, so our total runtime is $O(n \cdot m \cdot t),$ which is not linear. 

The crucial observation is that while $x$ might have appeared in as many as $m$ equations, not all of them are linearly independent. Let $\mathfrak{E}_x$ be the set of equations containing $x$ and $l$ be the leaf bag in which $x$ appears and assume that $V_l = \{x, y_1, \ldots, y_{k-1}\}.$ Then the only unknowns that can appear together with $x$ in an equation are $y_1, \ldots, y_{k-1}.$ In other words, all equations in $\mathfrak{E}_x$ are over $V_l.$ Hence, we can apply the Gram-Schmidt process on $\mathfrak{E}_x$ to remove the unnecessary equations and only keep at most $k$ equations that form an orthogonal basis (or alternatively realize that the system is unsatisfiable). Given that we are operating in dimension $k = O(t)$, this will take $O(t^2 \cdot \vert \mathfrak{E}_x \vert)$ time. See Appendix~\ref{app:pseudo} for a pseudocode. As in previous algorithms, our approach always keeps the tree decomposition valid. Moreover, as argued above, its runtime is $O((n+m) \cdot t^2),$ which is linear in the size of the system. Hence, we have the following theorem:

\begin{theorem} \label{thm:system}
	Given a system of $m$ linear equations over $n$ unknowns, its primal graph, and a tree decomposition of the primal graph with width $t$, our algorithm solves the system in time $O((n+m) \cdot t^2).$
\end{theorem}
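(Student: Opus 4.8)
The plan is to establish correctness and the running-time bound for the Gaussian-elimination-style procedure outlined before the theorem, structured around the three operations it performs on a chosen leaf bag $l$ of the given tree decomposition of $G(S)$: (i) if some unknown $x \in V_l$ occurs in no equation, delete $x$; (ii) if some $x \in V_l$ occurs only in $V_l$ and in at least one equation, eliminate $x$ by substitution; (iii) otherwise $V_l \subseteq V_{\bar{l}}$ and we discard the leaf $l$. The first thing I would record is the structural fact, which is just Lemma~\ref{lemma:vertex-remove} read on the primal graph: in case (ii) every equation that contains $x$ is an equation over $V_l$ only, and $|V_l| \le t+1$. Case (iii) is the primal-graph analogue of Lemma~\ref{lemma:leaf-remove}, so discarding $l$ leaves a valid width-$t$ tree decomposition of the (unchanged) current system; and case (i) is immediate: an unconstrained unknown can be dropped, $S$ is satisfiable iff $\overline{S}$ is, and if it is then it has infinitely many solutions (which we record).

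The substance is case (ii). Let $\mathfrak{E}_x$ be the set of equations containing $x$; by the structural fact these all lie in the coordinate space indexed by $V_l$ together with the constant term, a space of dimension at most $t+1$. Applying the Gram-Schmidt process to $\mathfrak{E}_x$ either exposes a combination of the form $0 = c$ with $c \ne 0$ --- in which case $S$ itself is unsatisfiable, since $\mathfrak{E}_x \subseteq S$ --- or produces an orthogonal basis $B$ with $|B| \le t$ (a consistent linear system over this many unknowns has at most that many linearly independent equations) such that $B$ and $\mathfrak{E}_x$ have the same solution set; hence replacing $\mathfrak{E}_x$ by $B$ inside $S$ preserves the solution set. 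Some $b \in B$ must have a nonzero $x$-coefficient, for otherwise $\mathrm{span}(B) = \mathrm{span}(\mathfrak{E}_x)$ would contain no equation mentioning $x$, contradicting the definition of $\mathfrak{E}_x$. Rewrite $b$ as $x = R_x$, with $R_x$ a linear form over $V_l \setminus \{x\}$ of length $O(t)$, and substitute $R_x$ for $x$ in the at most $t-1$ remaining equations of $B$ --- the only equations still mentioning $x$. This yields $\overline{S}$ over $V \setminus \{x\}$; the map $x \mapsto R_x$ is a bijection between solutions of $\overline{S}$ and of $S$, so $\overline{S}$ has no / unique / infinitely many solutions exactly when $S$ does, and in the unique case we recover $x$ by evaluating $R_x$. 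I would also verify the running invariant that the tree decomposition stays valid: $x$ is removed only from the unique bag $l$ containing it, every newly created equation is over $V_l \setminus \{x\}$ (so every co-occurring pair of surviving unknowns is still covered, now by the shrunk bag $l$), bag-connectivity of surviving unknowns is unaffected, and no bag ever grows, so the width remains $\le t$. The base case (no unknowns left) reports a unique solution, reconstructed by back-substituting the recorded identities $x = R_x$ in reverse elimination order, unless a free variable (case (i)) or an inconsistency (case (ii)) was encountered.

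For the running time, I would assume the tree decomposition has $O(n)$ bags (reduce it in linear time if needed), so case (iii) fires $O(n)$ times at cost $O(t)$ and case (i) $O(n)$ times at cost $O(1)$. Case (ii) fires at most $n$ times; a single invocation costs $O(t^2 \cdot |\mathfrak{E}_x|)$ for the Gram-Schmidt step in dimension $O(t)$, plus $O(t^2)$ for substituting a length-$O(t)$ form into $O(t)$ equations of length $O(t)$, plus $O(t)$ bookkeeping, and back-substitution adds $O(t)$ per eliminated variable. It thus remains to bound $\sum_x |\mathfrak{E}_x|$, which I would do by amortization: regard each equation as a token; eliminating $x$ consumes the tokens of $\mathfrak{E}_x$ and creates at most $|B| - 1 = O(t)$ replacement tokens (the residual equations over $V_l \setminus \{x\}$), and each token is consumed at most once over the whole run, so with at most $n$ eliminations a careful charging gives $\sum_x |\mathfrak{E}_x| = O(n+m)$. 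Summing all contributions yields the claimed $O((n+m) \cdot t^2)$.

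\textbf{The main obstacle.} The delicate part is exactly the interplay between the Gram-Schmidt shortcut and this amortized counting: one must check that collapsing $\mathfrak{E}_x$ to an orthogonal basis $B$ neither loses nor spuriously invents inconsistency of the \emph{global} system, and that each elimination really produces only $O(t)$ residual equations, all over $V_l \setminus \{x\}$ --- this is what simultaneously keeps the tree decomposition valid and the token count linear, whereas the naive substitution of $R_x$ into every equation that ever contained $x$ would cost $\Theta(n \cdot m \cdot t)$, as remarked in the text. Everything else (the two easy cases, the bijection of solution sets under a single substitution, and the bag-size invariant) is routine once the elimination order supplied by the tree decomposition is in place.
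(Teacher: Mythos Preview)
Your proposal is correct and follows the same route as the paper: select unknowns from leaf bags via the analogues of Lemmas~\ref{lemma:leaf-remove} and~\ref{lemma:vertex-remove}, collapse $\mathfrak{E}_x$ by Gram--Schmidt in dimension $O(t)$ before substituting, and amortize the total Gram--Schmidt cost by a token argument --- exactly what Section~\ref{sec:system} sketches, though you spell out the correctness of the Gram--Schmidt step and the solution-set bijection more carefully than the paper does. One small point on the amortization: your own token count actually yields $\sum_x |\mathfrak{E}_x| = O(m + n\cdot t)$ (each of the $\le n$ eliminations creates $O(t)$ residual tokens), not $O(n+m)$, so the bound you establish is $O(m\,t^2 + n\,t^3)$; the paper's text does not justify the tighter $O((n+m)\cdot t^2)$ either, so your argument matches it.
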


The algorithm can easily be extended to find a basis for the solution set.

\subsection{Computing Expected Mean Payoffs in Constant Treewidth} \label{sec:mp}

\paragraph{Strongly Connected Components.} Given an MC $C=(V, E, \delta),$ a \emph{Strongly Connected Component} (SCC) is a maximal subset $A \subseteq V$, such that for every pair of vertices $u, v \in A$, there is a path from $u$ to $v$ in $C$. An SCC $B$ is called a \emph{Bottom Strongly Connected Component} (BSCC) if no other SCC is reachable from $B$. It is well-known that every vertex belongs to a unique SCC and that there is a linear-time algorithm that computes the SCCs and BSCCs of any given MC~\cite{leiserson2001introduction}. An MC is called \emph{ergodic} if its vertex set consists of only a single BSCC.

\paragraph{Limiting Distribution~\textnormal{\cite{norris1998markov}}.} Given an ergodic MC $C = (B, E, \delta)$ with a single BSCC $B$ and an arbitrary vertex $u \in B$, we define the \emph{limiting distribution} $\ld$ over $B$ as follows:
$
\ld(v) := \lim_{n \rightarrow \infty} \expv_u \left[ \frac{1}{n} \cdot \vert \{ i~\vert~0 \leq i < n \wedge \pi_i = v \} \vert \right],
$
where $\pi$ is a random walk beginning at $u$. Informally, $\ld(v)$ is the fraction of time that we are expected to spend in vertex $v$, when we start a random walk in $C$. Note that due to ergodicity, the starting vertex of the random walk does not matter. We can similarly define a limiting distribution $\ld^E$ over the edges of $C$ by letting $\ld^E(u, v) := \ld(u) \cdot \delta(u)(v)$.

From the definition above, it is easy to see that the mean payoff value $\expmeanpayoff(u)$ is the same for every vertex $u \in B$ of the ergodic MC. More specifically, we have $\expmeanpayoff(u) = \sum_{(v_1, v_2) \in E}  R(v_1, v_2) \cdot \ld^E(v_1, v_2).$ Therefore, computing the $\expmeanpayoff$ values is reduced to computing the limiting distribution.

Now consider a general MC $C = (V, E, \delta)$ and a vertex $u \in V.$ If $u$ is in a BSCC $B,$ then any path starting from $u$ will never leave $B.$ Therefore, $\expmeanpayoff_V(u) = \expmeanpayoff_B(u).$ On the other hand, if $u$ is in a non-bottom SCC $A,$ then the random walk beginning from $u$ will eventually reach a BSCC almost-surely (with probability $1$). Let $B_1, B_2, \ldots$ be the BSCCs of $C$ and $b_i \in B_i.$ Hence, given that we can ignore a finite prefix when computing mean payoffs, the expected mean payoff from $u$ is $$\expmeanpayoff(u) = \sum_{i} \hitprob(u, B_i) \cdot \expmeanpayoff(b_i) = \sum_{i} \hitprob(u, b_i) \cdot \expmeanpayoff(b_i).$$ 
Every vertex in $B_i$ has the same expected mean payoff and  will be reached from every other vertex in $B_i$ with probability $1$, i.e.~hitting probabilities between pairs of vertices in the same BSCC $B_i$ are always $1$, hence the choice of $b_i$ is arbitrary.

We use the two observations above to compute expected mean payoffs in a given MC $C$. Algorithm~\ref{algo:main-mp} summarizes our approach. Hence, the problem is reduced to computing $\ld$ (Line~5) and hitting probabilities (Lines~11--12). We now explain how we handle each of these two subproblems.

\SetAlFnt{\normalsize}
\begin{algorithm}
	\newcommand{\sol}{\textsl{solution}}
	\DontPrintSemicolon
	\SetKw{True}{true}
	\SetKw{nosol}{Unsatisfiable}
	\SetKw{infsol}{Underdetermined}
	\SetKw{Break}{break}
	\SetKwFunction{FMain}{ComputeExpMP}
	\SetKwFunction{GS}{Gramm-Schmidt}
	\SetKwBlock{Repeat}{repeat}{}
\SetKwProg{Fn}{Function}{:}{}
\Fn{\FMain{$C=(V, E, \delta)$}}{
		$B_1, B_2, \ldots \leftarrow $ BSCCs of $C$\;
		Choose an arbitrary $b_i$ from each $B_i$\;
		\ForEach{$B_i$}
		{
			Compute $\ld$ for $(B_i, E \cap (B_i \times B_i), \delta)$\;
			\ForEach{$(v_1, v_2) \in E \cap (B_i \times B_i)$}
			{
				$\ld^E(v_1, v_2) \leftarrow \ld(v_1) \cdot \delta(v_1)(v_2)$
			}
			$x \leftarrow \sum_{(v_1, v_2) \in E \cap (B_i \cdot B_i)}  R(v_1, v_2) \cdot \ld^E(v_1, v_2)$\;
			\ForEach{$u \in B_i$}
			{
				$\expmeanpayoff(u) \leftarrow x$\;
			}
		}
		\ForEach{$u \in V \setminus \bigcup B_i$}
		{
			$\expmeanpayoff(u) \leftarrow \sum_{i} \hitprob(u, b_i) \cdot \expmeanpayoff(b_i)$
		}
	}
	\vspace{2mm}
	\caption{Computing expected mean payoffs in a given MC $C$.}
	\label{algo:main-mp}
\end{algorithm}

\paragraph{Computing Limiting Distribution of an Ergodic MC.} Let $C = (B, E, \delta)$ be an ergodic MC. We define the linear system $S_C$ as follows:
\begin{compactitem}
	\item We add a variable $x_u$ for each vertex $u \in B$.
	\item For each vertex $u \in B$ with \emph{predecessors} $u_1, u_2, \ldots, u_k$, we add a constraint $x_u = \sum_{i=1}^k x_{u_i} \cdot \delta(u_i)(u).$
	\item We add the constraint $\sum_{u \in B} x_u = 1.$
\end{compactitem}
It is well-known that $S_C$ has a unique solution in which the value of each $x_u$ is equal to $\ld(u)$~\cite{norris1998markov}. Unfortunately, the last constraint includes all of the variables in the system and hence the primal graph of our system does not have constant treewidth. However, this is a minor restriction. We can consider the system $S'_C$ obtained by ignoring the last constraint. This system is homogeneous and its primal graph is the isomorphic to $(V, E)$ and has treewidth $t$. Hence, we can use the algorithm of Section~\ref{sec:system} to find an arbitrary solution to $S'_C.$ We can then scale all the values in our solution to satisfy the constraint $\sum_{u \in B} x_u = 1,$ hence obtaining the unique solution of $S_C.$ Therefore, Line~5 of Algorithm~\ref{algo:main-mp} takes $O(\vert B_i \vert \cdot t^2)$ time according to Theorem~\ref{thm:system}.

\paragraph{Computing Expected Mean Payoff for non-BSCC vertices.} We can compute all the values of $\expmeanpayoff(u)$ for $u \in V \setminus \bigcup B_i$ (Lines~11--12) with \emph{a single} call to our algorithm for hitting probabilities (Algorithm~\ref{algo:vertex-rem-td}, Section~\ref{sec:hitprobtd}). Note that Algorithm~\ref{algo:vertex-rem-td} does not rely on the premise that the function $\delta$ can only have values between $0$ and $1$. Hence, we can set all the $b_i$'s as targets, but when merging them to a single target $\target$, we set $\delta(b_i)(\target) = \expmeanpayoff(b_i),$ which was computed in Line~10. This ensures that the value computed for $\expmeanpayoff(u)$ is exactly the RHS of Line~11 in Algorithm~\ref{algo:main-mp}. Using this trick, the runtime of Lines~10--11 of our algorithm is $O(n \cdot t^2)$ as per Theorem~\ref{thm:hitting}. 

\begin{theorem} \label{thm:mp}
	Given an MC with $n$ vertices and treewidth $t$ and an optimal tree decomposition, Algorithm~\ref{algo:main-mp} computes expected mean payoffs from every vertex in $O(n \cdot t^2).$
\end{theorem}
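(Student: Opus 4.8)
\noindent\emph{Proof idea.} The plan is to establish three things in sequence: (i) the two case-formulas for $\expmeanpayoff$ used by Algorithm~\ref{algo:main-mp} are correct; (ii) the two subroutines it calls — computing $\ld$ on each BSCC via the basis-returning variant of Theorem~\ref{thm:system}, and the single modified call to Algorithm~\ref{algo:vertex-rem-td} — are each run on a valid tree decomposition of width $O(t)$; and (iii) the running times telescope to $O(n\cdot t^2)$.

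For (i), partition $V$ into the BSCCs $B_1,B_2,\dots$ and the non-bottom vertices. Since no edge leaves a BSCC, each induced MC $C_i := (B_i, E\cap(B_i\times B_i), \delta)$ is a genuine ergodic MC, so $\ld$ is well defined on it and $\expmeanpayoff(u)=\sum_{(v_1,v_2)\in E\cap(B_i\times B_i)} R(v_1,v_2)\cdot\ld^E(v_1,v_2)$ for all $u\in B_i$ — the value assigned in Lines~5--10. The homogeneous balance-equation system $S'_{C_i}$ has a one-dimensional solution space (uniqueness of the stationary distribution of an irreducible chain), so the basis returned by Theorem~\ref{thm:system} is a single vector, which rescaled to sum to $1$ gives $\ld$. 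For a non-bottom $u$, a random walk reaches a unique BSCC and then $b_i$ within it almost surely, so $\sum_i\hitprob(u,b_i)=1$ and $\expmeanpayoff(u)=\sum_i\hitprob(u,b_i)\cdot\expmeanpayoff(b_i)$. One then checks that Lines~11--12 return this quantity: after redirecting each $b_i$ to its single out-edge $(b_i,\target)$ with weight $\delta(b_i)(\target)=\expmeanpayoff(b_i)$ and fixing $x_\target=1$, the system solved by Algorithm~\ref{algo:vertex-rem-td} is uniquely solvable (every vertex still reaches $\target$), and substituting $x_u=\sum_i\hitprob(u,b_i)\cdot\expmeanpayoff(b_i)$ satisfies every equation by a one-step expansion — using $\sum_j\delta(u)(u_j)=1$ for non-bottom $u$, $\hitprob(b_i,b_j)=[i=j]$, and the remark in the text that Algorithm~\ref{algo:vertex-rem-td} and Theorem~\ref{thm:hitting} do not need $\delta$ to be a genuine probability distribution.

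For (ii), take the given width-$t$ tree decomposition $(T,E_T)$ of $C$. Intersecting every bag with $B_i$ yields a tree decomposition of $C_i$ of width $\le t$ (vertex- and edge-coverage and the connectedness of each vertex's subtree are inherited), which is at the same time a tree decomposition of the primal graph of $S'_{C_i}$; this is the input to Theorem~\ref{thm:system}. For the single hitting call, add $\target$ to every bag, so the width becomes $\le t+1=O(t)$; the only new adjacency is $(b_i,\target)$, covered because $\target$ lies in every bag, and deleting the original out-edges of the $b_i$ cannot invalidate a tree decomposition.

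For (iii), the BSCC computation costs $O(n+m)$; since $S'_{C_i}$ has $|B_i|$ equations over $|B_i|$ unknowns, Theorem~\ref{thm:system} computes $\ld$ on $B_i$ in $O(|B_i|\cdot t^2)$, and $\sum_i|B_i|\le n$ makes Lines~4--10 cost $O(n\cdot t^2 + m)$; the one call to Algorithm~\ref{algo:vertex-rem-td} costs $O(n\cdot t^2)$ by Theorem~\ref{thm:hitting}; and since $m=O(n\cdot t)$ for a graph of treewidth $t$, the total is $O(n\cdot t^2)$. I expect the last part of (i) to be the main obstacle: verifying that replacing genuine probabilities by the possibly out-of-range values $\expmeanpayoff(b_i)$ still leaves a uniquely solvable system whose solution is exactly $\sum_i\hitprob(u,b_i)\cdot\expmeanpayoff(b_i)$, and that the surgery on the $b_i$ keeps the tree decomposition valid; the BSCC bookkeeping and the rescaling of $\ld$ are routine by comparison.
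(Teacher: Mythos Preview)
Your proposal is correct and follows essentially the same approach as the paper: correctness via the standard BSCC/non-BSCC decomposition of mean payoff, computing $\ld$ on each BSCC by solving the homogeneous balance system $S'_{C_i}$ with Theorem~\ref{thm:system} and rescaling, and handling non-BSCC vertices with a single call to Algorithm~\ref{algo:vertex-rem-td} after redirecting each $b_i$ to $\target$ with weight $\expmeanpayoff(b_i)$. Your write-up is in fact more careful than the paper's in two places --- you explicitly construct the width-$t$ tree decomposition for each $B_i$ by intersecting bags, and you invoke $m=O(n\cdot t)$ to absorb the $O(n+m)$ SCC and edge-scan costs --- and you correctly flag the one genuinely delicate point (unique solvability of the modified system when $\delta(b_i)(\target)=\expmeanpayoff(b_i)$ need not lie in $[0,1]$), which the paper simply asserts.
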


\paragraph{Remark.}
	In SI over MDPs with mean payoff objectives, one also needs to compute additional values, called \emph{potentials} or \emph{biases}~\cite{Kretinsky17,puterman2014markov}. However, this computation is classically reduced to solving a system of linear equations whose primal graph is the MDP. Hence, the algorithm of Section~\ref{sec:system} can be applied, and our improvements for computing mean payoff in MCs extend to MDPs.

\section{Experimental Results} \label{sec:exp}

In this section, we report on a C/C++ implementation of our algorithms  and provide a performance comparison with previous approaches in the literature.

\paragraph{Compared Approaches.} We consider the hitting probability and discounted sum problems for MCs and MDPs.
In the case of MCs, we directly use our algorithms from Section~\ref{sec:hitprobtd} and Section~\ref{subsec:discounted}.
For MDPs, we use strategy iteration, where we use the above algorithms for the strategy evaluation step in each iteration.
We compare our approach with the following alternatives:
\begin{itemize}
	\item \emph{Classical Approaches.} In case of MCs, we compare against an implementation of Gaussian elimination (\textsf{Gauss}) taken from~\cite{gauss}. For MDPs, we consider our own implementation of  value iteration (\textsf{VI}) and strategy iteration (\textsf{SI}).
	\item \emph{Numerical and Industrial Optimizers.} We
	 use \textsf{Matlab}~\cite{MATLAB10} and \textsf{Gurobi}~\cite{gurobi} to solve systems of linear equalities corresponding to MCs. For MDPs, we use \textsf{Matlab}~\cite{MATLAB10}, \textsf{Gurobi}~\cite{gurobi} and \textsf{lpsolve}~\cite{lpsolve} to handle the corresponding LPs.
	
	\item \emph{Probabilistic Model Checkers.} The well-known model checkers \textsf{Storm}~\cite{storm} and \textsf{Prism}~\cite{Kwiatkowska11} have standard procedures for computing hitting probabilities, but not for discounted sums. We therefore compare our runtimes on hitting probability instances with their runtimes. 
\end{itemize}

\smallskip

Despite the fact that treewidth has been extensively studied in verification and model checking~\cite{Obsrzalek03,Ferrara05}, including for the analysis of MDPs~\cite{Chatterjee13}, to the best of our knowledge there are no benchmark suites consisting of low-treewidth MCs/MDPs. Previous works such as~\cite{Chatterjee13} do not provide any experimental results.

\paragraph{Motivation for Benchmarks.}
The main motivation to study MCs/MDPs with small treewidth is that they occur naturally 
in static program analysis, where 
a key algorithmic problem is reachability on the CFGs,
e.g.~data-flow analyses in frameworks such as IFDS are reduced to reachability~\cite{ifds,ifds2}.
Moreover, probability annotations of the CFG are useful in many contexts such as (i)~in probabilistic programs where the branches are probabilistic~\cite{rajamani};
or
(ii)~when branch-profiling information is available that assigns probabilities to branch execution~\cite{ball1993branch,smith1998study}.
If we consider CFGs where all branches are deterministic or probabilistic, then we have MCs; 
and if there are also non-deterministic branches, then we have MDPs.
In both cases, the reachability analysis in CFGs with probability annotation corresponds to the computation of
hitting probabilities. Therefore, hitting probabilities can be used to answer questions like ``given the branch profiles, compute the probability that a given pointer is null in some instruction''.
Additionally, \cite{de2003discounting} shows how discounted-sum objectives are relevant in the analysis of systems,
e.g.~with discounted-sum reachability we can model that a later bug is better than an earlier one.
It is well-established that structured programs have small treewidth, both theoretically~\cite{thorup1998all} and 
experimentally~\cite{gustedt2002treewidth,Krause19,Burgstaller04,chatterjee2019treewidth}.
Thus, quantitative analysis of MCs/MDPs with small treewdith is a relevant problem 
in program analysis, and we consider benchmarks from this domain.

\paragraph{Benchmarks.} Given the points above, we used CFGs of the $40$ Java programs from the \textsf{DaCapo} suite~\cite{dacapo} as our benchmarks. They come in a variety of sizes, having between $33$ and $103918$ vertices and transitions. To obtain MDPs, we randomly (with probability $1/2$) turned each vertex into either a Player~$1$ vertex or a probabilistic one. Moreover, we assigned random probabilities to each outgoing edge of a probabilistic vertex. To obtain MCs, we did the same, except that we marked all vertices as probabilistic. For the hitting probabilities problem, we chose one random vertex from each connected component of the control flow graphs as a target. In case of discounted sum, we uniformly chose a discount factor between $0$ and $1$ for each instance, and also assigned random integral rewards between $-1000$ to $1000$ to each edge. Finally, we used \textsf{JTDec}~\cite{jtdec} to compute tree decompositions for our instances. In each case the width of the obtained decomposition was no more than $9$. See Appendix~\ref{app:exp} for a detailed overview of the 40 benchmarks used in our experimental results.

\paragraph{Results.} 
The runtimes are shown in Figures~\ref{fig:mc-hp}--\ref{fig:mdp-ds}. In each case, the benchmarks are sorted by their size. Note that the $y$-axes in these figures are in a \emph{logarithmic scale}.  For example, Figure~\ref{fig:mc-hp} corresponds to our experimental results for computing hitting probabilities in MCs. In this case, \textsf{Prism} is the slowest tool by far. On the other side of the spectrum, our approach beats every other method by one or more orders of magnitude. The gap is more apparent in case of MDPs (Figures~\ref{fig:mdp-hp}--\ref{fig:mdp-ds}).
Overall, we see that the new algorithms introduced in this work consistently outperform both existing practical approaches like \textsf{VI} and \textsf{SI}, and highly optimized solvers and model checkers like \textsf{Gurobi}, \textsf{Prism} and \textsf{Storm}, by one or more orders of magnitude.
Hence, the theoretical improvements are also realized in practice. See Appendix~\ref{app:exp} for detailed tables containing raw numbers.

\begin{figure}[H]
		\hspace{-1cm}
		\includegraphics[keepaspectratio,width=1.1\linewidth]{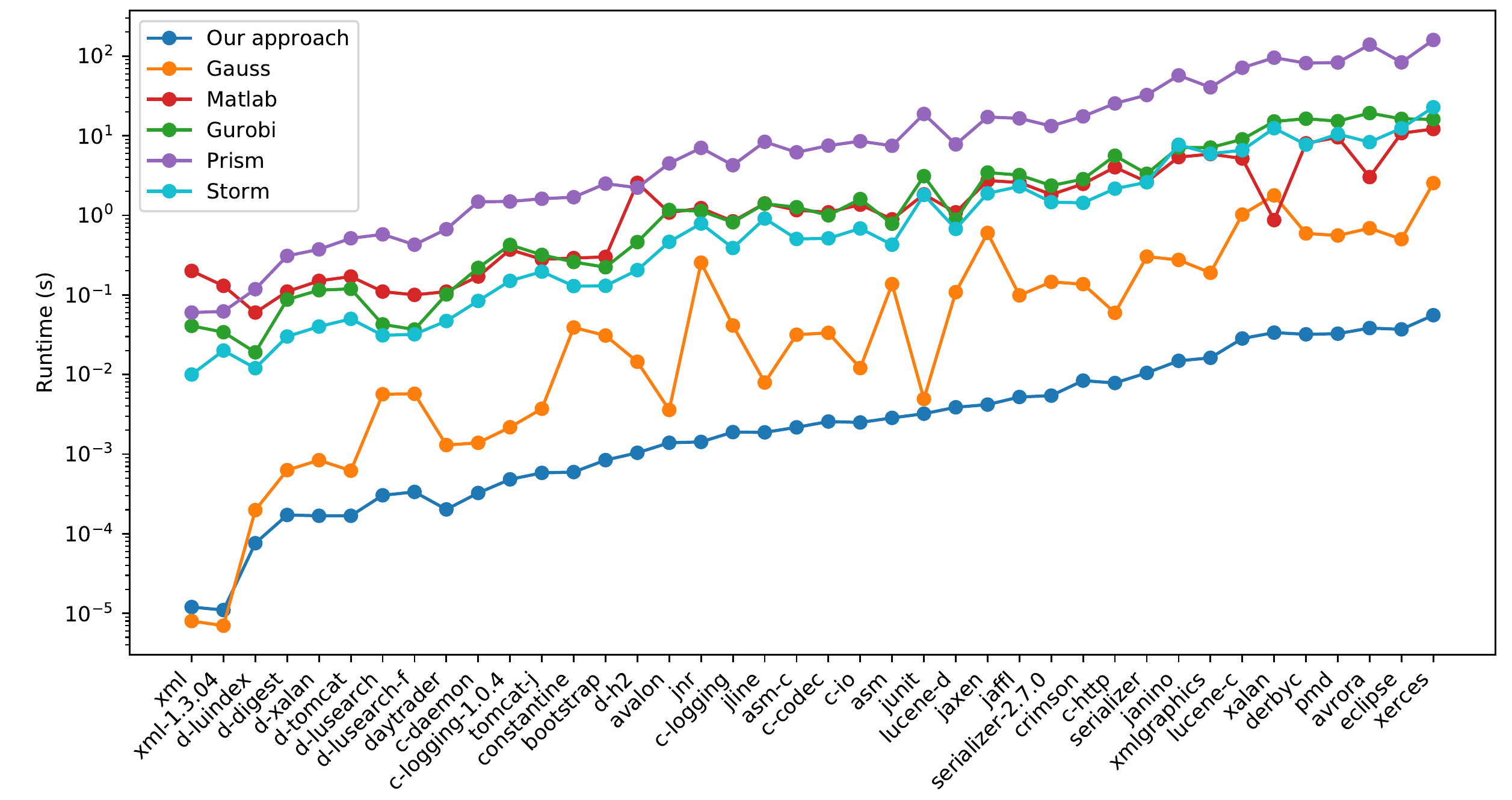}
\caption{Experimental Results for Computing Hitting Probabilities in MCs.}
\label{fig:mc-hp}
\end{figure}

\begin{figure}[H]
		\hspace{-1cm}
		\includegraphics[keepaspectratio,width=1.1\linewidth]{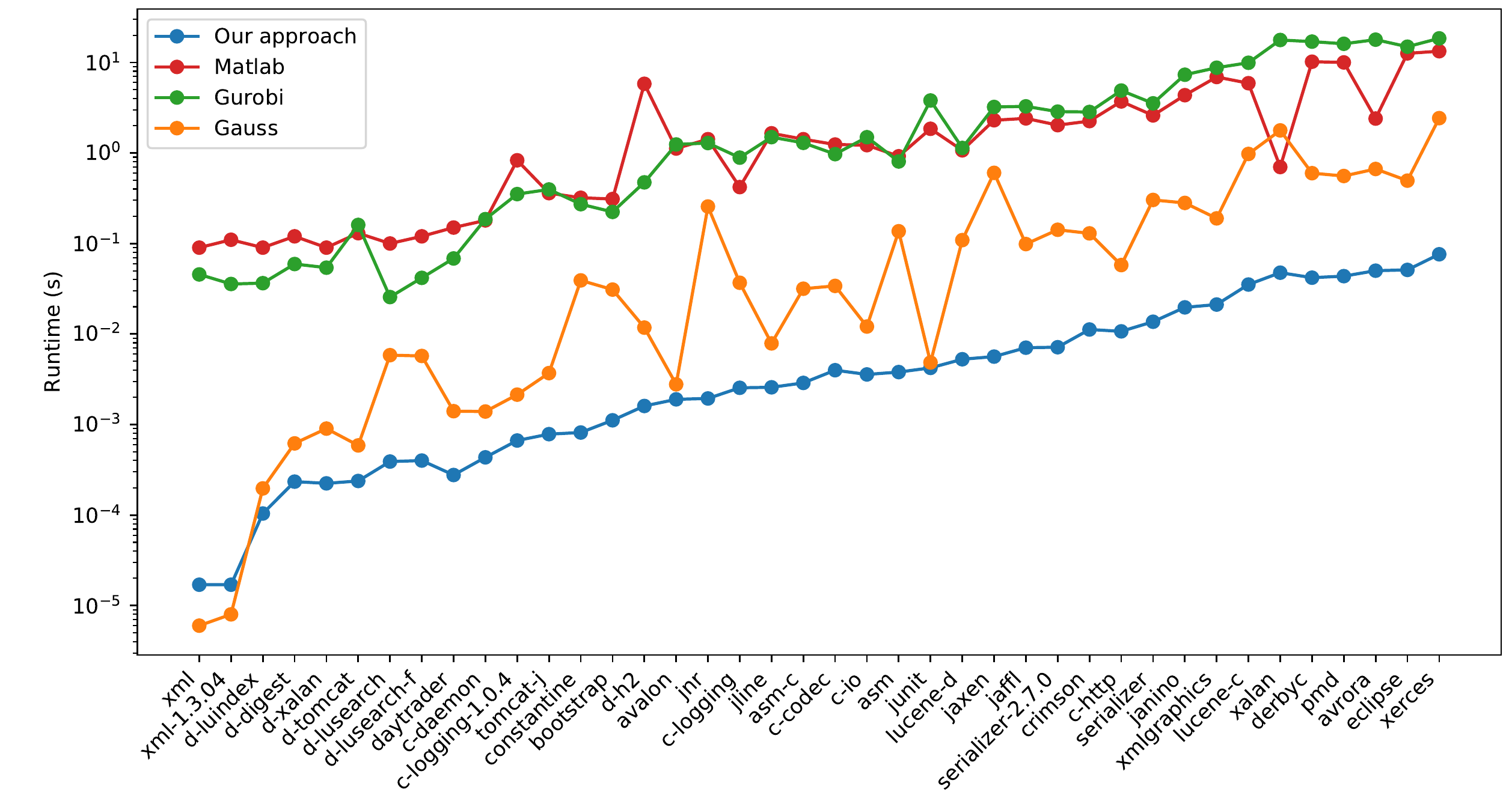}
\caption{Experimental Results for Computing Expected Discounted Sums in MCs.}
	\label{fig:mc-ds}
\end{figure}

\begin{figure}[H]
		\hspace{-1cm}
		\includegraphics[keepaspectratio,width=1.1\linewidth]{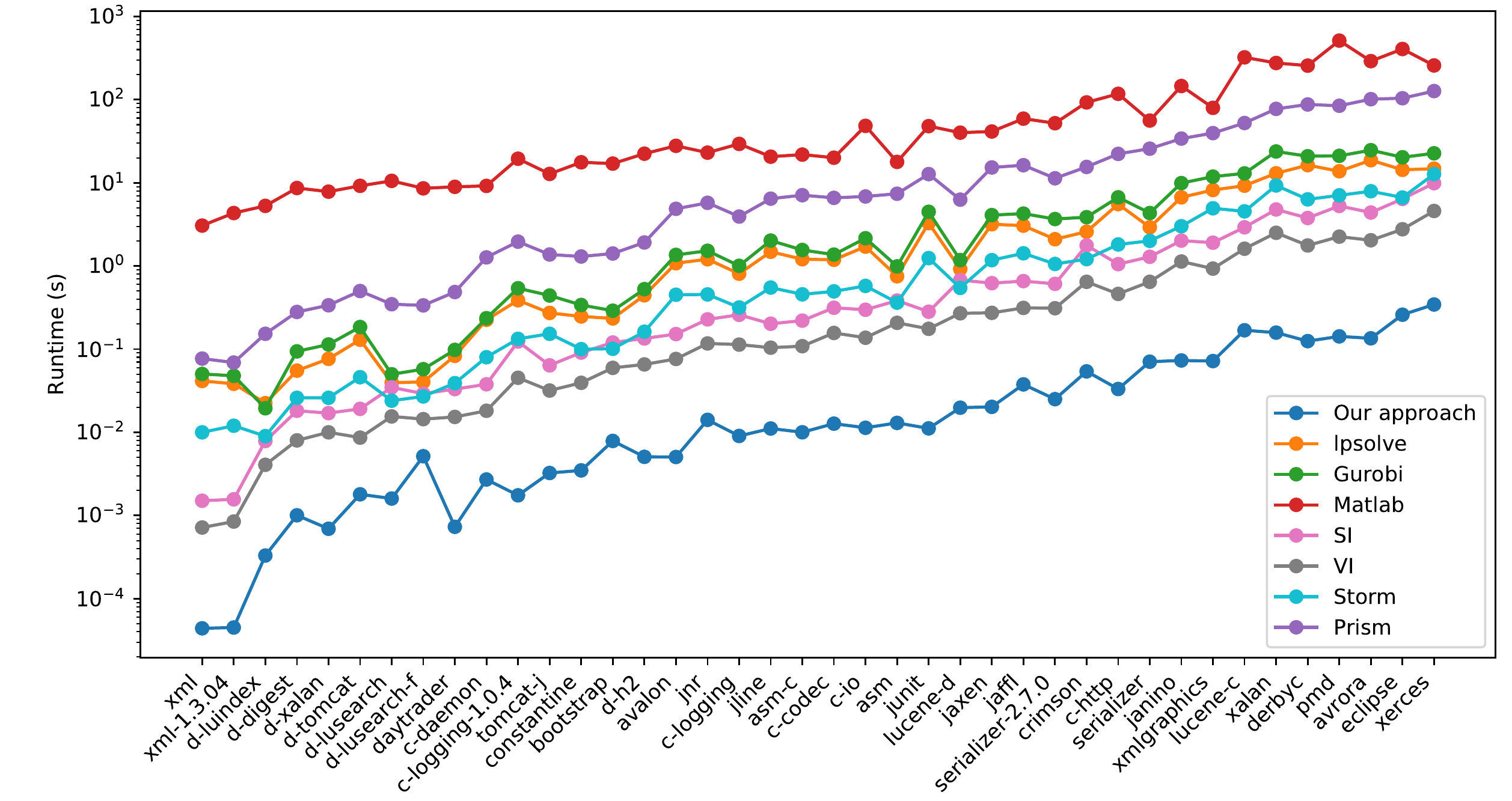}
	\caption{Experimental Results for Computing Hitting Probabilities in MDPs.}
	\label{fig:mdp-hp}
\end{figure}

\begin{figure}[H]
		\hspace{-1cm}
		\includegraphics[keepaspectratio,width=1.1\linewidth]{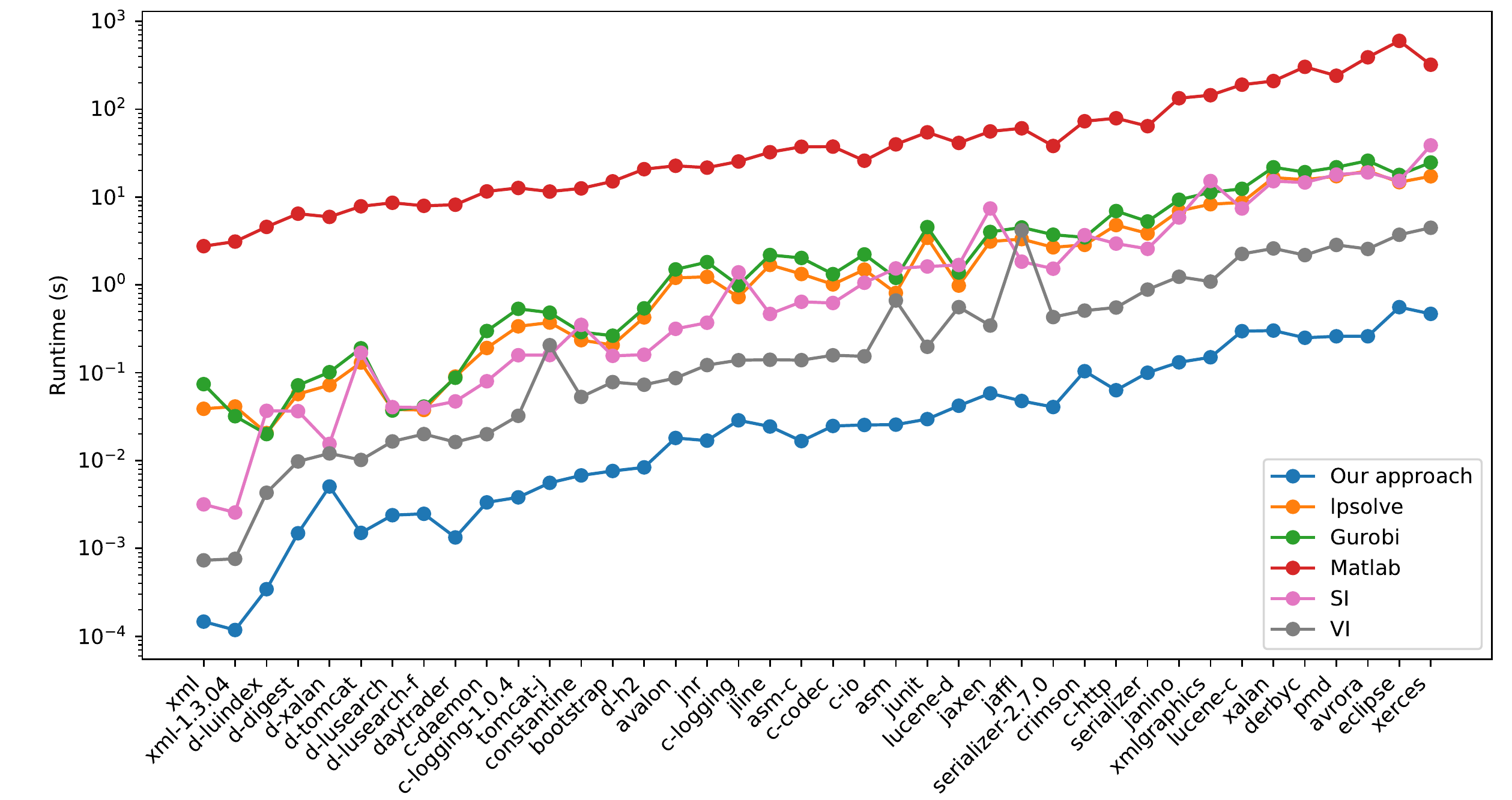}
	\caption{Experimental Results for Computing Expected Discounted Sums in MDPs.}
	\label{fig:mdp-ds}
\end{figure}

\newpage
\bibliographystyle{splncs04}
\bibliography{refs}

\begin{thebibliography}{10}
\providecommand{\url}[1]{\texttt{#1}}
\providecommand{\urlprefix}{URL }
\providecommand{\doi}[1]{https://doi.org/#1}

\bibitem{gauss}
Mathematics source library, {C} \& {ASM} (2004), \url{http://mymathlib.com/}

\bibitem{Abate16}
Abate, A., {\v{C}}e{\v{s}}ka, M., Kwiatkowska, M.: Approximate policy iteration
  for {Markov} decision processes via quantitative adaptive aggregations. In:
  ATVA 2016 (2016)

\bibitem{arnborg1987complexity}
Arnborg, S., Corneil, D.G., Proskurowski, A.: Complexity of finding embeddings
  in a k-tree. SIAM Journal on Algebraic Discrete Methods  \textbf{8}(2),
  277--284 (1987)

\bibitem{Ashok17}
Ashok, P., Chatterjee, K., Daca, P., K{\v{r}}et{\'i}nsk{\'y}, J., Meggendorfer,
  T.: Value iteration for long-run average reward in {Markov} decision
  processes. In: CAV 2017. pp. 201--221 (2017)

\bibitem{baier2017ensuring}
Baier, C., Klein, J., Leuschner, L., Parker, D., Wunderlich, S.: Ensuring the
  reliability of your model checker: interval iteration for markov decision
  processes. In: CAV. pp. 160--180 (2017)

\bibitem{ball1993branch}
Ball, T., Larus, J.R.: Branch prediction for free. ACM SIGPLAN Notices
  \textbf{28}(6),  300--313 (1993)

\bibitem{bellman1957markovian}
Bellman, R.: A {M}arkovian decision process. Journal of mathematics and
  mechanics pp. 679--684 (1957)

\bibitem{Bellman10}
Bellman, R.: Dynamic Programming (2010)

\bibitem{Benedikt13}
Benedikt, M., Lenhardt, R., Worrell, J.: {LTL} model checking of interval
  {Markov} chains. In: TACAS 2013. pp. 32--46 (2013)

\bibitem{lpsolve}
Berkelaar, M., Eikland, K., Notebaert, P.: lpsolve : Open source
  (Mixed-Integer) Linear Programming system

\bibitem{bertsekas1995dynamic}
Bertsekas, D.: Dynamic programming and optimal control. Athena Scientific
  (1995)

\bibitem{Bianco95}
Bianco, A., de~Alfaro, L.: Model checking of probabilistic and nondeterministic
  systems. In: FSTTCS 1995. pp. 499--513 (1995)

\bibitem{dacapo}
Blackburn, S.M., Garner, R., Hoffman, C., Khan, A.M., McKinley, K.S., Bentzur,
  R., Diwan, A., Feinberg, D., Frampton, D., Guyer, S.Z., Hirzel, M., Hosking,
  A., Jump, M., Lee, H., Moss, J.E.B., Phansalkar, A., Stefanovi\'{c}, D.,
  {VanDrunen}, T., von Dincklage, D., Wiedermann, B.: The {DaCapo} benchmarks:
  {J}ava benchmarking development and analysis. In: OOPSLA 2006. pp. 169--190
  (2006)

\bibitem{ifds2}
Bodden, E.: Inter-procedural data-flow analysis with ifds/ide and soot. In:
  SOAP. pp.~3--8 (2012)

\bibitem{Bodlaende98}
Bodlaender, H., Gustedt, J., Telle, J.A.: Linear-time register allocation for a
  fixed number of registers. SODA 1998 (1998)

\bibitem{bodlaender1994tourist}
Bodlaender, H.L.: A tourist guide through treewidth. Acta cybernetica
  \textbf{11}(1-2), ~1 (1994)

\bibitem{bodlaender1996linear}
Bodlaender, H.L.: A linear-time algorithm for finding tree-decompositions of
  small treewidth. SIAM Journal on computing  \textbf{25}(6),  1305--1317
  (1996)

\bibitem{Brazdil14}
Br{\'a}zdil, T., Chatterjee, K., Chmel{\'i}k, M., Forejt, V.,
  K{\v{r}}et{\'i}nsk{\'y}, J., Kwiatkowska, M., Parker, D., Ujma, M.:
  Verification of {Markov} decision processes using learning algorithms. In:
  ATVA 2014. pp. 98--114 (2014)

\bibitem{Burgstaller04}
Burgstaller, B., Blieberger, J., Scholz, B.: On the tree width of {A}da
  programs. In: Ada-Europe 2004. pp. 78--90 (2004)

\bibitem{Chatterjee18c}
Chatterjee, K., Choudhary, B., Pavlogiannis, A.: Optimal dyck reachability for
  data-dependence and alias analysis. In: POPL 2017. pp. 30:1--30:30 (2017)

\bibitem{chatterjee2019treewidth}
Chatterjee, K., Goharshady, A., Goharshady, E.: The treewidth of smart
  contracts. In: SAC 2019 (2019)

\bibitem{toplas}
Chatterjee, K., Goharshady, A.K., Goyal, P., Ibsen-Jensen, R., Pavlogiannis,
  A.: Faster algorithms for dynamic algebraic queries in basic {RSM}s with
  constant treewidth. ACM Transactions on Programming Languages and Systems
  (TOPLAS)  (2019)

\bibitem{jtdec}
Chatterjee, K., Goharshady, A.K., Pavlogiannis, A.: {JTDec}: A tool for tree
  decompositions in soot. In: ATVA. pp. 59--66 (2017)

\bibitem{Chatterjee18b}
Chatterjee, K., Henzinger, M., Loitzenbauer, V., Oraee, S., Toman, V.: Symbolic
  algorithms for graphs and {Markov} decision processes with fairness
  objectives. In: CAV 2018 (2018)

\bibitem{chatterjee2008valueiter}
Chatterjee, K., Henzinger, T.A.: Value iteration. In: 25 Years of Model
  Checking, pp. 107--138 (2008)

\bibitem{Chatterjee10}
Chatterjee, K., Henzinger, T.A., Jobstmann, B., Singh, R.: Measuring and
  synthesizing systems in probabilistic environments. In: CAV 2010. pp.
  380--395 (2010)

\bibitem{CIPG15}
Chatterjee, K., Ibsen{-}Jensen, R., Pavlogiannis, A., Goyal, P.: Faster
  algorithms for algebraic path properties in recursive state machines with
  constant treewidth. In: POPL 2015 (2015)

\bibitem{Chatterjee13}
Chatterjee, K., \L\k{a}cki, J.: Faster algorithms for {Markov} decision
  processes with low treewidth. In: CAV 2013. pp. 543--558 (2013)

\bibitem{Chonev19}
Chonev, V.: Reachability in augmented interval {Markov} chains. In: RP 2019.
  pp. 79--92 (2019)

\bibitem{Courcoubetis95}
Courcoubetis, C., Yannakakis, M.: The complexity of probabilistic verification.
  Journal of the ACM  \textbf{42}(4),  857--907 (1995)

\bibitem{cygan2015parameterized}
Cygan, M., Fomin, F.V., Kowalik, {\L}., Lokshtanov, D., Marx, D., Pilipczuk,
  M., Pilipczuk, M., Saurabh, S.: Parameterized algorithms. Springer (2015)

\bibitem{daws2004symbolic}
Daws, C.: Symbolic and parametric model checking of discrete-time markov
  chains. In: ICTAC. pp. 280--294 (2004)

\bibitem{Daws05}
Daws, C.: Symbolic and parametric model checking of discrete-time {Markov}
  chains. In: ICTAC 2004. pp. 280--294 (2005)

\bibitem{de2003discounting}
De~Alfaro, L., Henzinger, T.A., Majumdar, R.: Discounting the future in systems
  theory. In: ICALP. pp. 1022--1037 (2003)

\bibitem{storm}
Dehnert, C., Junges, S., Katoen, J.P., Volk, M.: A storm is coming: A modern
  probabilistic model checker. In: CAV. pp. 592--600 (2017)

\bibitem{Delahaye11}
Delahaye, B., Larsen, K.G., Legay, A., Pedersen, M.L., W{\k{a}}sowski, A.:
  Decision problems for interval {Markov} chains. In: LATA 2011. pp. 274--285
  (2011)

\bibitem{downey2012parameterized}
Downey, R.G., Fellows, M.R.: Parameterized complexity. Springer (2012)

\bibitem{Fearnley10}
Fearnley, J.: Exponential lower bounds for policy iteration. In: ICALP 2010.
  pp. 551--562 (2010)

\bibitem{fearnley2010strategy}
Fearnley, J.: Strategy iteration algorithms for games and {M}arkov Decision
  Processes. Ph.D. thesis, University of Warwick (2010)

\bibitem{feinberg2012handbook}
Feinberg, E.A., Shwartz, A.: Handbook of {M}arkov decision processes: methods
  and applications. Springer (2012)

\bibitem{Ferrara05}
Ferrara, A., Pan, G., Vardi, M.Y.: Treewidth in verification: Local vs. global.
  In: LPAR 2005. pp. 489--503 (2005)

\bibitem{Filar96}
Filar, J., Vrieze, K.: Competitive {Markov} Decision Processes. Springer (1996)

\bibitem{fomin2018fully}
Fomin, F.V., Lokshtanov, D., Saurabh, S., Pilipczuk, M., Wrochna, M.: Fully
  polynomial-time parameterized computations for graphs and matrices of low
  treewidth. ACM Transactions on Algorithms (TALG)  \textbf{14}(3), ~34 (2018)

\bibitem{mcbook}
Gagniuc, P.A.: Markov chains: from theory to implementation and
  experimentation. Wiley (2017)

\bibitem{rajamani}
Gordon, A.D., Henzinger, T.A., Nori, A.V., Rajamani, S.K.: Probabilistic
  programming. In: FOSE, pp. 167--181 (2014)

\bibitem{grinstead2012introduction}
Grinstead, C.M., Snell, J.L.: Introduction to probability. American
  Mathematical Society (2012)

\bibitem{gurobi}
Gurobi~Optimization, L.: Gurobi optimizer (2019), \url{http://www.gurobi.com}

\bibitem{gustedt2002treewidth}
Gustedt, J., M{\ae}hle, O.A., Telle, J.A.: The treewidth of java programs. In:
  ALENEX 2002. pp. 86--97 (2002)

\bibitem{haddad2018interval}
Haddad, S., Monmege, B.: Interval iteration algorithm for mdps and imdps.
  Theoretical Computer Science  \textbf{735},  111--131 (2018)

\bibitem{Hahn17}
Hahn, E.M., Hashemi, V., Hermanns, H., Lahijanian, M., Turrini, A.:
  Multi-objective robust strategy synthesis for interval {Markov} decision
  processes. In: QEST 2017 (2017)

\bibitem{Hahn10}
Hahn, E.M., Hermanns, H., Wachter, B., Zhang, L.: Param: A model checker for
  parametric {Markov} models. In: CAV 2010. pp. 660--664 (2010)

\bibitem{hahn2010param}
Hahn, E.M., Hermanns, H., Wachter, B., Zhang, L.: Param: A model checker for
  parametric markov models. In: CAV. pp. 660--664 (2010)

\bibitem{Hahn09}
Hahn, E.M., Hermanns, H., Zhang, L.: Probabilistic reachability for parametric
  {Markov} models. In: Model Checking Software. pp. 88--106 (2009)

\bibitem{hahn2011probabilistic}
Hahn, E.M., Hermanns, H., Zhang, L.: Probabilistic reachability for parametric
  markov models. International Journal on Software Tools for Technology
  Transfer  \textbf{13}(1),  3--19 (2011)

\bibitem{hansen2012worst}
Hansen, T.D.: Worst-case analysis of strategy iteration and the simplex method.
  Ph.D. thesis, Aarhus University (2012)

\bibitem{Hermanns00}
Hermanns, H., Katoen, J.P., Meyer-Kayser, J., Siegle, M.: A {Markov} chain
  model checker. In: TACAS. pp. 347--362 (2000)

\bibitem{Hollanders12}
{Hollanders}, R., {Delvenne}, J., {Jungers}, R.M.: The complexity of policy
  iteration is exponential for discounted {Markov} decision processes. In: CDC
  2012. pp. 5997--6002 (2012)

\bibitem{hopcroft2001introduction}
Hopcroft, J.E., Motwani, R., Ullman, J.D.: Introduction to automata theory,
  languages, and computation  (2001)

\bibitem{howard1960dynamic}
Howard, R.A.: Dynamic programming and {M}arkov processes.  (1960)

\bibitem{Larsen91}
{Jonsson}, B., {Larsen}, K.G.: Specification and refinement of probabilistic
  processes. In: LICS 1991. pp. 266--277 (1991)

\bibitem{kemeny2012denumerable}
Kemeny, J.G., Snell, J.L., Knapp, A.W.: Denumerable {Markov} chains: with a
  chapter of {Markov} random fields by David Griffeath. Springer (2012)

\bibitem{Krause19}
Klaus~Krause, P., Larisch, L., Salfelder, F.: The tree-width of {C}. Discrete
  Applied Mathematics  (2019)

\bibitem{krak2019hitting}
Krak, T., T'Joens, N., De~Bock, J.: Hitting times and probabilities for
  imprecise {Markov} chains. arXiv preprint arXiv:1905.08781  (2019)

\bibitem{Kretinsky17}
K{\v{r}}et{\'i}nsk{\'y}, J., Meggendorfer, T.: Efficient strategy iteration for
  mean payoff in {Markov} decision processes. In: ATVA 2017. pp. 380--399
  (2017)

\bibitem{Kwiatkowska11}
Kwiatkowska, M., Norman, G., Parker, D.: {PRISM} 4.0: Verification of
  probabilistic real-time systems. In: CAV 2011. pp. 585--591 (2011)

\bibitem{Lanotte07}
Lanotte, R., Maggiolo-Schettini, A., Troina, A.: Parametric probabilistic
  transition systems for system design and analysis. Formal Aspects of
  Computing  \textbf{19}(1),  93--109 (2007)

\bibitem{omega}
Le~Gall, F.: Powers of tensors and fast matrix multiplication. In: ISSAC 2014.
  pp. 296--303 (2014)

\bibitem{leiserson2001introduction}
Leiserson, C.E., Rivest, R.L., Cormen, T.H., Stein, C.: Introduction to
  algorithms. MIT Press (2001)

\bibitem{Littman95}
Littman, M.L., Dean, T.L., Kaelbling, L.P.: On the complexity of solving
  {Markov} decision problems. In: UAI 1995. pp. 394--402 (1995)

\bibitem{Mansour99}
Mansour, Y., Singh, S.: On the complexity of policy iteration. In: UAI 1999.
  pp. 401--408 (1999)

\bibitem{MATLAB10}
MATLAB: The MathWorks Inc. (2018)

\bibitem{norris1998markov}
Norris, J.R.: Markov chains. Cambridge University Press (1998)

\bibitem{Obsrzalek03}
Obdr{\v{z}}{\'a}lek, J.: Fast mu-calculus model checking when tree-width is
  bounded. In: CAV 2003. pp. 80--92 (2003)

\bibitem{papadimitriou1987complexity}
Papadimitriou, C.H., Tsitsiklis, J.N.: The complexity of {Markov} decision
  processes. Mathematics of operations research  \textbf{12}(3),  441--450
  (1987)

\bibitem{puterman2014markov}
Puterman, M.L.: Markov Decision Processes.: Discrete Stochastic Dynamic
  Programming. Wiley (2014)

\bibitem{quatmann2018sound}
Quatmann, T., Katoen, J.P.: Sound value iteration. In: CAV. pp. 643--661 (2018)

\bibitem{ifds}
Reps, T., Horwitz, S., Sagiv, M.: Precise interprocedural dataflow analysis via
  graph reachability. In: POPL. pp. 49--61 (1995)

\bibitem{robertson1984graph}
Robertson, N., Seymour, P.D.: Graph minors. iii. planar tree-width. Journal of
  Combinatorial Theory, Series B  \textbf{36}(1),  49--64 (1984)

\bibitem{rossi2006handbook}
Rossi, F., Van~Beek, P., Walsh, T.: Handbook of constraint programming.
  Elsevier (2006)

\bibitem{Rutten04}
Rutten, J., Kwiatkowska, M., Norman, G., Parker, D.: Mathematical Techniques
  for Analyzing Concurrent and Probabilistic Systems, CRM Monograph Series,
  vol.~23. American Mathematical Society (2004)

\bibitem{Koushik06}
Sen, K., Viswanathan, M., Agha, G.: Model-checking {Markov} chains in the
  presence of uncertainties. In: TACAS 2006. pp. 394--410 (2006)

\bibitem{shapley1953stochastic}
Shapley, L.S.: Stochastic games. Proceedings of the National Academy of
  Sciences  \textbf{39}(10),  1095--1100 (1953)

\bibitem{smith1998study}
Smith, J.E.: A study of branch prediction strategies. In: 25 years of the
  international symposia on Computer architecture (selected papers). pp.
  202--215 (1998)

\bibitem{Tappler19}
Tappler, M., Aichernig, B.K., Bacci, G., Eichlseder, M., Larsen, K.G.:
  L$^*$-based learning of {Markov} decision processes. In: Formal Methods --
  The Next 30 Years. pp. 651--669 (2019)

\bibitem{thorup1998all}
Thorup, M.: All structured programs have small tree width and good register
  allocation. Information and Computation  \textbf{142}(2),  159--181 (1998)

\end{thebibliography}

\newpage
\appendix
\SetAlFnt{\footnotesize}
\section{Pseudocodes} \label{app:pseudo}

\begin{algorithm}
	\DontPrintSemicolon
	\SetKwFunction{FMain}{ComputeHitProbs}
	\SetKwProg{Fn}{Function}{:}{}
	\Fn{\FMain{$C = (V, E, \delta), \target$}}{
		\eIf{$V = \{\target\}$}
		{
			$\hitprob(\target, \target) \leftarrow 1$\;
		}
		{
			Choose an arbitrary $u \in V \setminus \{ \target \}$\;
			
			\eIf{$\delta(u)(u)  = 1$}{
				$\hitprob(u, \target) \leftarrow 0$\;
				\FMain($(V \setminus \{u\}, E, \delta), \target$)\;
			}
			{
				
				$f \leftarrow \frac{1}{1 - \delta(u)(u)}$\;
				$\delta(u)(u) \leftarrow 0$\;
				$E \leftarrow E \setminus \{(u, u)\}$
				
				\ForEach{$u'' \in V: (u, u'') \in E$}
				{
					$\delta(u)(u'') \leftarrow \delta(u)(u'') \cdot f$\;
				}
				
				\ForEach{$u' \in V: (u', u) \in E$}
				{
					\ForEach{$u'' \in V: (u, u'') \in E$}
					{
						$\delta(u')(u'') \leftarrow \delta(u')(u'') + \delta(u')(u) \cdot \delta(u, u'')$\;
						$E \leftarrow E \cup \{ (u', u'') \}$
					}
				}
				\FMain($(V \setminus \{u\}, E, \delta), \target$)\;
				$\hitprob(u, \target) \leftarrow 0$\;
				\ForEach{$u'' \in V : (u, u'') \in E$}
				{
					$\hitprob(u, \target) \leftarrow \hitprob(u, \target) + \delta(u, u'') \cdot \hitprob(u'', \target)$\;
				}
			}
		}
	}
	\vspace{5mm}
	\caption{A Simple Algorithm for Computing Hitting Probabilities.}
	\label{algo:vertex-rem}
\end{algorithm}

\begin{algorithm}
	\vspace{3mm}
	\newcommand{\sol}{\textsl{solution}}
	\DontPrintSemicolon
	\SetKw{True}{true}
	\SetKw{nosol}{Unsatisfiable}
	\SetKw{infsol}{Underdetermined}
	\SetKw{Break}{break}
	\SetKwFunction{FMain}{SolveLinearSystem}
	\SetKwFunction{GS}{Gramm-Schmidt}
	\SetKwBlock{Repeat}{repeat}{}
	\SetKwProg{Fn}{Function}{:}{}
	\Fn{\FMain{$S, G=(V, E), (T, E_T)$}}{
		\eIf{$V = \{\emptyset\}$}
		{
			$\sol \leftarrow \emptyset$\;
			\Return $\sol$\;
		}
		{
			
			\Repeat
			{
				Choose an arbitrary leaf bag $l \in T$\;
				$\bar{l} \leftarrow$ parent of $l$\;
				\eIf{$V_l \subseteq V_{\bar{l}}$}
				{
					$T \leftarrow T \setminus \{l\}$\;
					$E_T \leftarrow E_T \setminus \{ (\bar{l}, l) \}$
				}
				{
					Choose an arbitrary $x \in V_l \setminus V_{\bar{l}}$\;
					$V_l \leftarrow V_l \setminus \{x\}$\;
					\Break
				}
			}
			
			\ForEach{$y_1, y_2 \in V_l: y_1 \neq y_2$}
			{
				$E \leftarrow E \cup \{(y_1, y_2)\}$
			}
			
			$\mathfrak{E} \leftarrow $ equations in $S$ that contain $x$ with non-zero coefficient\;
			$S \leftarrow S \setminus \mathfrak{E}$\;
			
			\If{$\GS(\mathfrak{E}) = \nosol$}{\Return $\nosol$\;}
			
			$\mathfrak{E} \leftarrow \GS(\mathfrak{E})$
			
			\eIf{$\mathfrak{E} = \emptyset$}
			{
				\eIf{$\FMain(S, G \setminus \{x\}, (T, E_T)) = \nosol$}
				{
					\Return $\nosol$\;
				}
				{
					\Return $\infsol$\;
				}
			}
			{
				Choose an arbitrary $\mathfrak{e} \in \mathfrak{E}$ and write it as $x = R_x$\;
				$\mathfrak{E} \leftarrow \mathfrak{E} \setminus \{ \mathfrak{e} \}$\;
				\ForEach{$\mathfrak{e'} \in \mathfrak{E}$}{
					$\mathfrak{e'} \leftarrow \mathfrak{e'}[R_x/x]$ //replace every occurrence of $x$ with $R_x$\;
				}
				$S \leftarrow S \cup \mathfrak{E}$\;
				\eIf{$\FMain(S, G \setminus \{x\}, (T, E_T)) \in \{\nosol, \infsol\}$}
				{
					\Return $\FMain(S, G \setminus \{x\}, (T, E_T))$\;
				}
				{
					$\sol \leftarrow \FMain(S, G \setminus \{x\}, (T, E_T))$\;
					$\sol \leftarrow \sol[x \mapsto [R_x]_\sol]$\;
					\Return $\sol$\;
				}
			}
			
		}
	}
	\vspace{5mm}
	\caption{Solving a system $S$ of linear equations, given its primal graph $G = (V, E)$ and exploiting a tree decomposition $(T, E_T)$ of $G$. Note that $G$ is undirected. Lines~16--17 ensure that $G$ always remains a supergraph of the primal graph of $S$ and that $(T, E_T)$ always remains a valid tree decomposition of $G$.}
	\label{algo:system}
\end{algorithm}

\newpage

\section{Details of Experimental Results} \label{app:exp}

\paragraph{Experimental Setting.} The results were obtained on Ubuntu 18.04 with an Intel Core i5-7200U processor (2.5 GHz, 4 MB cache) using 8 GB of RAM.

\paragraph{Details about Benchmarks.} Table~\ref{table:benchmarks} provides an overview of the \textsf{DaCapo} benchmarks used in our experimental results.

\begin{table}[H]
	\begin{center}
		\resizebox{\textwidth}{!}{
			\begin{tabular}{|r||c|c|c|c|}
	\hline
	\textbf{Benchmark} & $\vert f \vert$ & $\vert V \vert$ & $\vert E \vert$ & $~~t~~$\\ \hline \hline
asm-3.1 & 105 & 3044 & 3262 & 4 \\ \hline
asm-commons-3.1 & 168 & 2404 & 2473 & 9 \\ \hline
avalon-framework-4.2.0 & 153 & 1899 & 1849 & 4 \\ \hline
avrora-cvs-20091224 & 2539 & 43685 & 43521 & 9 \\ \hline
bootstrap & 29 & 936 & 967 & 5 \\ \hline
commons-codec & 146 & 2728 & 2973 & 5 \\ \hline
commons-daemon & 28 & 453 & 437 & 4 \\ \hline
commons-httpclient & 693 & 9765 & 9772 & 5 \\ \hline
commons-io-1.3.1 & 216 & 3216 & 3175 & 5 \\ \hline
commons-logging & 106 & 2231 & 2303 & 4 \\ \hline
commons-logging-1.0.4 & 53 & 689 & 677 & 3 \\ \hline
constantine & 34 & 776 & 758 & 4 \\ \hline
crimson-1.1.3 & 378 & 8572 & 9328 & 8 \\ \hline
dacapo-digest & 8 & 201 & 208 & 3 \\ \hline
dacapo-h2 & 57 & 1293 & 1311 & 9 \\ \hline
dacapo-luindex & 3 & 84 & 87 & 4 \\ \hline
dacapo-lusearch & 5 & 282 & 300 & 4 \\ \hline
dacapo-lusearch-fix & 5 & 282 & 300 & 4 \\ \hline
dacapo-tomcat & 18 & 250 & 244 & 3 \\ \hline
dacapo-xalan & 10 & 219 & 216 & 3
 \\ \hline
\end{tabular}

\begin{tabular}{|r||c|c|c|c|}
	\hline
	\textbf{Benchmark} & $\vert f \vert$ & $\vert V \vert$ & $\vert E \vert$ & $~~t~~$\\ \hline \hline
	daytrader & 12 & 339 & 332 & 3 \\ \hline
	derbyclient & 2097 & 37865 & 37997 & 9 \\ \hline
	eclipse & 1974 & 45657 & 47039 & 8 \\ \hline
	jaffl & 455 & 6099 & 6126 & 9 \\ \hline
	janino-2.5.15 & 942 & 16861 & 17021 & 8 \\ \hline
	jaxen-1.1.1 & 425 & 5490 & 5375 & 5 \\ \hline
	jline-0.9.95 & 209 & 2427 & 2387 & 5 \\ \hline
	jnr-posix & 165 & 2040 & 1902 & 4 \\ \hline
	junit-3.8.1 & 453 & 4356 & 4067 & 5 \\ \hline
	lucene-core-2.4 & 1216 & 24906 & 25795 & 6 \\ \hline
	lucene-demos-2.4 & 120 & 4063 & 4413 & 7 \\ \hline
	pmd-4.2.5 & 2131 & 37822 & 38672 & 7 \\ \hline
	serializer & 465 & 11038 & 11751 & 6 \\ \hline
	serializer-2.7.0 & 330 & 6174 & 6447 & 9 \\ \hline
	tomcat-juli & 45 & 738 & 740 & 5 \\ \hline
	xalan-2.6.0 & 2088 & 35765 & 36946 & 8 \\ \hline
	xerces\_2\_5\_0 & 2129 & 50279 & 53639 & 9 \\ \hline
	xml-apis & 5 & 19 & 14 & 1 \\ \hline
	xml-apis-1.3.04 & 5 & 19 & 14 & 1 \\ \hline
	xmlgraphics-1.3.1 & 1014 & 17677 & 17890 & 9 \\ \hline
\end{tabular}

		}
	\end{center}
	\caption{Details of our benchmarks. In each case, $\vert f \vert$ is the number of functions in the benchmark, $\vert V \vert$ is the total number of vertices and $\vert E \vert$ is the total number of edges. Moreover, $t$ is the width of the tree decomposition constructed by JTDec~\cite{jtdec}. Note that this is an upper-bound on the treewidth, given that JTDec is not an exact tool.}
	\label{table:benchmarks}
\end{table}

\paragraph{Raw Numbers and Details of Experimental Results.} Tables~\ref{table:exp-mc-hitting}--\ref{table:exp-mdp-ds} provide runtimes of each of the approaches mentioned in Section~\ref{sec:exp} over every benchmark.

\begin{table}[H]
	\begin{center}
	\begin{tabular}{r||c|c|c|c|c|cHHHH}
	\hline
	\multirow{2}{*}{Benchmark} & \multicolumn{6}{c}{Runtime in seconds} &  \\ \cline{2-11} 
	& Ours        & \textsf{Gauss}        & \textsf{Matlab} & \textsf{Gurobi} & \textsf{Prism} & \textsf{Storm}  & Ours & \textsf{Gauss} & \textsf{Matlab} & \textsf{Gurobi}\\ \hline \hline
	xml-apis   &  0.00001  &  0.00001  &  0.20000  &  0.04082  &  0.06000  &  0.01000  &  0.00002  &  0.00001  &  0.09000  &  0.04552\\ \hline 
	xml-apis-1.3.04   &  0.00001  &  0.00001  &  0.13000  &  0.03398  &  0.06200  &  0.02000  &  0.00002  &  0.00001  &  0.11000  &  0.03570\\ \hline 
	dacapo-luindex   &  0.00008  &  0.00020  &  0.06000  &  0.01897  &  0.11800  &  0.01200  &  0.00010  &  0.00020  &  0.09000  &  0.03646\\ \hline 
	dacapo-digest   &  0.00017  &  0.00063  &  0.11000  &  0.08734  &  0.30900  &  0.03000  &  0.00023  &  0.00062  &  0.12000  &  0.05931\\ \hline 
	dacapo-xalan   &  0.00017  &  0.00084  &  0.15000  &  0.11467  &  0.37300  &  0.04000  &  0.00022  &  0.00090  &  0.09000  &  0.05408\\ \hline 
	dacapo-tomcat   &  0.00017  &  0.00062  &  0.17000  &  0.11905  &  0.51400  &  0.05000  &  0.00024  &  0.00059  &  0.13000  &  0.16045\\ \hline 
	dacapo-lusearch   &  0.00030  &  0.00566  &  0.11000  &  0.04260  &  0.57600  &  0.03100  &  0.00039  &  0.00584  &  0.10000  &  0.02560\\ \hline 
	dacapo-lusearch-fix   &  0.00034  &  0.00572  &  0.10000  &  0.03674  &  0.42700  &  0.03200  &  0.00040  &  0.00572  &  0.12000  &  0.04174\\ \hline 
	daytrader   &  0.00020  &  0.00130  &  0.11000  &  0.10136  &  0.66800  &  0.04700  &  0.00028  &  0.00141  &  0.15000  &  0.06834\\ \hline 
	commons-daemon   &  0.00033  &  0.00139  &  0.17000  &  0.21845  &  1.48100  &  0.08400  &  0.00043  &  0.00139  &  0.18000  &  0.18583\\ \hline 
	commons-logging-1.0.4   &  0.00048  &  0.00218  &  0.37000  &  0.42445  &  1.49000  &  0.15000  &  0.00067  &  0.00214  &  0.83000  &  0.35145\\ \hline 
	tomcat-juli   &  0.00058  &  0.00371  &  0.28000  &  0.31819  &  1.61300  &  0.19600  &  0.00078  &  0.00370  &  0.36000  &  0.39547\\ \hline 
	constantine   &  0.00060  &  0.03897  &  0.29000  &  0.25880  &  1.68700  &  0.12900  &  0.00082  &  0.03912  &  0.32000  &  0.27186\\ \hline 
	bootstrap   &  0.00084  &  0.03084  &  0.30000  &  0.22259  &  2.49600  &  0.13000  &  0.00111  &  0.03096  &  0.31000  &  0.22324\\ \hline 
	dacapo-h2   &  0.00104  &  0.01446  &  2.55000  &  0.46082  &  2.22300  &  0.20500  &  0.00160  &  0.01178  &  5.81000  &  0.47357\\ \hline 
	avalon-framework-4.2.0   &  0.00139  &  0.00359  &  1.08000  &  1.16564  &  4.48500  &  0.46400  &  0.00190  &  0.00278  &  1.12000  &  1.24184\\ \hline 
	jnr-posix   &  0.00142  &  0.25385  &  1.23000  &  1.13165  &  7.02200  &  0.78700  &  0.00194  &  0.25623  &  1.42000  &  1.28773\\ \hline 
	commons-logging   &  0.00189  &  0.04127  &  0.84000  &  0.81669  &  4.25500  &  0.38800  &  0.00255  &  0.03683  &  0.42000  &  0.88823\\ \hline 
	jline-0.9.95-SNAPSHOT   &  0.00188  &  0.00793  &  1.41000  &  1.39634  &  8.37200  &  0.91000  &  0.00258  &  0.00788  &  1.65000  &  1.49695\\ \hline 
	asm-commons-3.1   &  0.00217  &  0.03162  &  1.16000  &  1.26295  &  6.18000  &  0.50400  &  0.00288  &  0.03166  &  1.42000  &  1.29334\\ \hline 
	commons-codec   &  0.00257  &  0.03337  &  1.09000  &  0.99834  &  7.50300  &  0.51300  &  0.00398  &  0.03401  &  1.24000  &  0.97073\\ \hline 
	commons-io-1.3.1   &  0.00250  &  0.01205  &  1.36000  &  1.59909  &  8.51900  &  0.68300  &  0.00358  &  0.01210  &  1.22000  &  1.49369\\ \hline 
	asm-3.1   &  0.00285  &  0.13666  &  0.89000  &  0.78284  &  7.46700  &  0.42700  &  0.00380  &  0.13651  &  0.92000  &  0.80580\\ \hline 
	junit-3.8.1   &  0.00322  &  0.00493  &  1.83000  &  3.10246  &  18.74200  &  1.80400  &  0.00423  &  0.00485  &  1.85000  &  3.80143\\ \hline 
	lucene-demos-2.4   &  0.00388  &  0.10829  &  1.09000  &  0.88066  &  7.77000  &  0.67400  &  0.00526  &  0.10898  &  1.07000  &  1.13659\\ \hline 
	jaxen-1.1.1   &  0.00419  &  0.60061  &  2.73000  &  3.44220  &  17.16700  &  1.88600  &  0.00563  &  0.60335  &  2.30000  &  3.22225\\ \hline 
	jaffl   &  0.00522  &  0.09861  &  2.58000  &  3.19677  &  16.49900  &  2.30600  &  0.00706  &  0.09858  &  2.41000  &  3.27322\\ \hline 
	serializer-2.7.0   &  0.00543  &  0.14564  &  1.82000  &  2.36092  &  13.21600  &  1.46200  &  0.00715  &  0.14182  &  2.03000  &  2.86329\\ \hline 
	crimson-1.1.3   &  0.00838  &  0.13650  &  2.48000  &  2.84511  &  17.47500  &  1.43200  &  0.01122  &  0.12941  &  2.25000  &  2.83951\\ \hline 
	commons-httpclient   &  0.00782  &  0.05948  &  4.01000  &  5.60121  &  25.39700  &  2.15900  &  0.01070  &  0.05777  &  3.71000  &  4.89463\\ \hline 
	serializer   &  0.01048  &  0.30260  &  2.63000  &  3.32304  &  32.41300  &  2.59700  &  0.01367  &  0.30280  &  2.60000  &  3.53548\\ \hline 
	janino-2.5.15   &  0.01485  &  0.27481  &  5.38000  &  7.09017  &  57.19600  &  7.69100  &  0.01970  &  0.28066  &  4.35000  &  7.31999\\ \hline 
	xmlgraphics-commons-1.3.1   &  0.01621  &  0.18939  &  5.87000  &  7.08701  &  40.47300  &  5.93600  &  0.02116  &  0.18956  &  6.91000  &  8.74569\\ \hline 
	lucene-core-2.4   &  0.02834  &  1.02118  &  5.16000  &  9.00895  &  71.27600  &  6.56200  &  0.03520  &  0.97595  &  5.90000  &  9.92471\\ \hline 
	xalan-2.6.0   &  0.03358  &  1.77685  &  0.87000  &  15.09464  &  95.42400  &  12.41900  &  0.04760  &  1.77619  &  0.70000  &  17.70531\\ \hline 
	derbyclient   &  0.03198  &  0.59163  &  8.02000  &  16.32040  &  81.51000  &  7.72300  &  0.04193  &  0.59825  &  10.19000  &  16.99502\\ \hline 
	pmd-4.2.5   &  0.03257  &  0.55702  &  9.54000  &  15.22455  &  82.97800  &  10.46600  &  0.04358  &  0.55667  &  10.01000  &  16.07630\\ \hline 
	avrora-cvs-20091224   &  0.03832  &  0.68720  &  3.02000  &  19.21516  &  139.14800  &  8.29700  &  0.05008  &  0.66573  &  2.40000  &  17.86095\\ \hline 
	eclipse   &  0.03693  &  0.50101  &  10.76000  &  16.24855  &  83.30500  &  12.38800  &  0.05109  &  0.49472  &  12.60000  &  14.92674\\ \hline 
	xerces\_2\_5\_0   &  0.05558  &  2.53732  &  12.11000  &  16.05042  &  159.37600  &  22.75000  &  0.07604  &  2.42570  &  13.31000  &  18.45862\\ \hline
\end{tabular}

	\end{center}
	\vspace{1em}
	\caption{Detailed Experimental Results for \textbf{Hitting Probabilities in MCs}. All runtimes are reported in seconds. Note that \textsf{Prism} and \textsf{Storm} round the times to the nearest millisecond, while \textsf{Matlab} rounds to the nearest centisecond.}
	\label{table:exp-mc-hitting}
\end{table} 

\begin{table}[H]
	\begin{center}
		\begin{tabular}{r||HHHHHHc|c|c|c}
	\hline
	\multirow{2}{*}{Benchmark} & \multicolumn{6}{H}{Hitting Probabilities} & \multicolumn{4}{c}{Runtime in seconds} \\ \cline{2-11} 
	& Ours        & \textsf{Gauss}        & \textsf{Matlab} & \textsf{Gurobi} & \textsf{Prism} & \textsf{Storm}  & Ours & \textsf{Gauss} & \textsf{Matlab} & \textsf{Gurobi}\\ \hline \hline
	xml-apis   &  0.00001  &  0.00001  &  0.20000  &  0.04082  &  0.06000  &  0.01000  &  0.00002  &  0.00001  &  0.09000  &  0.04552\\ \hline 
	xml-apis-1.3.04   &  0.00001  &  0.00001  &  0.13000  &  0.03398  &  0.06200  &  0.02000  &  0.00002  &  0.00001  &  0.11000  &  0.03570\\ \hline 
	dacapo-luindex   &  0.00008  &  0.00020  &  0.06000  &  0.01897  &  0.11800  &  0.01200  &  0.00010  &  0.00020  &  0.09000  &  0.03646\\ \hline 
	dacapo-digest   &  0.00017  &  0.00063  &  0.11000  &  0.08734  &  0.30900  &  0.03000  &  0.00023  &  0.00062  &  0.12000  &  0.05931\\ \hline 
	dacapo-xalan   &  0.00017  &  0.00084  &  0.15000  &  0.11467  &  0.37300  &  0.04000  &  0.00022  &  0.00090  &  0.09000  &  0.05408\\ \hline 
	dacapo-tomcat   &  0.00017  &  0.00062  &  0.17000  &  0.11905  &  0.51400  &  0.05000  &  0.00024  &  0.00059  &  0.13000  &  0.16045\\ \hline 
	dacapo-lusearch   &  0.00030  &  0.00566  &  0.11000  &  0.04260  &  0.57600  &  0.03100  &  0.00039  &  0.00584  &  0.10000  &  0.02560\\ \hline 
	dacapo-lusearch-fix   &  0.00034  &  0.00572  &  0.10000  &  0.03674  &  0.42700  &  0.03200  &  0.00040  &  0.00572  &  0.12000  &  0.04174\\ \hline 
	daytrader   &  0.00020  &  0.00130  &  0.11000  &  0.10136  &  0.66800  &  0.04700  &  0.00028  &  0.00141  &  0.15000  &  0.06834\\ \hline 
	commons-daemon   &  0.00033  &  0.00139  &  0.17000  &  0.21845  &  1.48100  &  0.08400  &  0.00043  &  0.00139  &  0.18000  &  0.18583\\ \hline 
	commons-logging-1.0.4   &  0.00048  &  0.00218  &  0.37000  &  0.42445  &  1.49000  &  0.15000  &  0.00067  &  0.00214  &  0.83000  &  0.35145\\ \hline 
	tomcat-juli   &  0.00058  &  0.00371  &  0.28000  &  0.31819  &  1.61300  &  0.19600  &  0.00078  &  0.00370  &  0.36000  &  0.39547\\ \hline 
	constantine   &  0.00060  &  0.03897  &  0.29000  &  0.25880  &  1.68700  &  0.12900  &  0.00082  &  0.03912  &  0.32000  &  0.27186\\ \hline 
	bootstrap   &  0.00084  &  0.03084  &  0.30000  &  0.22259  &  2.49600  &  0.13000  &  0.00111  &  0.03096  &  0.31000  &  0.22324\\ \hline 
	dacapo-h2   &  0.00104  &  0.01446  &  2.55000  &  0.46082  &  2.22300  &  0.20500  &  0.00160  &  0.01178  &  5.81000  &  0.47357\\ \hline 
	avalon-framework-4.2.0   &  0.00139  &  0.00359  &  1.08000  &  1.16564  &  4.48500  &  0.46400  &  0.00190  &  0.00278  &  1.12000  &  1.24184\\ \hline 
	jnr-posix   &  0.00142  &  0.25385  &  1.23000  &  1.13165  &  7.02200  &  0.78700  &  0.00194  &  0.25623  &  1.42000  &  1.28773\\ \hline 
	commons-logging   &  0.00189  &  0.04127  &  0.84000  &  0.81669  &  4.25500  &  0.38800  &  0.00255  &  0.03683  &  0.42000  &  0.88823\\ \hline 
	jline-0.9.95-SNAPSHOT   &  0.00188  &  0.00793  &  1.41000  &  1.39634  &  8.37200  &  0.91000  &  0.00258  &  0.00788  &  1.65000  &  1.49695\\ \hline 
	asm-commons-3.1   &  0.00217  &  0.03162  &  1.16000  &  1.26295  &  6.18000  &  0.50400  &  0.00288  &  0.03166  &  1.42000  &  1.29334\\ \hline 
	commons-codec   &  0.00257  &  0.03337  &  1.09000  &  0.99834  &  7.50300  &  0.51300  &  0.00398  &  0.03401  &  1.24000  &  0.97073\\ \hline 
	commons-io-1.3.1   &  0.00250  &  0.01205  &  1.36000  &  1.59909  &  8.51900  &  0.68300  &  0.00358  &  0.01210  &  1.22000  &  1.49369\\ \hline 
	asm-3.1   &  0.00285  &  0.13666  &  0.89000  &  0.78284  &  7.46700  &  0.42700  &  0.00380  &  0.13651  &  0.92000  &  0.80580\\ \hline 
	junit-3.8.1   &  0.00322  &  0.00493  &  1.83000  &  3.10246  &  18.74200  &  1.80400  &  0.00423  &  0.00485  &  1.85000  &  3.80143\\ \hline 
	lucene-demos-2.4   &  0.00388  &  0.10829  &  1.09000  &  0.88066  &  7.77000  &  0.67400  &  0.00526  &  0.10898  &  1.07000  &  1.13659\\ \hline 
	jaxen-1.1.1   &  0.00419  &  0.60061  &  2.73000  &  3.44220  &  17.16700  &  1.88600  &  0.00563  &  0.60335  &  2.30000  &  3.22225\\ \hline 
	jaffl   &  0.00522  &  0.09861  &  2.58000  &  3.19677  &  16.49900  &  2.30600  &  0.00706  &  0.09858  &  2.41000  &  3.27322\\ \hline 
	serializer-2.7.0   &  0.00543  &  0.14564  &  1.82000  &  2.36092  &  13.21600  &  1.46200  &  0.00715  &  0.14182  &  2.03000  &  2.86329\\ \hline 
	crimson-1.1.3   &  0.00838  &  0.13650  &  2.48000  &  2.84511  &  17.47500  &  1.43200  &  0.01122  &  0.12941  &  2.25000  &  2.83951\\ \hline 
	commons-httpclient   &  0.00782  &  0.05948  &  4.01000  &  5.60121  &  25.39700  &  2.15900  &  0.01070  &  0.05777  &  3.71000  &  4.89463\\ \hline 
	serializer   &  0.01048  &  0.30260  &  2.63000  &  3.32304  &  32.41300  &  2.59700  &  0.01367  &  0.30280  &  2.60000  &  3.53548\\ \hline 
	janino-2.5.15   &  0.01485  &  0.27481  &  5.38000  &  7.09017  &  57.19600  &  7.69100  &  0.01970  &  0.28066  &  4.35000  &  7.31999\\ \hline 
	xmlgraphics-commons-1.3.1   &  0.01621  &  0.18939  &  5.87000  &  7.08701  &  40.47300  &  5.93600  &  0.02116  &  0.18956  &  6.91000  &  8.74569\\ \hline 
	lucene-core-2.4   &  0.02834  &  1.02118  &  5.16000  &  9.00895  &  71.27600  &  6.56200  &  0.03520  &  0.97595  &  5.90000  &  9.92471\\ \hline 
	xalan-2.6.0   &  0.03358  &  1.77685  &  0.87000  &  15.09464  &  95.42400  &  12.41900  &  0.04760  &  1.77619  &  0.70000  &  17.70531\\ \hline 
	derbyclient   &  0.03198  &  0.59163  &  8.02000  &  16.32040  &  81.51000  &  7.72300  &  0.04193  &  0.59825  &  10.19000  &  16.99502\\ \hline 
	pmd-4.2.5   &  0.03257  &  0.55702  &  9.54000  &  15.22455  &  82.97800  &  10.46600  &  0.04358  &  0.55667  &  10.01000  &  16.07630\\ \hline 
	avrora-cvs-20091224   &  0.03832  &  0.68720  &  3.02000  &  19.21516  &  139.14800  &  8.29700  &  0.05008  &  0.66573  &  2.40000  &  17.86095\\ \hline 
	eclipse   &  0.03693  &  0.50101  &  10.76000  &  16.24855  &  83.30500  &  12.38800  &  0.05109  &  0.49472  &  12.60000  &  14.92674\\ \hline 
	xerces\_2\_5\_0   &  0.05558  &  2.53732  &  12.11000  &  16.05042  &  159.37600  &  22.75000  &  0.07604  &  2.42570  &  13.31000  &  18.45862\\ \hline
\end{tabular}

	\end{center}
	\vspace{1em}
	\caption{Detailed Experimental Results for \textbf{Expected Discounted Sums in MCs}. All runtimes are reported in seconds. Note that \textsf{Prism} and \textsf{Storm} round the times to the nearest millisecond, while \textsf{Matlab} rounds to the nearest centisecond.}
	\label{table:exp-mc-ds}
\end{table}

\begin{table}[H]
	\hspace{-1.5cm}
	\begin{center}
\begin{tabular}{r||c|c|c|c|c|c|c|cHHHHHH}
	\hline
	\multirow{2}{*}{Benchmark} & \multicolumn{8}{c}{Runtime in seconds} &  \\ \cline{2-15} 
	& Ours & \textsf{lpsolve} & \textsf{Gurobi} & \textsf{Matlab} & \textsf{SI} & \textsf{VI} & \textsf{Prism} & \textsf{Storm} & Ours & \textsf{lpsolve} & \textsf{Gurobi} & \textsf{Matlab} & \textsf{SI} & \textsf{VI}  \\ \hline \hline
	xml-apis   &  0.00004  &  0.04145  &  0.05033  &  3.06000  &  0.00151  &  0.00072  &  0.07700  &  0.01000  &  0.00015  &  0.03889  &  0.07418  &  2.76000  &  0.00318  &  0.00073\\ \hline 
	xml-apis-1.3.04   &  0.00005  &  0.03836  &  0.04782  &  4.33000  &  0.00157  &  0.00085  &  0.06900  &  0.01200  &  0.00012  &  0.04129  &  0.03201  &  3.11000  &  0.00257  &  0.00076\\ \hline 
	dacapo-luindex   &  0.00033  &  0.02243  &  0.01947  &  5.28000  &  0.00787  &  0.00407  &  0.15300  &  0.00900  &  0.00034  &  0.02059  &  0.02004  &  4.57000  &  0.03685  &  0.00431\\ \hline 
	dacapo-digest   &  0.00101  &  0.05517  &  0.09404  &  8.66000  &  0.01816  &  0.00800  &  0.28000  &  0.02600  &  0.00149  &  0.05713  &  0.07190  &  6.46000  &  0.03654  &  0.00978\\ \hline 
	dacapo-xalan   &  0.00069  &  0.07645  &  0.11382  &  7.83000  &  0.01703  &  0.01001  &  0.33700  &  0.02600  &  0.00507  &  0.07229  &  0.10148  &  5.94000  &  0.01549  &  0.01208\\ \hline 
	dacapo-tomcat   &  0.00180  &  0.12966  &  0.18451  &  9.20000  &  0.01916  &  0.00863  &  0.50000  &  0.04600  &  0.00151  &  0.13067  &  0.18988  &  7.84000  &  0.16863  &  0.01018\\ \hline 
	dacapo-lusearch   &  0.00160  &  0.03950  &  0.04988  &  10.55000  &  0.03473  &  0.01553  &  0.34700  &  0.02400  &  0.00239  &  0.03813  &  0.03714  &  8.60000  &  0.04062  &  0.01653\\ \hline 
	dacapo-lusearch-fix   &  0.00516  &  0.04036  &  0.05746  &  8.58000  &  0.02918  &  0.01441  &  0.33600  &  0.02700  &  0.00248  &  0.03770  &  0.04122  &  7.94000  &  0.04010  &  0.02004\\ \hline 
	daytrader   &  0.00073  &  0.08304  &  0.09799  &  8.95000  &  0.03308  &  0.01531  &  0.48600  &  0.03900  &  0.00134  &  0.09054  &  0.08769  &  8.17000  &  0.04721  &  0.01626\\ \hline 
	commons-daemon   &  0.00271  &  0.22508  &  0.23568  &  9.18000  &  0.03780  &  0.01816  &  1.26900  &  0.08000  &  0.00335  &  0.19132  &  0.29852  &  11.59000  &  0.08002  &  0.01998\\ \hline 
	commons-logging-1.0.4   &  0.00175  &  0.38711  &  0.54011  &  19.48000  &  0.12354  &  0.04529  &  1.96100  &  0.13300  &  0.00382  &  0.33759  &  0.53407  &  12.68000  &  0.15830  &  0.03233\\ \hline 
	tomcat-juli   &  0.00325  &  0.27254  &  0.44136  &  12.79000  &  0.06386  &  0.03181  &  1.37400  &  0.15300  &  0.00558  &  0.37249  &  0.48284  &  11.54000  &  0.15890  &  0.20590\\ \hline 
	constantine   &  0.00349  &  0.24723  &  0.33957  &  17.65000  &  0.09099  &  0.03940  &  1.30000  &  0.10000  &  0.00678  &  0.23486  &  0.28992  &  12.54000  &  0.35064  &  0.05305\\ \hline 
	bootstrap   &  0.00788  &  0.23418  &  0.28990  &  17.02000  &  0.12021  &  0.05957  &  1.41200  &  0.10100  &  0.00762  &  0.20695  &  0.26406  &  15.08000  &  0.15577  &  0.07828\\ \hline 
	dacapo-h2   &  0.00507  &  0.44370  &  0.52646  &  22.39000  &  0.13434  &  0.06553  &  1.91600  &  0.16200  &  0.00838  &  0.42674  &  0.54066  &  20.75000  &  0.16036  &  0.07301\\ \hline 
	avalon-framework-4.2.0   &  0.00505  &  1.08039  &  1.35913  &  27.85000  &  0.15185  &  0.07622  &  4.86800  &  0.45100  &  0.01810  &  1.20308  &  1.50038  &  22.67000  &  0.31613  &  0.08697\\ \hline 
	jnr-posix   &  0.01413  &  1.20555  &  1.52360  &  23.05000  &  0.22817  &  0.11717  &  5.74900  &  0.45500  &  0.01688  &  1.23360  &  1.81516  &  21.60000  &  0.37101  &  0.12226\\ \hline 
	commons-logging   &  0.00904  &  0.80701  &  1.00470  &  29.37000  &  0.25868  &  0.11347  &  3.92900  &  0.31700  &  0.02875  &  0.72560  &  0.98885  &  25.38000  &  1.38934  &  0.13886\\ \hline 
	jline-0.9.95-SNAPSHOT   &  0.01112  &  1.48252  &  2.02085  &  20.62000  &  0.20213  &  0.10417  &  6.44600  &  0.55000  &  0.02444  &  1.68731  &  2.18862  &  32.36000  &  0.46697  &  0.14053\\ \hline 
	asm-commons-3.1   &  0.01002  &  1.20580  &  1.55601  &  21.84000  &  0.22011  &  0.10860  &  7.09800  &  0.45600  &  0.01669  &  1.32881  &  2.02500  &  37.33000  &  0.64163  &  0.13915\\ \hline 
	commons-codec   &  0.01272  &  1.18706  &  1.37049  &  20.03000  &  0.31467  &  0.15661  &  6.59400  &  0.49500  &  0.02478  &  1.01284  &  1.32599  &  37.47000  &  0.62215  &  0.15808\\ \hline 
	commons-io-1.3.1   &  0.01134  &  1.70944  &  2.15621  &  48.41000  &  0.29763  &  0.13720  &  6.84900  &  0.57600  &  0.02544  &  1.49117  &  2.22190  &  25.90000  &  1.05737  &  0.15389\\ \hline 
	asm-3.1   &  0.01298  &  0.75230  &  0.98912  &  17.84000  &  0.38381  &  0.20731  &  7.38900  &  0.36200  &  0.02566  &  0.80930  &  1.20244  &  39.74000  &  1.53952  &  0.66412\\ \hline 
	junit-3.8.1   &  0.01116  &  3.28112  &  4.48130  &  48.02000  &  0.28184  &  0.17565  &  12.72300  &  1.24400  &  0.02967  &  3.43031  &  4.54870  &  54.49000  &  1.61903  &  0.19736\\ \hline 
	lucene-demos-2.4   &  0.01978  &  0.92039  &  1.17712  &  40.06000  &  0.67695  &  0.26995  &  6.27900  &  0.54500  &  0.04212  &  0.98169  &  1.38443  &  41.28000  &  1.68492  &  0.55803\\ \hline 
	jaxen-1.1.1   &  0.02025  &  3.17878  &  4.08971  &  41.38000  &  0.62182  &  0.27318  &  15.28600  &  1.17500  &  0.05828  &  3.11529  &  4.00512  &  55.82000  &  7.40472  &  0.34462\\ \hline 
	jaffl   &  0.03770  &  3.05149  &  4.26184  &  59.08000  &  0.65796  &  0.31335  &  16.27800  &  1.42100  &  0.04769  &  3.31774  &  4.50510  &  60.56000  &  1.83675  &  4.18679\\ \hline 
	serializer-2.7.0   &  0.02506  &  2.09810  &  3.66582  &  52.06000  &  0.60972  &  0.31064  &  11.33300  &  1.05700  &  0.04071  &  2.67593  &  3.72690  &  38.10000  &  1.52886  &  0.43002\\ \hline 
	crimson-1.1.3   &  0.05404  &  2.58277  &  3.86147  &  92.63000  &  1.76648  &  0.64496  &  15.51500  &  1.21000  &  0.10414  &  2.85201  &  3.47488  &  72.86000  &  3.67908  &  0.50956\\ \hline 
	commons-httpclient   &  0.03319  &  5.54198  &  6.69283  &  117.38000  &  1.05233  &  0.46138  &  22.27200  &  1.81400  &  0.06327  &  4.79292  &  6.92428  &  78.85000  &  2.95078  &  0.55254\\ \hline 
	serializer   &  0.07065  &  2.91949  &  4.32873  &  56.01000  &  1.28958  &  0.64523  &  25.73500  &  2.00500  &  0.10003  &  3.85245  &  5.27190  &  64.05000  &  2.57026  &  0.88358\\ \hline 
	janino-2.5.15   &  0.07314  &  6.68790  &  9.91680  &  145.83000  &  2.01809  &  1.13186  &  34.01700  &  3.00600  &  0.13146  &  6.94878  &  9.30685  &  133.24000  &  5.84305  &  1.23715\\ \hline 
	xmlgraphics-commons-1.3.1   &  0.07193  &  8.19395  &  11.86806  &  79.65000  &  1.90528  &  0.92933  &  39.59300  &  4.93900  &  0.15004  &  8.27969  &  11.30344  &  144.15000  &  15.20058  &  1.08951\\ \hline 
	lucene-core-2.4   &  0.16863  &  9.23440  &  12.98207  &  322.52000  &  2.92001  &  1.61256  &  52.34900  &  4.53700  &  0.29733  &  8.68221  &  12.39745  &  189.91000  &  7.40384  &  2.24896\\ \hline 
	xalan-2.6.0   &  0.15804  &  13.00583  &  23.80962  &  275.80000  &  4.77874  &  2.50689  &  77.43000  &  9.28700  &  0.30192  &  16.62097  &  21.80253  &  208.95000  &  15.14658  &  2.59791\\ \hline 
	derbyclient   &  0.12489  &  16.30398  &  20.86866  &  256.40000  &  3.77297  &  1.76492  &  87.46500  &  6.32500  &  0.25004  &  15.75678  &  19.19841  &  303.18000  &  14.63486  &  2.18177\\ \hline 
	pmd-4.2.5   &  0.14238  &  13.77016  &  21.09903  &  512.47000  &  5.27168  &  2.25309  &  84.40400  &  7.07000  &  0.25993  &  17.26150  &  21.83183  &  240.07000  &  17.99462  &  2.84897\\ \hline 
	avrora-cvs-20091224   &  0.13498  &  18.70873  &  24.66722  &  290.39000  &  4.39129  &  2.03788  &  101.28400  &  7.92000  &  0.25995  &  19.79896  &  25.89820  &  389.31000  &  19.05126  &  2.56352\\ \hline 
	eclipse   &  0.25944  &  14.32922  &  20.26574  &  406.15000  &  6.40507  &  2.76358  &  104.00300  &  6.65400  &  0.55813  &  14.74519  &  17.84516  &  598.65000  &  15.26801  &  3.71952\\ \hline 
	xerces\_2\_5\_0   &  0.34422  &  14.79255  &  22.62586  &  257.56000  &  9.89841  &  4.58320  &  126.73300  &  12.73800  &  0.46794  &  17.21913  &  24.70113  &  320.46000  &  38.72352  &  4.46858 \\ \hline
\end{tabular}
\end{center}
	\vspace{1em}
	\caption{Detailed Experimental Results for \textbf{Hitting Probabilities in MDPs}. All runtimes are reported in seconds. Note that \textsf{Prism} and \textsf{Storm} round the times to the nearest millisecond, while \textsf{Matlab} rounds to the nearest centisecond.}
	\label{table:exp-mdp-hitting}
\end{table} 

\begin{table}[H]
	\begin{center}
\begin{tabular}{r||HHHHHHHHc|c|c|c|c|c}
	\hline
	\multirow{2}{*}{Benchmark} & \multicolumn{8}{H}{} & \multicolumn{6}{c}{Runtime in seconds} \\ \cline{2-15} 
	& Ours & \textsf{lpsolve} & \textsf{Gurobi} & \textsf{Matlab} & \textsf{SI} & \textsf{VI} & \textsf{Prism} & \textsf{Storm} & Ours & \textsf{lpsolve} & \textsf{Gurobi} & \textsf{Matlab} & \textsf{SI} & \textsf{VI}  \\ \hline \hline
	xml-apis   &  0.00004  &  0.04145  &  0.05033  &  3.06000  &  0.00151  &  0.00072  &  0.07700  &  0.01000  &  0.00015  &  0.03889  &  0.07418  &  2.76000  &  0.00318  &  0.00073\\ \hline 
	xml-apis-1.3.04   &  0.00005  &  0.03836  &  0.04782  &  4.33000  &  0.00157  &  0.00085  &  0.06900  &  0.01200  &  0.00012  &  0.04129  &  0.03201  &  3.11000  &  0.00257  &  0.00076\\ \hline 
	dacapo-luindex   &  0.00033  &  0.02243  &  0.01947  &  5.28000  &  0.00787  &  0.00407  &  0.15300  &  0.00900  &  0.00034  &  0.02059  &  0.02004  &  4.57000  &  0.03685  &  0.00431\\ \hline 
	dacapo-digest   &  0.00101  &  0.05517  &  0.09404  &  8.66000  &  0.01816  &  0.00800  &  0.28000  &  0.02600  &  0.00149  &  0.05713  &  0.07190  &  6.46000  &  0.03654  &  0.00978\\ \hline 
	dacapo-xalan   &  0.00069  &  0.07645  &  0.11382  &  7.83000  &  0.01703  &  0.01001  &  0.33700  &  0.02600  &  0.00507  &  0.07229  &  0.10148  &  5.94000  &  0.01549  &  0.01208\\ \hline 
	dacapo-tomcat   &  0.00180  &  0.12966  &  0.18451  &  9.20000  &  0.01916  &  0.00863  &  0.50000  &  0.04600  &  0.00151  &  0.13067  &  0.18988  &  7.84000  &  0.16863  &  0.01018\\ \hline 
	dacapo-lusearch   &  0.00160  &  0.03950  &  0.04988  &  10.55000  &  0.03473  &  0.01553  &  0.34700  &  0.02400  &  0.00239  &  0.03813  &  0.03714  &  8.60000  &  0.04062  &  0.01653\\ \hline 
	dacapo-lusearch-fix   &  0.00516  &  0.04036  &  0.05746  &  8.58000  &  0.02918  &  0.01441  &  0.33600  &  0.02700  &  0.00248  &  0.03770  &  0.04122  &  7.94000  &  0.04010  &  0.02004\\ \hline 
	daytrader   &  0.00073  &  0.08304  &  0.09799  &  8.95000  &  0.03308  &  0.01531  &  0.48600  &  0.03900  &  0.00134  &  0.09054  &  0.08769  &  8.17000  &  0.04721  &  0.01626\\ \hline 
	commons-daemon   &  0.00271  &  0.22508  &  0.23568  &  9.18000  &  0.03780  &  0.01816  &  1.26900  &  0.08000  &  0.00335  &  0.19132  &  0.29852  &  11.59000  &  0.08002  &  0.01998\\ \hline 
	commons-logging-1.0.4   &  0.00175  &  0.38711  &  0.54011  &  19.48000  &  0.12354  &  0.04529  &  1.96100  &  0.13300  &  0.00382  &  0.33759  &  0.53407  &  12.68000  &  0.15830  &  0.03233\\ \hline 
	tomcat-juli   &  0.00325  &  0.27254  &  0.44136  &  12.79000  &  0.06386  &  0.03181  &  1.37400  &  0.15300  &  0.00558  &  0.37249  &  0.48284  &  11.54000  &  0.15890  &  0.20590\\ \hline 
	constantine   &  0.00349  &  0.24723  &  0.33957  &  17.65000  &  0.09099  &  0.03940  &  1.30000  &  0.10000  &  0.00678  &  0.23486  &  0.28992  &  12.54000  &  0.35064  &  0.05305\\ \hline 
	bootstrap   &  0.00788  &  0.23418  &  0.28990  &  17.02000  &  0.12021  &  0.05957  &  1.41200  &  0.10100  &  0.00762  &  0.20695  &  0.26406  &  15.08000  &  0.15577  &  0.07828\\ \hline 
	dacapo-h2   &  0.00507  &  0.44370  &  0.52646  &  22.39000  &  0.13434  &  0.06553  &  1.91600  &  0.16200  &  0.00838  &  0.42674  &  0.54066  &  20.75000  &  0.16036  &  0.07301\\ \hline 
	avalon-framework-4.2.0   &  0.00505  &  1.08039  &  1.35913  &  27.85000  &  0.15185  &  0.07622  &  4.86800  &  0.45100  &  0.01810  &  1.20308  &  1.50038  &  22.67000  &  0.31613  &  0.08697\\ \hline 
	jnr-posix   &  0.01413  &  1.20555  &  1.52360  &  23.05000  &  0.22817  &  0.11717  &  5.74900  &  0.45500  &  0.01688  &  1.23360  &  1.81516  &  21.60000  &  0.37101  &  0.12226\\ \hline 
	commons-logging   &  0.00904  &  0.80701  &  1.00470  &  29.37000  &  0.25868  &  0.11347  &  3.92900  &  0.31700  &  0.02875  &  0.72560  &  0.98885  &  25.38000  &  1.38934  &  0.13886\\ \hline 
	jline-0.9.95-SNAPSHOT   &  0.01112  &  1.48252  &  2.02085  &  20.62000  &  0.20213  &  0.10417  &  6.44600  &  0.55000  &  0.02444  &  1.68731  &  2.18862  &  32.36000  &  0.46697  &  0.14053\\ \hline 
	asm-commons-3.1   &  0.01002  &  1.20580  &  1.55601  &  21.84000  &  0.22011  &  0.10860  &  7.09800  &  0.45600  &  0.01669  &  1.32881  &  2.02500  &  37.33000  &  0.64163  &  0.13915\\ \hline 
	commons-codec   &  0.01272  &  1.18706  &  1.37049  &  20.03000  &  0.31467  &  0.15661  &  6.59400  &  0.49500  &  0.02478  &  1.01284  &  1.32599  &  37.47000  &  0.62215  &  0.15808\\ \hline 
	commons-io-1.3.1   &  0.01134  &  1.70944  &  2.15621  &  48.41000  &  0.29763  &  0.13720  &  6.84900  &  0.57600  &  0.02544  &  1.49117  &  2.22190  &  25.90000  &  1.05737  &  0.15389\\ \hline 
	asm-3.1   &  0.01298  &  0.75230  &  0.98912  &  17.84000  &  0.38381  &  0.20731  &  7.38900  &  0.36200  &  0.02566  &  0.80930  &  1.20244  &  39.74000  &  1.53952  &  0.66412\\ \hline 
	junit-3.8.1   &  0.01116  &  3.28112  &  4.48130  &  48.02000  &  0.28184  &  0.17565  &  12.72300  &  1.24400  &  0.02967  &  3.43031  &  4.54870  &  54.49000  &  1.61903  &  0.19736\\ \hline 
	lucene-demos-2.4   &  0.01978  &  0.92039  &  1.17712  &  40.06000  &  0.67695  &  0.26995  &  6.27900  &  0.54500  &  0.04212  &  0.98169  &  1.38443  &  41.28000  &  1.68492  &  0.55803\\ \hline 
	jaxen-1.1.1   &  0.02025  &  3.17878  &  4.08971  &  41.38000  &  0.62182  &  0.27318  &  15.28600  &  1.17500  &  0.05828  &  3.11529  &  4.00512  &  55.82000  &  7.40472  &  0.34462\\ \hline 
	jaffl   &  0.03770  &  3.05149  &  4.26184  &  59.08000  &  0.65796  &  0.31335  &  16.27800  &  1.42100  &  0.04769  &  3.31774  &  4.50510  &  60.56000  &  1.83675  &  4.18679\\ \hline 
	serializer-2.7.0   &  0.02506  &  2.09810  &  3.66582  &  52.06000  &  0.60972  &  0.31064  &  11.33300  &  1.05700  &  0.04071  &  2.67593  &  3.72690  &  38.10000  &  1.52886  &  0.43002\\ \hline 
	crimson-1.1.3   &  0.05404  &  2.58277  &  3.86147  &  92.63000  &  1.76648  &  0.64496  &  15.51500  &  1.21000  &  0.10414  &  2.85201  &  3.47488  &  72.86000  &  3.67908  &  0.50956\\ \hline 
	commons-httpclient   &  0.03319  &  5.54198  &  6.69283  &  117.38000  &  1.05233  &  0.46138  &  22.27200  &  1.81400  &  0.06327  &  4.79292  &  6.92428  &  78.85000  &  2.95078  &  0.55254\\ \hline 
	serializer   &  0.07065  &  2.91949  &  4.32873  &  56.01000  &  1.28958  &  0.64523  &  25.73500  &  2.00500  &  0.10003  &  3.85245  &  5.27190  &  64.05000  &  2.57026  &  0.88358\\ \hline 
	janino-2.5.15   &  0.07314  &  6.68790  &  9.91680  &  145.83000  &  2.01809  &  1.13186  &  34.01700  &  3.00600  &  0.13146  &  6.94878  &  9.30685  &  133.24000  &  5.84305  &  1.23715\\ \hline 
	xmlgraphics-commons-1.3.1   &  0.07193  &  8.19395  &  11.86806  &  79.65000  &  1.90528  &  0.92933  &  39.59300  &  4.93900  &  0.15004  &  8.27969  &  11.30344  &  144.15000  &  15.20058  &  1.08951\\ \hline 
	lucene-core-2.4   &  0.16863  &  9.23440  &  12.98207  &  322.52000  &  2.92001  &  1.61256  &  52.34900  &  4.53700  &  0.29733  &  8.68221  &  12.39745  &  189.91000  &  7.40384  &  2.24896\\ \hline 
	xalan-2.6.0   &  0.15804  &  13.00583  &  23.80962  &  275.80000  &  4.77874  &  2.50689  &  77.43000  &  9.28700  &  0.30192  &  16.62097  &  21.80253  &  208.95000  &  15.14658  &  2.59791\\ \hline 
	derbyclient   &  0.12489  &  16.30398  &  20.86866  &  256.40000  &  3.77297  &  1.76492  &  87.46500  &  6.32500  &  0.25004  &  15.75678  &  19.19841  &  303.18000  &  14.63486  &  2.18177\\ \hline 
	pmd-4.2.5   &  0.14238  &  13.77016  &  21.09903  &  512.47000  &  5.27168  &  2.25309  &  84.40400  &  7.07000  &  0.25993  &  17.26150  &  21.83183  &  240.07000  &  17.99462  &  2.84897\\ \hline 
	avrora-cvs-20091224   &  0.13498  &  18.70873  &  24.66722  &  290.39000  &  4.39129  &  2.03788  &  101.28400  &  7.92000  &  0.25995  &  19.79896  &  25.89820  &  389.31000  &  19.05126  &  2.56352\\ \hline 
	eclipse   &  0.25944  &  14.32922  &  20.26574  &  406.15000  &  6.40507  &  2.76358  &  104.00300  &  6.65400  &  0.55813  &  14.74519  &  17.84516  &  598.65000  &  15.26801  &  3.71952\\ \hline 
	xerces\_2\_5\_0   &  0.34422  &  14.79255  &  22.62586  &  257.56000  &  9.89841  &  4.58320  &  126.73300  &  12.73800  &  0.46794  &  17.21913  &  24.70113  &  320.46000  &  38.72352  &  4.46858 \\ \hline
\end{tabular}
\end{center}
	\vspace{1em}
	\caption{Detailed Experimental Results for \textbf{Expected Discounted Sums in MDPs}. All runtimes are reported in seconds. Note that \textsf{Prism} and \textsf{Storm} round the times to the nearest millisecond, while \textsf{Matlab} rounds to the nearest centisecond.}
	\label{table:exp-mdp-ds}
\end{table} 

\paragraph{Remark.} As mentioned before, our inputs contain tree decompositions of the MCs/MDPs. Note that the time used to compute the tree decompositions is negligible, given that constant-width tree decompositions of CFGs are computed by a single pass of the program parse tree~\cite{thorup1998all,jtdec}.

\end{document}